\theoremstyle{definition}
\theoremstyle{remark}
\newtheorem{thm}{Theorem}[section]
\newtheorem{lem}[thm]{Lemma}
\newtheorem{cor}[thm]{Corollary}
\newtheorem{prop}[thm]{Proposition}
\newtheorem{defn}[thm]{Definition}
\newtheorem{rem}[thm]{Remark}
\newtheorem{ex}[thm]{Example}
\newcommand{\ol}{\overline}
\newcommand{\ts}{\textsc}
\newcommand{\B}{\mathcal B}
\newcommand{\C}{\mathcal C}
\newcommand{\E}{\mathcal E}
\renewcommand{\H}{\mathcal H}
\newcommand{\K}{\mathcal K}
\newcommand{\KK}{\mathbb K}
\newcommand{\NN}{\mathbb N}
\newcommand{\RR}{\mathbb R}
\newcommand{\ZZ}{\mathbb Z}
\newcommand{\mc}{\mathcal}
\renewcommand{\:}{\colon}
\newcommand{\subsetof}{\subseteq}
\newcommand{\<}{\langle}
\renewcommand{\>}{\rangle}
\newcommand{\covers}{<\!\!\!\cdot\,\,}
\newcommand{\cov}{<\!\!\!\cdot\,\,}
\renewcommand{\:}{\colon}
\renewcommand{\>}{\rangle}
\newcommand{\type}{\mathrm{type}}
\begin{document}

\title[]%
{Completely hereditarily atomic \ts{oml}s}
\author[John Harding \and Andre Kornell]%
{John Harding \and Andre Kornell}

\newcommand{\acr}{\newline\indent}

\thanks{The first listed author gratefully acknowledges support of US Army grant W911NF-21-1-0247 and NSF grant DMS-2231414. The second author gratefully acknowledges support of Air Force Office of Scientific Research grant FA9550-21-1-0041.}

\subjclass[2020]{Primary 06C15; Secondary 81P10, 46C99} 
\keywords{Algebraic lattice, orthomodular lattice, Kalmbach's construction, Hermitian space}

\begin{abstract}
An irreducible complete atomic \ts{oml} of infinite height cannot be algebraic and have the covering property. However, modest departure from these conditions allows infinite-height examples. We use an extension of Kalmbach's construction to the setting of infinite chains to provide an example of an infinite-height, irreducible, algebraic \ts{oml} with the $2$-covering property, and Keller's construction provides an example of  an infinite-height, irreducible, complete \ts{oml} that has the covering property and is completely hereditarily atomic. Completely hereditarily atomic \ts{omls} generalize algebraic \ts{oml}s suitably to quantum predicate logic. 
\end{abstract}

\maketitle


\section{Introduction}

A cornerstone of lattice theory is the coordinatization of subspace lattices of Birkhoff and Menger~\cite{BLT}. 
A {\em geomodular lattice} \cite{CrawleyDilworth}*{p.~107} is an 
irreducible, algebraic, complemented modular lattice. These conditions imply that a geomodular lattice is complete and atomic. For a vector space $\mc{V}$ over a division ring, the lattice $\mc{S}(\mc{V})$ of subspaces of $\mc{V}$ is a geomodular lattice, and conversely, each geomodular lattice of height at least four is isomorphic to the subspace lattice of some vector space over a division ring. 

The collection $\C(\H)$ of closed subspaces of a Hilbert space $\mc{H}$ is the prototypical example of an \emph{orthomodular lattice} (\ts{oml}). Explicitly, it is a lattice with respect to the inclusion order, and it is equipped with an orthocomplementation $x \mapsto x^\perp$ that satisfies the \emph{orthomodular law}: $x\leq y\Rightarrow x\vee(x^\perp\wedge y) = y$ \cite{Husimi}. The orthocomplementation maps a closed subspace $x$ to its orthogonal complement $x^\perp$. 

An {\em orthomodular space} $\E$ \cite{GrossKunzi} is a vector space over an involutory division ring that is equipped with an anisotropic Hermitian form whose biorthogonally closed subspaces $\C(\E)$ form an \ts{oml}. There is a well-known characterization theorem for \ts{oml}s of closed subspaces of orthomodular spaces analogous to the characterization theorem for subspace lattices. A {\em propositional system} \cite{GrossKunzi} is a complete, irreducible, atomic \ts{oml} with the {\em covering property}: for any atom $a$ each interval $[x,x\vee a]$ has height at most one. For an orthomodular space $\E$, the closed subspace lattice $\C(\E)$ is a propositional system, and each propositional system of height at least 4 is isomorphic to the closed subspace lattice of an orthomodular space. For a brief history of this result, see \cite{GrossKunzi}*{p.~191}. 

In the passage from geomodular lattices to propositional systems we lose both modularity and algebraicity. In broad terms, the purpose of this note is to consider these properties and their weaker forms in the setting of complete atomic \ts{oml}s. The weaker form of modularity that we consider is the covering property. Our weaker form of algebraicity is the \ts{oml} being \emph{completely hereditarily atomic}: each of its maximal Boolean subalgebras, i.e., its \emph{blocks}, is algebraic. For complete atomic \ts{oml}s this condition is equivalent to requiring that all complete subalgebras are atomic, explaining the name. Every algebraic \ts{oml} is completely hereditarily atomic, but not conversely. 

In the second section of this note, we provide the basics of algebraic \ts{oml}s and of completely hereditarily atomic \ts{oml}s. In particular, it is shown that every algebraic \ts{oml} is atomic and is the direct product of directly irreducible algebraic \ts{oml}s. We further show that no directly irreducible algebraic \ts{oml} of infinite height can have the covering property. It follows that the algebraic \ts{oml}s with the covering property are exactly the products of finite height, orthocomplemented  geomodular lattices. The remaining two sections of the paper are devoted to somewhat complicated examples that show that these results are sharp in that the covering property cannot be weakened to the 2-covering property and algebraicity cannot be weakened to being completely hereditarily atomic.
In the third section we use Kalmbach's construction \cite{KalmbachOriginalConstruction} to produce an irreducible algebraic \ts{oml} of infinite height with the {\em 2-covering property}: for an atom $a$, each interval $[x,x \vee a]$ has height at most two. In the fourth section of this note we use Keller's construction \cite{Keller} to provide a quadratic space $\E$ so that $\C(\E)$ is an infinite height, completely hereditarily atomic, irreducible \ts{oml} with the covering property.

While our results are purely order-theoretic, their motivation lies in quantum predicate logic. One approach to quantum predicate logic uses the quantum sets of noncommutative geometry \cite{Kornellquantumstructures}. A quantum set in this sense is a von Neumann algebra that is {\em hereditarily atomic}: every maximal commutative subalgebra is atomic \cite{Kornellquantumsets}. A von Neumann algebra is hereditarily atomic iff its \ts{oml} of projection operators is completely hereditarily atomic or, equivalently, algebraic. This \ts{oml} of projection operators is then a product of \ts{oml}s of the form $\C(\H)$, where $\H$ is finite-dimensional.

The drawback of this approach to quantum predicate logic is that some infinite-dimensional quantum systems are necessarily modelled by irreducible complete \ts{oml}s of infinite height. We are thus motivated to find a variant of this framework that includes such \ts{oml}s, and we expect them to form a category that is analogous to the category of complete atomic Boolean algebras and tense operators \cite{JonssonTarski} or, equivalently, to the category of sets and relations. Hence, we expect such \ts{oml}s to be completely hereditarily atomic \cite{KornellRel}*{Lemma~4.2}. The \ts{oml} that we obtain using Keller's construction in the fourth section has all of these properties.

\section{Basic observations}

In this section, we give the basic definitions and their consequences. The bounds of a bounded lattice will often be written as $0$ and $1$ without further comment. A \emph{bounded sublattice} is a sublattice that contains these bounds. We now turn to orthomodular lattices, which are structures that generalize Boolean algebras (\ts{ba}s) and the lattice of closed subspaces of a Hilbert space. See \cites{BrunsHarding,KalmbachBook} for further details.

\begin{defn}
An \emph{ortholattice} (\ts{ol}) is a bounded lattice $L$ with an additional order-inverting period-two unary operation $\perp:L\to L$ where $x^\perp$ is a complement of $x$ for each $x\in L$. An \emph{orthomodular lattice} (\ts{oml}) is an \ts{ol} that satisfies the following condition known as {\em the orthomodular law}: $x\leq y\Rightarrow x\vee(x^\perp\wedge y) = y.$
\end{defn}

A maximal Boolean subalgebra of an \ts{ol} $L$ is called a block of $L$ \cite{KalmbachBook}*{p.~19 ff}. For $x\in L$, we have that $\{0,x,x^\perp,1\}$ is a Boolean subalgebra of $L$, and it follows from an application of Zorn's lemma that each $x\in L$ belongs to a block of $L$. So each \ts{ol} is the set-theoretic union of its blocks. \ts{oml}s are characterized \cite{BrunsHarding}*{Prop.~3.2} as those \ts{ol}s where $x\leq y$ implies that $x$ and $y$ belong to a common block of $L$. In other words, \ts{oml}s are those \ts{ol}s whose order structure is determined by the order structures of its blocks. 

\begin{defn}
If $L$ is a bounded lattice, then an element $a\in L$ is an \emph{atom} if $a\neq 0$ and there exists no element $x$ with $0<x<a$.  
\end{defn}

We say that $x$ is {\em below}, {\em beneath}, or {\em under} $y$ if $x \leq y$. A bounded lattice is called \emph{atomic} if each non-zero element has an atom beneath it and is called \emph{atomistic} if each element is the join of the atoms beneath it. An \ts{oml} is atomic iff it is atomistic \cite{KalmbachBook}*{p.~140}. If $L$ is an \ts{oml} and $a$ is an atom of a block of $L$, then $a$ is an atom of $L$ \cite{KalmbachBook}*{p.~39}. 

\begin{defn}
Let $L$ be a complete \ts{oml}. A subalgebra $S$ of $L$ is a \emph{complete subalgebra} if $S$ is closed under all joins and meets as taken in $L$.
\end{defn}

If $S$ is a complete subalgebra of a complete \ts{oml} $L$, then $S$ is itself a complete \ts{oml}. However, if $S$ is a subalgebra of a complete \ts{oml} $L$ such that $S$ is itself complete, then $S$ need not be a complete subalgebra of $L$ since the joins and meets in $S$ might not agree with those in $L$. The following known fact \cite{KalmbachBook}*{p.~24} will be used repeatedly. 

\begin{prop}
If $B$ is a block of an \ts{oml} $L$, then $B$ is closed under all existing joins and meets in $L$. So if $L$ is complete, each of its blocks is a complete subalgebra.    
\end{prop}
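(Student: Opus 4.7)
The plan is to establish closure under joins; closure under meets will then follow by De Morgan's laws, since each block is itself closed under the orthocomplementation. Fix a block $B$, a subset $X \subseteq B$, and suppose $s = \bigvee X$ exists in $L$. My aim is to show $s \in B$.

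I would invoke two standard facts about \ts{oml}s from the references cited just above the statement: (i) two elements $a, b$ commute, in the sense that $a = (a \meet b) \join (a \meet b^\perp)$, iff they lie in a common Boolean subalgebra of $L$; and (ii) any set of pairwise commuting elements of an \ts{oml} is contained in some block. Given the maximality of $B$, it therefore suffices to show that $s$ commutes with every $b \in B$; for then $B \cup \{s\}$ is a set of pairwise commuting elements, hence lies in a block, which by maximality must be $B$ itself.

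To verify this, fix $b \in B$. For each $x \in X$, the elements $x$ and $b$ both lie in the Boolean algebra $B$, so $x = (x \meet b) \join (x \meet b^\perp)$. Since $x \leq s$, monotonicity of $\meet$ gives $x \meet b \leq s \meet b$ and $x \meet b^\perp \leq s \meet b^\perp$, whence $x \leq (s \meet b) \join (s \meet b^\perp)$ for every $x \in X$. Taking the join over $X$ yields $s \leq (s \meet b) \join (s \meet b^\perp)$, and the reverse inequality is immediate, so $s = (s \meet b) \join (s \meet b^\perp)$, i.e., $s$ commutes with $b$. This completes the proof of closure under joins; dualizing via orthocomplementation then handles meets, and when $L$ is complete all such joins exist, so each block is a complete subalgebra.

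I do not expect a serious obstacle here: the whole argument reduces to the commutation characterization of Boolean subalgebras in an \ts{oml} together with the elementary observation that commutation with a fixed element is preserved under any join that happens to exist. The only point requiring a moment's care is the bookkeeping of the inequality $x \leq (s \meet b) \join (s \meet b^\perp)$, which is what allows the argument to pass from pairwise commutation inside $B$ to commutation of $s$ with each $b \in B$.
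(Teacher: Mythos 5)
Your proof is correct, and it is essentially the standard argument: the paper itself gives no proof of this proposition, citing it as a known fact from Kalmbach's book (p.~24), and your argument --- showing that an existing join $s=\bigvee X$ commutes with each $b\in B$ via $x=(x\meet b)\join(x\meet b^\perp)\leq(s\meet b)\join(s\meet b^\perp)$ and then invoking maximality of the block --- is precisely the proof found there. The steps all check out, including the reduction of meets to joins via the orthocomplementation.
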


\begin{ex}\label{Lebesgue}
A phenomenon of considerable importance occurs with the complete \ts{oml} $\mc{C}(\mc{H})$ of closed subspaces of a Hilbert space $\mc{H}$. This is an atomic \ts{oml} with its atoms being the 1-dimensional subspaces of $\mc{H}$. Each orthonormal basis gives an atomic block of $\mc{C}(\mc{H})$, and all atomic blocks arise this way. However, there are atomless blocks of $\mc{C}(\mc{H})$ as well. For instance, if $\mc{H}$ is realized as $L^2(\mathbb{R})$, then each Lebesgue measurable set $A$ modulo sets of measure zero, gives a closed subspace $S_A$ of functions supported a.e.\ on $A$. The set of all closed subspaces $S_A$ forms a block of $\mc{C}(\mc{H})$ that is isomorphic to the complete Boolean algebra of Lebesgue measurable subsets of $\mathbb{R}$ modulo sets of measure zero. Hence, it is a block of $\mc{C}(\mc{H})$ that has no atoms. It is the existence of such atomless blocks that requires the consideration of purely continuous spectra when diagonalizing self-adjoint operators.    
\end{ex}

We come now to the primary definition of this paper. 

\begin{defn}
An \ts{oml} is \emph{completely hereditarily atomic} if it is complete and each of its complete subalgebras is atomic. 
\end{defn}

\begin{rem}
There is a standard notion of a hereditarily atomic Boolean algebra. This is a Boolean algebra with the property that each of its subalgebras is atomic, or equivalently, that each of its quotients is atomic. This notion was introduced by Mostowski and Tarski and considered by a number of authors. See \cite{HereditarilyAtomicBA}*{Sec.~2} for an account. Our notion of completely hereditarily atomic is formally similar, but even in the Boolean case it is quite different. For instance, as we will soon see, a powerset Boolean algebra is completely hereditarily atomic but it is not hereditarily atomic unless it is finite.
\end{rem}

We next consider the property of being completely hereditarily atomic 
as it applies to Boolean algebras. We begin by recalling some general lattice-theoretic notions that will be pertinent. 
\vspace{1ex}

A complete lattice is \emph{completely distributive} \cite{Compendium}*{p.~59} if $\bigvee_{i \in I}\bigwedge_{j \in J_i} a_{ij} = \bigwedge_{\alpha \in \prod_{i \in I} J_i}\bigvee_{i \in I} a_{i\alpha(i)}$ for any suitably indexed family of elements.
An element $c$ in a complete lattice is \emph{compact} \cite{CrawleyDilworth}*{p.~15} if for any $S\subseteq L$, the inequality $c\leq \bigvee S$ implies that there is a finite subset $S'\subseteq S$ with $c\leq\bigvee S'$. 
A complete lattice is \emph{algebraic} \cite{CrawleyDilworth}*{p.~46}, also known as compactly generated, if every element is the join of the compact elements beneath it. A \emph{cover} in a lattice, denoted $a \covers b$, is a pair of elements with $a<b$ such that there exists no $c$ with $a<c<b$. Let $[x,y]=\{z:x\leq z\leq y\}$ be the usual closed interval. A lattice is \emph{strongly atomic} \cite{CrawleyDilworth}*{p.~14} if each interval $[x,y]$ is atomic and {\em weakly atomic} \cite{CrawleyDilworth}*{p.~14} if for each $x<y$, there is a cover $a\covers b$ in the interval $[x,y]$. We collect some known facts: 

\pagebreak[3]

\begin{prop}\label{nbj}
$ $
\begin{enumerate}
\item The product of algebraic lattices is algebraic. 
\item A complete sublattice of an algebraic lattice is algebraic. 
\item Each algebraic lattice is weakly atomic. 
\end{enumerate}    
\end{prop}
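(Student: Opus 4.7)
I would handle the three parts separately, using the characterization of algebraicity in terms of compact elements. For (1), the idea is to identify enough compact elements in a product $L = \prod_{i \in I} L_i$ of algebraic lattices. I would first show that for each index $i$ and each compact $c \in L_i$, the ``insertion'' $\iota_i(c) \in L$ with $c$ in coordinate $i$ and $0$ elsewhere is compact in $L$; this is routine because joins in the product are computed coordinatewise, so the compactness test reduces to compactness in coordinate $i$. Then, writing each $a_i$ as a join of compact elements of $L_i$ beneath it and applying $\iota_i$ expresses $(a_i)_i$ as a join in $L$ of compact elements of $L$ below it.

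Part (2) is the most subtle step, and I expect it to be the main obstacle: a compact element of $L$ need not lie in the complete sublattice $S$, and an element of $S$ that is compact as an element of $S$ need not be compact in $L$, so compact elements do not simply restrict. My plan is to associate, to each compact $c \in L$ with $c \leq s$ for some fixed $s \in S$, the hull $c^{\sharp} := \bigwedge \{t \in S : c \leq t\}$, which lies in $S$ because $S$ is closed under meets. I would then verify that $c^{\sharp}$ is compact in $S$: if $c^{\sharp} \leq \bigvee_S T$, then since joins in $S$ coincide with joins in $L$ on $S$, also $c \leq \bigvee_L T$; compactness of $c$ in $L$ yields a finite $T' \subseteq T$ with $c \leq \bigvee T'$, and then $\bigvee T' \in S$ dominates $c$, hence dominates $c^{\sharp}$ by minimality. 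Finally, the chain $s = \bigvee_L\{c \leq s : c \text{ compact in } L\} \leq \bigvee_L \{c^{\sharp} : c \leq s,\ c \text{ compact in } L\} \leq s$ exhibits $s$ as a join of compact elements of $S$.

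For (3), given $x < y$ in an algebraic lattice $L$, I would first choose a compact $c \leq y$ with $c \not\leq x$, which exists because $y$ is a join of compact elements and $y \not\leq x$. Then $x < x \vee c \leq y$, and I would produce a cover inside $[x, x \vee c]$ by applying Zorn's lemma to $P = \{z \in L : x \leq z < x \vee c\}$. The essential point, where compactness of $c$ is used, is that $P$ is closed under suprema of chains: if a chain in $P$ had supremum $x \vee c$, then $c$ would lie below this supremum and hence, by compactness, below a single chain element, forcing that element to equal $x \vee c$, a contradiction. Any maximal $m \in P$ then satisfies $m \covers x \vee c$ inside $[x, y]$, yielding the required cover.
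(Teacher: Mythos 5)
Your three arguments are correct, and they are precisely the standard proofs that the paper does not reproduce but delegates to its citations (items (1) and (2) to Gierz et al., item (3) to Crawley--Dilworth): coordinatewise insertion of compact elements for products, the hull $c^{\sharp}=\bigwedge\{t\in S: c\leq t\}$ for complete sublattices, and Zorn's lemma applied to $\{z : x\leq z< x\vee c\}$ for weak atomicity. The only detail worth making explicit is that $c^{\sharp}$ is well defined because the set being met is nonempty (it contains $s$) and $S$ is closed under the meets of $L$; with that noted, the proposal is complete.
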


The first two statements in Proposition~\ref{nbj} are found in \cite{Compendium}*{p.~88}, and the third statement is found in \cite{CrawleyDilworth}*{p.~14}. It is an important fact for the theory of continuous lattices that the class of algebraic lattices is not closed under taking images of complete homomorphisms. Easy examples can be found in chains. 

\begin{prop} \label{equiv1}
For $B$ a complete Boolean algebra the following are equivalent: 
\begin{enumerate}
\item $B$ is a powerset Boolean algebra. 
\item $B$ is atomic. 
\item $B$ is completely distributive. 
\item $B$ is algebraic.
\item $B$ is completely hereditarily atomic. 
\item $B$ is weakly atomic.
\item Each maximal chain in $B$ is weakly atomic. 
\end{enumerate}
\end{prop}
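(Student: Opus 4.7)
The plan is to establish $(1) \Leftrightarrow (2)$ via Tarski's representation theorem and then prove each of the remaining conditions equivalent to $(2)$. Once Tarski identifies an atomic $B$ with $\mathcal{P}(\At(B))$ via $x \mapsto \{a \in \At(B) : a \leq x\}$, the implications $(2) \Rightarrow (3)$ and $(2) \Rightarrow (4)$ are immediate: $\mathcal{P}(X)$ is manifestly completely distributive, and each singleton is a compact atom, so every subset is the join of compact elements below it. For $(2) \Rightarrow (5)$, any complete subalgebra of $B$ is a complete Boolean algebra, algebraic by Proposition~\ref{nbj}(2), and hence atomic once we have $(4) \Rightarrow (2)$. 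The step $(4) \Rightarrow (6)$ is Proposition~\ref{nbj}(3), and $(5) \Rightarrow (2)$ is immediate since $B$ is itself a complete subalgebra of $B$.

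The converse directions rest on the observation that in any Boolean algebra weak atomicity implies atomicity, which handles $(6) \Rightarrow (2)$: given $x > 0$, apply weak atomicity to $[0, x]$ to obtain a cover $a \covers b$, and then $b \wedge a^\perp$ is an atom below $x$, since $0 < c < b \wedge a^\perp$ would force $a < a \vee c < b$. For $(3) \Rightarrow (2)$, I would apply complete distributivity to $1 = \bigwedge_{y \in B}(y \vee y^\perp)$, rewriting it as $1 = \bigvee_{\epsilon \in 2^B} \bigwedge_{y \in B} y^{\epsilon(y)}$ with $y^0 = y$ and $y^1 = y^\perp$. Each nonzero meet in this join is an atom (a strictly smaller nonzero element would have to sit below both $y$ and $y^\perp$ for some $y$), so $1$ is a join of atoms, and the same argument inside $[0, x]$ places an atom below any $x > 0$.

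The delicate step is $(2) \Rightarrow (7)$. Given a maximal chain $C$ in $B \cong \mathcal{P}(X)$ and $A \subsetneq D$ in $C$, choose $\xi \in D \setminus A$ and form $A^- = \bigcup\{E \in C : \xi \notin E\}$ and $A^+ = \bigcap\{E \in C : \xi \in E\}$. A case split on whether $\xi$ belongs to a given $F \in C$, combined with the chain comparability of $F$ with witnesses on each side, shows that both $A^-$ and $A^+$ are comparable with every element of $C$, so maximality places each of $A^\pm$ in $C$; an analogous case split shows no element of $C$ lies strictly between $A^-$ and $A^+$. Combined with the inclusions $A \subseteq A^- \subsetneq A^+ \subseteq D$, this yields the required cover in $[A, D] \cap C$. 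Finally, $(7) \Rightarrow (6)$ follows by extending $\{x, y\}$ to a maximal chain $C$: a cover inside $C$ is a cover in $B$, since any strictly intermediate element of $B$ could be adjoined to $C$. I expect $(2) \Rightarrow (7)$ to be the main obstacle, as it is the only step that genuinely exploits set-theoretic witnesses rather than the general lattice-theoretic machinery of Proposition~\ref{nbj}.
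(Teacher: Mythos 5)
Your proof is correct, and while it shares the paper's overall skeleton (using atomicity as the hub and reusing Proposition~\ref{nbj} and the cover-splitting argument for $(6)\Rightarrow(2)$ and $(7)\Rightarrow(6)$), it differs genuinely in three steps. First, where the paper simply cites Tarski's theorem for $(1)\Leftrightarrow(2)\Leftrightarrow(3)$ and the Compendium for $(3)\Leftrightarrow(4)$, you prove $(3)\Rightarrow(2)$ directly by distributing $1=\bigwedge_y(y\vee y^\perp)$ and checking that each nonzero meet $\bigwedge_y y^{\epsilon(y)}$ is an atom; this is in effect a self-contained proof of the nontrivial half of Tarski's theorem, and your derivations of $(2)\Rightarrow(3)$ and $(2)\Rightarrow(4)$ from the powerset representation are immediate, so the citation burden is lighter. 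Second, you route $(2)\Rightarrow(5)$ through algebraicity of complete sublattices (Proposition~\ref{nbj}(2)) rather than through heredity of complete distributivity as the paper does; both work. Third, and most notably, the step you flag as the main obstacle, $(2)\Rightarrow(7)$, is where the paper is much slicker: it observes that a maximal chain in a complete lattice is a complete sublattice, hence algebraic by Proposition~\ref{nbj}(2), hence weakly atomic by Proposition~\ref{nbj}(3), giving $(4)\Rightarrow(7)$ in one line. Your explicit construction of $A^-=\bigcup\{E\in C:\xi\notin E\}$ and $A^+=\bigcap\{E\in C:\xi\in E\}$ is correct --- the comparability and cover checks all go through --- but it essentially re-proves by hand that maximal chains are closed under joins and meets. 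The trade-off is that your argument is elementary and set-theoretically transparent, while the paper's leans harder on the general machinery it has already set up.
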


\begin{proof}
(1) $\Leftrightarrow$ (2) $\Leftrightarrow$ (3). This is Tarski's Theorem \cite{CrawleyDilworth}*{p.~35}. (3) $\Leftrightarrow$ (4). \cite{Compendium}*{p.~91}. So we have that (1) -- (4) are equivalent. (3) $\Rightarrow$ (5). If $B$ is completely distributive, then so is every complete subalgebra of $B$, which is hence atomic. (5) $\Rightarrow$ (2). Trivial. So (1) -- (5) are equivalent. (4) $\Rightarrow$ (6). Every algebraic lattice is weakly atomic. 
(6) $\Rightarrow$ (2). Let $x$ be non-zero in $B$. Then $[0,x]$ is non-trivial, so by weak atomicity there is a cover $a\covers b$ with $0\leq a\covers b\leq x$. Then $b\wedge a^\perp$ is an atom of $B$ that lies beneath $x$. Thus (1) -- (6) are equivalent. (7) $\Rightarrow$ (6) If $x<y$ then there is a maximal chain containing both $x,y$. By maximality, a cover in this chain is a cover in $B$. So the interval $[x,y]$ contains a cover. (4) $\Rightarrow$ (7). A maximal chain in $B$ is a complete sublattice of $B$ and hence is algebraic, and every algebraic lattice is weakly atomic. 
\end{proof}

We turn our attention to \ts{oml}s. Elements $x$ and $y$ of an \ts{oml} \emph{commute} if they lie in a block. It is known \cite{KalmbachBook}*{p.~39} that every pairwise commuting subset of an \ts{oml} is contained in a block and that any two comparable elements commute. Thus every chain is contained in a block.

\begin{prop} \label{equiv}
For $L$ a complete \ts{oml} the following are equivalent: 
\begin{enumerate}
\item $L$ is completely hereditarily atomic. 
\item Each block of $L$ is atomic. 
\item Each maximal chain in $L$ is weakly atomic. 
\end{enumerate}
\end{prop}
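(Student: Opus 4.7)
The plan is to prove the cycle $(1) \Rightarrow (2) \Rightarrow (3) \Rightarrow (1)$. The first implication is immediate: each block of $L$ is a complete subalgebra by the earlier proposition on blocks, and then $(1)$ makes each block atomic. For $(2) \Rightarrow (3)$, I would take a maximal chain $C$ in $L$, invoke the fact stated in the text that every chain lies in a block to find a block $B \supseteq C$, and note that $C$ is also maximal in $B$ since any chain in $B$ extending $C$ is a chain in $L$ extending $C$. By $(2)$, $B$ is atomic, so Proposition~\ref{equiv1} applied to the complete Boolean algebra $B$ yields that every maximal chain in $B$ is weakly atomic, whence $C$ is.

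The substantive step is $(3) \Rightarrow (1)$. Given a complete subalgebra $S$ of $L$ and a nonzero $x \in S$, I would first pick by Zorn a maximal chain $D$ in $[0,x] \cap S$. Using the standard fact that a maximal chain in a complete lattice is closed under arbitrary joins and meets, $D$ is closed under joins and meets in $S$, hence in $L$. Extending $D$ to a maximal chain $C^*$ of $L$ and applying $(3)$, I obtain a pair $a \covers b$ in $C^*$ with $0 \leq a < b \leq x$. The maximality of $C^*$ in $L$ upgrades this to a cover in $L$, since any element of $L$ strictly between $a$ and $b$ would be comparable with every element of $C^*$ and could be adjoined. I then approximate $a, b$ by elements of $D$: set $a' := \sup\{d \in D : d \leq a\}$ and $b' := \inf\{d \in D : d \geq b\}$, both in $D \subseteq S$ by closure. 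Since $a \covers b$ in $L$, each $d \in D$ satisfies $d \leq a$ or $d \geq b$, hence $d \leq a'$ or $d \geq b'$, so $D$ contains no element strictly between $a'$ and $b'$. Set $c := a'^\perp \wedge b' \in S$; the orthomodular law ensures $c > 0$ and $a' \vee c = b'$.

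The claim that $c$ is an atom of $S$ is the crux. Given $z \in S$ with $0 < z < c$, the element $z$ is orthogonal to $a'$ and hence commutes with it, so in their common block one computes $(a' \vee z) \wedge a'^\perp = z$, which combined with $z < c$ forces $a' < a' \vee z < b'$ while $a' \vee z \in S \cap [0,x]$. A case-split on $d \leq a'$ versus $d \geq b'$ shows $a' \vee z$ is comparable with every element of $D$, so maximality of $D$ would force $a' \vee z \in D$, contradicting that $D$ contains no element strictly between $a'$ and $b'$. The main obstacle throughout is that the cover produced by weak atomicity of $C^*$ lives in $L$, not in $S$; transferring it to $S$ via the $D$-approximation, leveraging both orthomodularity and the maximality of $D$, is the technical heart of the proof.
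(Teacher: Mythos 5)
Your proposal is correct and follows essentially the same route as the paper: the same easy implications $(1)\Rightarrow(2)\Rightarrow(3)$, and for $(3)\Rightarrow(1)$ the same device of a join- and meet-closed maximal chain $D$ in the subalgebra, extended to a maximal chain of $L$, with the cover pulled back to $D$ via the elements $a'=a^-$ and $b'=b^+$. The only cosmetic differences are that you take $D$ maximal in $[0,x]\cap S$ rather than in $S$ containing $x$, and that you verify directly that $b'\wedge a'^\perp$ is an atom of $S$ where the paper first notes $a^-\covers b^+$ in $S$ and then applies the interval isomorphism.
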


\begin{proof}
(1) $\Rightarrow$ (2). By definition, since a block is a complete subalgebra. (2) $\Rightarrow$ (3). Let $\mc{C}$ be a maximal chain of $L$. Then $\mc{C}$ is contained in some block of $L$ and is a maximal chain in this block. That $\mc{C}$ is weakly atomic follows from Proposition~\ref{equiv1}. (3) $\Rightarrow$ (1). Let $S$ be a complete subalgebra of $L$. We must show that $S$ is atomic. Let $x$ be a non-zero element of $S$. There is a maximal chain $\mc{D}$ of $S$ that contains $x$. Since this chain is maximal in $S$ it is closed under joins and meets in $S$ and hence also closed under joins and meets in $L$. We can extend $\mc{D}$ to a maximal chain $\mc{C}$ of $L$. Since $\mc{C}$ is weakly atomic, the interval $[0,x]$ of $\mc{C}$ contains a cover $a\covers b$. Since $\mc{C}$ is maximal, $a\covers b$ is also a cover in $L$. Since $\mc{D}$ is closed under joins and meets in $L$, there is a largest element $a^-$ in $\mc{D}$ below $a$ and a least element $b^+$ in $\mc{D}$ above~$b$. Note that $b^+$ lies beneath $x$. We then have that $a^-$ is covered by $b^+$ in $\mc{D}$. Indeed, if an element $z$ of $\mc{D}$ is strictly greater than $a^-$ then it does not lie below $a$, so must lie above~$b$, and since it is in $\mc{D}$ it lies above $b^+$. So indeed, $a^-\covers b^+$ is a cover in $\mc{D}$, and as $\mc{D}$ is a maximal chain in $S$, $a^-\covers b^+$ is a cover in $S$. Then since $b^+\leq x$, we have $b^+\wedge (a^-)^\perp$ is an atom of $S$ beneath $x$. 
\end{proof}

\begin{prop}\label{xxz}
Each algebraic \ts{oml} is completely hereditarily atomic. 
\end{prop}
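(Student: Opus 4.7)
The plan is to reduce the problem to Proposition~\ref{equiv} by verifying its third condition: every maximal chain in $L$ is weakly atomic. An algebraic lattice is complete by definition, so $L$ meets the standing hypothesis of Proposition~\ref{equiv}. Hence it suffices to fix an arbitrary maximal chain $\mc{C}$ in $L$ and show that $\mc{C}$ is weakly atomic.

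First I would observe that any maximal chain in a complete lattice is a complete sublattice. Concretely, given $S \subseteq \mc{C}$ and any $c \in \mc{C}$, comparability in the chain $\mc{C}$ forces either $s \leq c$ for every $s \in S$ (whence $\bigvee S \leq c$) or else $c \leq s$ for some $s \in S$ (whence $c \leq \bigvee S$). Thus $\bigvee S$ is comparable with every element of $\mc{C}$, and by maximality $\bigvee S \in \mc{C}$; the argument for meets is dual.

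Now I would apply Proposition~\ref{nbj}(2) to conclude that $\mc{C}$, being a complete sublattice of the algebraic lattice $L$, is itself algebraic. Then Proposition~\ref{nbj}(3) gives that $\mc{C}$ is weakly atomic. Since $\mc{C}$ was arbitrary, the third condition of Proposition~\ref{equiv} holds, and therefore $L$ is completely hereditarily atomic.

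The argument is quite short given the machinery already assembled; there is no real obstacle beyond the bookkeeping above. The only point that might deserve an explicit sentence is the comparability argument showing maximal chains are closed under the ambient joins and meets, since Proposition~\ref{nbj}(2) requires genuine closure under the joins and meets of $L$, not merely completeness of $\mc{C}$ as an abstract lattice.
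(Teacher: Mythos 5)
Your proof is correct. It takes a route that is parallel to, but not identical with, the paper's: the paper verifies condition (2) of Proposition~\ref{equiv} (every block is atomic), by noting that a block $B$ is a complete subalgebra, hence a complete sublattice of the algebraic lattice $L$, hence algebraic by Proposition~\ref{nbj}(2), and then invoking the equivalence of algebraic and atomic for complete Boolean algebras from Proposition~\ref{equiv1}. You instead verify condition (3) of Proposition~\ref{equiv} (every maximal chain is weakly atomic), using the comparability argument to show that a maximal chain is a complete sublattice, and then Proposition~\ref{nbj}(2) and (3). Your auxiliary observation that maximal chains in a complete lattice are complete sublattices is exactly the step the paper itself uses inside the proof of the (4)~$\Rightarrow$~(7) implication of Proposition~\ref{equiv1}, so nothing new is being smuggled in; your route has the minor advantage of bypassing Proposition~\ref{equiv1} and the fact that blocks are complete subalgebras, at the cost of spelling out the chain-closure argument explicitly. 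Both arguments are equally short and rest on the same engine, namely that complete sublattices of algebraic lattices are algebraic.
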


\begin{proof}
Suppose $L$ is an algebraic \ts{oml} and $B$ is a block of $L$. Then $B$ is a complete subalgebra of $L$, so it is in particular a complete sublattice of $L$. By Proposition~\ref{nbj}, $B$ is an algebraic Boolean algebra. So by Propositions~\ref{equiv1} and \ref{equiv}, $L$ is completely hereditarily atomic.     
\end{proof}

Before examining properties of these classes, we require some basic results about \ts{oml}s. To begin, the \emph{horizontal sum} of a family $(L_i)_{i \in I}$ of \ts{oml}s is the \ts{oml} formed by ``gluing'' together the \ts{oml}s together along the bounds 0, 1. See \cite{KalmbachBook}*{p.~306}. We note also that if $x\leq y$ in an \ts{oml} $L$, then the interval $[x,y]$ of $L$ forms an \ts{oml} with the inherited join and meet operations from $L$ and with orthocomplementation given by $z^\#=x\vee(z^\perp\wedge y)$. Further, the intervals $[x,y]$ and $[0,y\wedge x^\perp]$ are isomorphic. See \cite{BrunsHarding}*{Prop.~2.5, Cor.~2.6}. 

\begin{defn}\label{defn: complete homomorphism}
A \emph{complete homomorphism} between complete \ts{oml}s is a map that preserves all joins and meets as well as orthocomplementation.  \end{defn}

The \emph{center} $C(L)$ of an \ts{oml} \cite{KalmbachBook}*{p.~23} is the set of elements that commute with all others, hence those that belong to every block. If $c$ belongs to the center of $L$, then $L$ is isomorphic to $[0,c]\times[0,c^\perp]$ via the mapping $f(x)=(x\wedge c,x\wedge c^\perp)$ and every direct product decomposition of $L$ arises this way. See \cite{BrunsHarding}*{Lem.~2.7}. We provide one more result. 

\begin{lem}\label{factor}
If $L$ and $M$ are complete \ts{oml}s and $f:L\to M$ is a complete homomorphism onto $M$, then there is a central $c\in L$ with $M$ isomorphic to $[0,c]$.
\end{lem}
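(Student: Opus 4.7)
The plan is to build $c$ directly from the kernel of $f$. Since $f$ preserves all joins, the set $f^{-1}(0)$ is closed under arbitrary joins and has a largest element $d := \bigvee f^{-1}(0)$; set $c := d^\perp$, so that $f(c) = f(d)^\perp = 0^\perp = 1$. The two things I would then prove are (i) $c$ is central, and (ii) $f$ restricts to an \ts{oml} isomorphism $g : [0,c] \to M$. Given these, the lemma follows from the correspondence between central elements and direct product decompositions recalled just above the lemma.

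The main obstacle is (i). Fix $x \in L$ and set $u := x \wedge c$; since $f$ preserves meets, $f(u) = f(x) \wedge 1 = f(x)$. The orthomodular law applied to $u \leq x$ gives $x = u \vee (u^\perp \wedge x)$, and applying $f$ to $u^\perp \wedge x$ yields $f(x)^\perp \wedge f(x) = 0$. Hence $u^\perp \wedge x \leq d$, and together with $u^\perp \wedge x \leq x$ this forces $u^\perp \wedge x \leq x \wedge d$. The reverse inequality is immediate from $x \wedge d \leq d = c^\perp \leq u^\perp$, so
\[
x \;=\; u \vee (u^\perp \wedge x) \;=\; (x \wedge c) \vee (x \wedge c^\perp),
\]
which says exactly that $x$ commutes with $c$. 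As $x$ was arbitrary, $c$ is central.

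For (ii), once $c$ is central the decomposition $L \cong [0,c] \times [0,c^\perp]$ is automatic, and I only need to check that $g := f|_{[0,c]}$ is an isomorphism of \ts{oml}s. It is plainly a lattice homomorphism, and since the orthocomplement of $z \in [0,c]$ inside $[0,c]$ is $c \wedge z^\perp$, which $f$ sends to $1 \wedge f(z)^\perp = f(z)^\perp$, the map $g$ preserves orthocomplementation. Surjectivity is immediate: given $y \in M$, any $x$ with $f(x) = y$ satisfies $g(x \wedge c) = y$. For injectivity, suppose $u, u' \in [0,c]$ satisfy $f(u) = f(u')$. Then $f\bigl((u \vee u') \wedge u^\perp\bigr) = f(u)^\perp \wedge f(u) = 0$, so $(u \vee u') \wedge u^\perp \leq d$; as it also lies in $[0,c]$, it is $0$. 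Applying the orthomodular law to $u \leq u \vee u'$ then yields $u \vee u' = u$, whence $u' \leq u$, and by symmetry $u = u'$. Thus $g$ is the required isomorphism $[0,c] \to M$.
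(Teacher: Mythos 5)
Your proof is correct, but it takes a genuinely different route from the paper's. The paper argues at the level of congruences: it notes that the kernel $\theta$ of $f$ has a largest element in its $0$-class, invokes a cited result (Bruns--Harding, Lem.~5.5) to conclude that $\theta$ is a factor congruence, and then invokes Lem.~2.7 to extract the central element $c$. You instead construct $c$ explicitly as $\bigl(\bigvee f^{-1}(0)\bigr)^\perp$ and verify everything by hand: centrality of $c$ via the orthomodular law and the standard characterization of commutation ($x = (x\wedge c)\vee(x\wedge c^\perp)$), and then a direct check that $f$ restricts to an isomorphism $[0,c]\to M$. All the individual steps are sound --- in particular the key computation $u^\perp\wedge x = x\wedge c^\perp$ for $u = x\wedge c$, and the injectivity argument via $(u\vee u')\wedge u^\perp \leq c\wedge c^\perp = 0$. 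What your approach buys is self-containedness: it replaces the external citation to the factor-congruence machinery with a short elementary argument, and it exhibits the isomorphism $M\cong[0,c]$ concretely as a restriction of $f$ rather than obtaining it abstractly through $L/\theta$. What the paper's approach buys is brevity and a cleaner fit with the congruence-theoretic toolkit it already relies on elsewhere (the same Lem.~2.7 is used repeatedly). One small presentational point: when you assert that $x = (x\wedge c)\vee(x\wedge c^\perp)$ ``says exactly that $x$ commutes with $c$,'' you are using the standard order-theoretic characterization of commutation, whereas the paper defines commutation via membership in a common block; these are equivalent in any \textsc{oml}, but it is worth flagging that equivalence explicitly since the paper's definition is the block one.
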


\begin{proof}
Let $\theta$ be the kernel of $f$, that is, the set of pairs $(x,y)$ with $f(x)=f(y)$. Since $f$ is a complete homomorphism, there is a largest element of $L$ that is mapped by $f$ to 0. By \cite{BrunsHarding}*{Lem.~5.5}, $\theta$ is a factor congruence, meaning that there is a congruence $\theta'$ so that the natural mapping from $L$ to $L/\theta\times L/\theta'$ is an isomorphism. But $L/\theta$ is isomorphic to $M$, and by \cite{BrunsHarding}*{Lem.~2.7}, there is a central element $c$ with $L/\theta$ isomorphic to $[0,c]$.
\end{proof}

\begin{defn}
Let $\sf{ALG}$ be the class of algebraic \ts{oml}s and $\sf{CHA}$ be the class of completely hereditarily atomic \ts{oml}s.
\end{defn}

\begin{prop}\label{sre}
$ $
\begin{enumerate}
\item $\sf{ALG}\subseteq\sf{CHA}$.
\item $\sf{ALG}$ and $\sf{CHA}$ are closed under products.
\item $\sf{ALG}$ and $\sf{CHA}$ are closed under complete subalgebras. 
\item $\sf{ALG}$ and $\sf{CHA}$ are closed under complete homomorphic images.
\item $\sf{ALG}$ and $\sf{CHA}$ are closed under intervals. 
\item $\sf{CHA}$ is closed under horizontal sums. 
\end{enumerate}
Here, a class $\sf{C}$ is closed under intervals if $L\in\sf{C}$ and $[x,y] \subseteq L$ together imply $[x,y]\in\sf{C}$.
\end{prop}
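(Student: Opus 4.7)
The plan is to dispose of the six items by combining Proposition~\ref{nbj}, the block and maximal-chain characterizations of $\sf{CHA}$ in Proposition~\ref{equiv}, and Lemma~\ref{factor}. Item (1) is already established as Proposition~\ref{xxz}, and throughout, item (4) will reduce to (5) via Lemma~\ref{factor}: a complete homomorphic image of $L$ is isomorphic to an interval $[0,c]$ of $L$ with $c$ central.

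For $\sf{ALG}$, items (2) and (3) are immediate from parts (1) and (2) of Proposition~\ref{nbj} respectively. For (5), I will argue directly that an interval $[x,y]$ of an algebraic $L$ is algebraic: for each compact $c \in L$ with $c \leq a$, the element $c \vee x$ is compact in $[x,y]$, and these elements join to $a$.

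For $\sf{CHA}$, (3) is immediate because a complete subalgebra of a complete subalgebra of $L$ is itself a complete subalgebra of $L$. For (2) I will argue that blocks of $\prod_i L_i$ are exactly products $\prod_i B_i$ with each $B_i$ a block of $L_i$; the key input is that an element commuting with every element of a block of an \ts{oml} belongs to that block. Each $B_i$ is then a powerset Boolean algebra by Proposition~\ref{equiv1}, and a product of powerset Boolean algebras is again a powerset Boolean algebra, hence atomic, so Proposition~\ref{equiv}(2) places the product in $\sf{CHA}$. For (5), I will use the maximal-chain characterization of Proposition~\ref{equiv}(3): a maximal chain $\mc{C}$ of $[x,y]$ extends to a maximal chain $\mc{C}'$ of $L$ by adjoining maximal chains of $[0,x]$ and $[y,1]$; any cover of $\mc{C}'$ lying in $[x,y]$ already belongs to $\mc{C}$ by maximality, so weak atomicity descends from $\mc{C}'$ to $\mc{C}$. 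For (6), I will observe that the blocks of a horizontal sum $\bigoplus_i L_i$ coincide with the blocks of the individual summands, because any nontrivial element of $L_j$ fails to commute with any nontrivial element of $L_i$ for $i \neq j$; atomicity of blocks then transfers via Proposition~\ref{equiv}(2). I do not anticipate a serious obstacle; the only points needing care are the block descriptions in products and horizontal sums and the compact-element verification for intervals.
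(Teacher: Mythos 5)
Your proposal is correct, and items (1), (2), (3), and (6) run essentially as in the paper (the paper phrases (6) via maximal chains of the horizontal sum rather than blocks, but by Proposition~\ref{equiv} this is the same argument). The one genuine structural difference is in how you organize (4) and (5). The paper proves (4) first and directly: by Lemma~\ref{factor} the image is $[0,c]$ with $c$ central, compactness restricts to $[0,c]$ in the algebraic case, and a maximal chain of $[0,c]$ extends by a maximal chain of $[c,1]$ in the \textsf{CHA} case. It then deduces (5) from (3) and (4) by exhibiting a general interval $[x,y]$ as a complete homomorphic image of a complete subalgebra of $L$, namely the subalgebra of elements compatible with $x$ and $y$, via $z\mapsto x\vee(z\wedge y)$; this leans on \cite{BrunsHarding}*{Prop.~2.4, 2.5}. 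You invert the dependency: you prove (5) directly for an arbitrary interval --- for \textsf{ALG} by checking that $c\vee x$ is compact in $[x,y]$ whenever $c$ is compact in $L$ with $c\leq a$, and that these elements join to $a$; for \textsf{CHA} by extending a maximal chain of $[x,y]$ to one of $L$ using maximal chains of $[0,x]$ and $[y,1]$ and pulling covers back --- and then obtain (4) as the special case $[0,c]$ via Lemma~\ref{factor}. Both routes are sound; yours is more self-contained in that it avoids the compatibility-subalgebra machinery, at the cost of the (routine but necessary) verifications that joins of nonempty subsets of $[x,y]$ computed in $L$ stay in $[x,y]$ and that the assembled chain $\mc{D}_1\cup\mc{C}\cup\mc{D}_2$ is genuinely maximal in $L$, both of which go through.
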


\begin{proof}
(1) This is Proposition~\ref{xxz}. (2)~For $\sf{ALG}$ this follows from Proposition~\ref{nbj}. For $\sf{CHA}$ this follows from Proposition~\ref{equiv} using the fact that the blocks of a product are the product of the blocks of the factors and that the product of atomic Boolean algebras is atomic. 
(3)~For $\sf{ALG}$ this follows from Proposition~\ref{nbj}. For $\sf{CHA}$ this is a trivial consequence of the definition since a complete subalgebra of a complete subalgebra is a complete subalgebra. 
(4)~Suppose $M$ is a complete homomorphic image of $L$. By Lemma~\ref{factor}, $M$ is isomorphic to $[0,c]$ for some central element $c\in L$. If $L$ is algebraic, clearly $[0,c]$ is algebraic since any element in $[0,c]$ that is compact in $L$ is also compact in $[0,c]$. Suppose $L$ is completely hereditarily atomic. If $\mc{C}$ is a maximal chain in $[0,c]$, then taking a maximal chain $\mc{D}$ in $[c,1]$ provides a maximal chain $\mc{C}\cup\mc{D}$ of $L$. By Proposition~\ref{equiv}, $\mc{C}\cup\mc{D}$ is weakly atomic, and this yields that $\mc{C}$ is weakly atomic. So by Proposition~\ref{equiv}, $[0,c]$ is completely hereditarily atomic.
(5)~Suppose $[x,y]$ is an interval of a complete \ts{oml} $L$. Then by \cite{BrunsHarding}*{Prop.~2.4}, the set $S$ of all elements compatible with both $x$ and $y$ is a complete subalgebra of $L$, and by \cite{BrunsHarding}*{Prop.~2.5}, there is an onto homomorphism $f:S\to [x,y]$ given by $f(z)=x\vee(z\wedge y)$. It follows from \cite{BrunsHarding}*{Prop.~2.4} that $f$ is a complete homorphism. So $[x,y]$ is a complete homomorphic image of a complete subalgebra of $L$. The result follows from previously established items.
(6)~Suppose $L$ is the horizontal sum of a family $(L_i)_{i \in I}$. Then a maximal chain in $L$ is a maximal chain in some $L_i$. The result follows from Proposition~\ref{equiv}. 
\end{proof}

\begin{prop} \label{prod}
If $L$ is a 
completely hereditarily atomic \ts{oml},
then its center is a powerset Boolean algebra and $L$ is a direct product of directly irreducible completely hereditarily atomic \ts{oml}s. 
Further, if $L$ is algebraic, then $L$ is a product of directly irreducible algebraic \ts{oml}s.
\end{prop}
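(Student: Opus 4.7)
The plan is to obtain the decomposition from the atoms of the center. First I would note that $C(L)$ is a complete subalgebra of $L$, since it is the intersection of all blocks and each block is closed under existing joins and meets in $L$ (\cite{KalmbachBook}*{p.~24}). Because $L\in\sf{CHA}$, the definition forces $C(L)$ to be atomic. Being a complete atomic Boolean algebra, $C(L)$ is a powerset Boolean algebra by Proposition~\ref{equiv1}, settling the first assertion.

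Let $A$ denote the set of atoms of $C(L)$, so $C(L)\iso\mathcal{P}(A)$, the elements of $A$ are pairwise orthogonal and central, and $\bigvee A = 1$. For each $a\in A$, the central decomposition $L\iso[0,a]\times[0,a^\perp]$ gives a complete homomorphism $L\to[0,a]$, $x\mapsto x\wedge a$. Assembling these yields
\[
\varphi\colon L \;\longrightarrow\; \prod_{a\in A}[0,a], \qquad \varphi(x)\;=\;(x\wedge a)_{a\in A},
\]
which is coordinate-wise a complete homomorphism of \ts{oml}s. To see that $\varphi$ is an isomorphism, both injectivity and surjectivity reduce to the identity $x = \bigvee_{a\in A}(x\wedge a)$ for all $x\in L$. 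To establish this, set $y=\bigvee_a(x\wedge a)$ and $z = x\wedge y^\perp$; then $z\wedge a \leq y\wedge y^\perp = 0$ for each $a$, and centrality of $a$ forces $z\leq a^\perp$. Taking the meet over $A$ and applying De~Morgan gives $z\leq(\bigvee A)^\perp = 0$, so $x=y$. Surjectivity then follows by sending $(z_a)\in\prod[0,a]$ to $\bigvee_a z_a$ and using that each central $b$ distributes over joins, together with $z_a\leq a\leq b^\perp$ for $a\neq b$, to conclude $\varphi(\bigvee_a z_a)_b = z_b$.

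It remains to show each factor $[0,a]$ is directly irreducible and to transfer the hypotheses to the factors. The center of $[0,a]$ equals $\{c\in C(L):c\leq a\}$: a central element $c$ of $[0,a]$ commutes with every $x\in L$ via $x=(x\wedge a)\vee(x\wedge a^\perp)$, since $c\leq a$ is automatically orthogonal to $x\wedge a^\perp$ and commutes with $x\wedge a$ by assumption. As $a$ is an atom of $C(L)$, this center is $\{0,a\}$, so $[0,a]$ is directly irreducible. Finally, each $[0,a]\in\sf{CHA}$, and each $[0,a]\in\sf{ALG}$ when $L\in\sf{ALG}$, by Proposition~\ref{sre}(5), closure under intervals. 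The main obstacle is the infinite distributivity identity in the middle paragraph: \ts{oml}s are not generally distributive over arbitrary joins, but the trick of passing to $z=x\wedge y^\perp$ together with centrality of each $a$ reduces the problem to De~Morgan in the center.
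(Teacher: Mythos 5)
Your proposal is correct and follows essentially the same route as the paper: decompose $L$ over the atoms $A$ of the (atomic, hence powerset) center via $x \mapsto (x\wedge a)_{a\in A}$, verify this is an isomorphism, and invoke Proposition~\ref{sre}(5) to transfer $\sf{CHA}$ and $\sf{ALG}$ to the intervals $[0,a]$. The only differences are cosmetic: where the paper cites \cite{BrunsHarding}*{Prop.~2.3} for $x=\bigvee_{a\in A}(x\wedge a)$ and passes to a block containing $\{x_a\}$ for surjectivity, you give self-contained substitutes (the orthomodularity/De~Morgan argument via $z=x\wedge y^\perp$, and distributivity of central elements over arbitrary joins), both of which are sound.
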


\begin{proof}
Let $L\in\sf{CHA}$. By \cite{BrunsHarding}*{Prop.~2.4}, the center $C(L)$ is a complete subalgebra of $L$ and since the center consists of pairwise commuting elements, it is Boolean. Thus, since $L\in\sf{CHA}$, the center is a complete atomic Boolean algebra. Let $A$ be the set of atoms of $C(L)$, and define $\phi:L\to\prod_A[0,a]$ by setting $\phi(x)(a) = \phi_a(x)= x\wedge a$. As we have noted before, each $\phi_a:L\to [0,a]$ is a complete onto homomorphism, and it follows that $\phi$ is a complete homomorphism. For any $x\in L$, by \cite{BrunsHarding}*{Prop.~2.3}, we have $x=x\wedge\bigvee_{a \in A} a = \bigvee_{a \in A} x\wedge a$. This shows that $\phi$ is one-to-one. 
To see that $\phi$ is onto suppose that $x_a\leq a$ for each $a\in A$. Then the set $\{x_a:a\in A\}$ is pairwise orthogonal, hence pairwise commuting, and so is contained in a block $B$ of $L$. As with every block, $B$ contains $C(L)$. Set $x=\bigvee_{a \in A} x_a$ and note that this also belongs to $B$. Then for $b\in A$ we have $\phi_{b}(x) = b\wedge \bigvee_{a \in A}x_a = \bigvee_{a \in A} b\wedge x_a$. Since the elements of $A$ are pairwise orthogonal and $x_a\leq a$, it follows that $\phi_{b}(x)=x_b$. Thus $\phi$ is onto. So $\phi$ is an isomorphism. Since the center of a product is the product of the centers, it follows that each $[0,a]$ has center $\{0,a\}$ and hence \cite{BrunsHarding}*{Lem.~2.7} is directly irreducible. 

We have shown that $L\in \sf{CHA}$ is isomorphic to a direct product of directly irreducible \ts{oml}s that are intervals of $L$. That these intervals are in $\sf{CHA}$, and in $\sf{ALG}$ if $L\in\sf{ALG}$, follows from Proposition~\ref{sre}. 
\end{proof}

A lattice is {\em chain-finite} if each of its chains is finite and is of {\em finite height} if there is a finite upper bound on the cardinality of its chains. The {\em height} of a finite-height lattice is one less than the maximum cardinality of a chain in it. So the 2-element lattice has height~1. In a modular lattice, if one maximal chain is finite, then all are finite and have the same cardinality. Thus, for modular lattices, chain-finite and finite height are equivalent. A lattice is of {\em infinite height} if it is not of finite height. If an \ts{ol} has an infinite orthogonal set, then it has an infinite chain; in an \ts{oml} the converse is true. So in an \ts{oml}, chain-finite is equivalent to all orthogonal sets being finite. 

\begin{ex}
It is easily seen that every chain-finite lattice is complete and that every element of a chain-finite lattice is compact, so each chain-finite lattice is algebraic. Thus, every chain-finite \ts{oml} is algebraic and hence completely hereditarily atomic.
\end{ex}

\begin{ex}
It need not be that the horizontal sum of algebraic \ts{oml}s is algebraic. Consider the horizontal sum $\mc{P}(\mathbb{N})\oplus\mc{P}(\mathbb{N})$ of two copies of the powerset of the natural numbers. The atoms of the first copy join to 1, but any finite subset of these atoms joins to an element that is not equal to $1$ and that belongs to the first copy. So no nonzero element of the second copy can be compact. This provides an example of a completely hereditarily atomic \ts{oml} that is not algebraic.
\end{ex}

\begin{ex}
Let $\H$ be a Hilbert space and $\C(\H)$ be its \ts{oml} of closed subspaces. If $\H$ is finite-dimensional, then $\C(\H)$ is chain-finite and hence algebraic. If $\H$ is infinite-dimensional, then as noted in Example~\ref{Lebesgue}, it has an atomless block, so $\C(\H)$ is not even completely hereditarily atomic let alone algebraic. It is instructive to see directly why $\C(\H)$ fails to be algebraic when $\H$ is of infinite dimension. Take a non-zero vector $v\in\H$. Then there is an orthonormal basis $(u_i)_{i \in I}$ so that $v$ is not a finite linear combination of any of these basis elements. This shows that the atom $\langle v\rangle$ that is the one-dimenional subspace spanned by $v$  is not compact since $\bigvee_{i \in I} \langle u_i\rangle = 1$ but no finite sub-join lies above $\langle v \rangle$.     
\end{ex}

We next consider the combination of algebraicity, modularity, and their weaker forms, in the setting of \ts{oml}s. We begin with relevant terminology. 
A lattice $L$ is {\em semimodular} if for all $a,b\in L$, if $a\wedge b \covers a$ then $b\covers a\vee b$ and is {\em dual semimodular} if the dual condition holds. A lattice $L$ is {\em relatively complemented} if each interval in $L$ is complemented.   

\begin{prop}
In any bounded lattice, 
\begin{enumerate}
\item modular $\Rightarrow$ semimodular and dual semimodular;
\item algebraic, strongly atomic, semimodular, and dual semimodular $\Rightarrow$ modular; 
\item semimodular $\Rightarrow$ the covering property;
\item modular and complemented $\Rightarrow$ relatively complemented. 
\end{enumerate}
\end{prop}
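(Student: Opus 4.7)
Parts (1), (3), and (4) admit short classical proofs, and I would dispatch them quickly. For (1), I would invoke the diamond isomorphism: in a modular lattice, $x \mapsto x \vee a$ and $y \mapsto y \wedge b$ are mutually inverse lattice isomorphisms between $[a \wedge b, b]$ and $[a, a \vee b]$, and lattice isomorphisms preserve covers, so $a \wedge b \covers a$ transports to $b \covers a \vee b$; the dual transposition then gives dual semimodularity. For (3), given an atom $a$ and an element $x \in L$, either $a \leq x$ (so $[x, x \vee a] = \{x\}$ has height $0$) or $x \wedge a < a$ forces $x \wedge a = 0$, whence $x \wedge a \covers a$ and semimodularity yields $x \covers x \vee a$, so the interval has height $1$. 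For (4), given $a \leq x \leq b$ and a complement $x'$ of $x$ in $L$, I would set $y = (x' \vee a) \wedge b$ and verify with two applications of modularity (using $x \leq b$ and $a \leq x$ respectively) that $y \vee x = (x' \vee a \vee x) \wedge b = 1 \wedge b = b$ and $y \wedge x = (x' \vee a) \wedge x = a \vee (x' \wedge x) = a$, so $y$ is a relative complement of $x$ in $[a, b]$.

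The main obstacle is (2). My plan is first to establish the Jordan-Dedekind chain condition for finite chains: in any lattice that is both semimodular and dual semimodular, any two finite maximal chains inside a common interval have equal length. This is the standard induction on length in which both covering properties are used to push a cover in one chain past a cover in the other. Next I would reduce the modular identity $a \vee (b \wedge c) = (a \vee b) \wedge c$ (for $a \leq c$) to a finite-length setting: if the identity fails, the interval $I = [a \vee (b \wedge c), (a \vee b) \wedge c]$ is nontrivial, and strong atomicity produces a cover $u \covers v$ inside $I$. Using algebraicity, I would approximate $b$ from below by a compact element $b_0 \leq b$ chosen so that the failure $u < v$ is already realised with $b_0$ in place of $b$, placing the whole configuration inside a finite-length subinterval of $L$. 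Within this finite-length subinterval, the classical Dedekind-Birkhoff theorem (the finite-length equivalence between modularity and the combination of semi- and dual semimodularity) delivers a contradiction via the Jordan-Dedekind count.

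The hardest step is the reduction to a finite-length subinterval. Strong atomicity alone provides covers in every nontrivial interval but does not bound chain lengths, and algebraicity alone provides compact generators without ensuring they live in finite-length intervals; it is the interaction of the two that is essential. Strong atomicity refines the relevant interval by covers, while algebraicity supplies compact witnesses $b_0$ of the failure of the modular identity so that the covers controlled by strong atomicity fit into a finite-length subinterval where the chain-condition argument is rigorous. Verifying that this compactness reduction genuinely preserves the failure of the modular identity is the delicate point; once it is in place, the contradiction with Jordan-Dedekind is automatic.
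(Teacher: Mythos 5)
Your arguments for (1), (3), and (4) are correct; the paper itself gives no proofs here, simply citing Crawley--Dilworth (pp.~24 and 31) for (1), (2), (4) and calling (3) a trivial consequence of the definitions, so for those three parts you have supplied more detail than the paper does. (One small slip in (1): the cover $a\wedge b \covers a$ lives in the transpose pair $[a\wedge b,a]\cong[b,a\vee b]$ under $x\mapsto x\vee b$, not in $[a\wedge b,b]\cong[a,a\vee b]$; this is only a relabelling.)

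Your plan for (2), however, has a genuine gap at exactly the step you flag as delicate: the reduction to a finite-length subinterval. Passing from $b$ to a compact $b_0\le b$ that still witnesses the failure of modularity is legitimate (algebraic lattices are meet-continuous, so $(a\vee b)\wedge c$ is the directed join of the elements $(a\vee b_0)\wedge c$), but compactness of $b_0$ does not confine the configuration to a finite-length interval. Even under all four hypotheses of (2), a compact element can dominate an infinite chain: the chain $\{0,1,2,\dots\}\cup\{\omega,\omega+1\}$ is algebraic, strongly atomic, semimodular, and dually semimodular, and $\omega+1$ is compact, yet $[0,\omega+1]$ has infinite height. So there is in general no finite-length subinterval in which to run the Jordan--Dedekind count, and the Dedekind--Birkhoff equivalence for finite-length lattices cannot be invoked. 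The Crawley--Dilworth result that the paper cites (p.~24) is proved by a different route: assuming modularity fails, one extracts a pentagon $b<a$ with $a\wedge c=b\wedge c$ and $a\vee c=b\vee c$, uses strong atomicity to shrink the edge $[b,a]$ to a covering pair, and then derives a contradiction by pushing covers around directly with semimodularity, dual semimodularity, and meet-continuity; no chain-length counting and no finite-height reduction appears. As written, your part (2) does not go through.
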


\begin{proof}
The first two statements are found at \cite{CrawleyDilworth}*{p.~24}, the third is a trivial consequence of the definitions, and the fourth is at \cite{CrawleyDilworth}*{p.~31}.
\end{proof}

\begin{prop} \label{swr}
$ $
\begin{enumerate}
\item Every \ts{oml} is relatively complemented. 
\item Within \ts{oml}s, atomic, strongly atomic, weakly atomic, and atomistic coincide. 
\item Within \ts{oml}s, semimodular, dual semimodular, and the covering property coincide. 
\end{enumerate}
\end{prop}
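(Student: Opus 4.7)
The overall approach is to lean on two structural facts recalled in the excerpt: every interval $[x,y]$ of an \textsc{oml} $L$ is itself an \textsc{oml} with orthocomplementation $z^{\#} = x \vee (z^\perp \wedge y)$, and $[x,y] \cong [0, y \wedge x^\perp]$ as lattices. For (1), the element $z^{\#}$ is by construction a complement of $z$ within $[x,y]$, so every interval of $L$ is complemented, which is exactly relative complementation.

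For (2), the equivalence of atomic and atomistic inside \textsc{oml}s is quoted in the excerpt, so it suffices to close the cycle \emph{atomic} $\Rightarrow$ \emph{strongly atomic} $\Rightarrow$ \emph{weakly atomic} $\Rightarrow$ \emph{atomic}. If $L$ is atomic and $x<y$, then the orthomodular law gives $y \wedge x^\perp \neq 0$; the principal ideal $[0, y \wedge x^\perp]$ inherits atomicity from $L$, and the interval isomorphism transports this to $[x,y]$, giving strong atomicity. Strong atomicity trivially implies weak atomicity. Conversely, if $L$ is weakly atomic and $0<x$, then $[0,x]$ contains some cover $a \covers b$; since a lattice isomorphism preserves covers, $[a,b] \cong [0, b \wedge a^\perp]$ shows that $b \wedge a^\perp$ covers $0$, i.e.\ is an atom of $L$, and it lies beneath $x$.

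For (3), I first show that semimodular and dual semimodular coincide on \textsc{oml}s by using that $\perp$ is an order-reversing involution preserving covers. Applying $\perp$ to the semimodular implication \emph{$a \wedge b \covers a \Rightarrow b \covers a \vee b$} turns the hypothesis into $a^\perp \covers a^\perp \vee b^\perp$ and the conclusion into $a^\perp \wedge b^\perp \covers b^\perp$; relabelling $a^\perp, b^\perp$ as fresh variables yields the dual semimodular implication, and the reverse direction is identical. Semimodular $\Rightarrow$ covering property is item~(3) of the preceding proposition. For the converse, assume the covering property and $a \wedge b \covers a$. Setting $c = a \wedge b$ and $p = a \wedge c^\perp$, the interval isomorphism $[c,a] \cong [0,p]$ combined with $c \covers a$ forces $p$ to be an atom of $L$, and the orthomodular law gives $a = c \vee p$, so $a \vee b = p \vee b$. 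One cannot have $p \leq b$, since then $p \leq a \wedge b = c$ would yield $p \leq c \wedge c^\perp = 0$, contradicting that $p$ is an atom. The covering property therefore gives $b \covers p \vee b = a \vee b$, which is semimodularity.

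The main obstacle is this final implication: covering property $\Rightarrow$ semimodular requires first manufacturing an atom out of the abstract cover $a \wedge b \covers a$ so that the atom-based covering hypothesis becomes applicable, and this conversion is precisely where both the interval isomorphism and the orthomodular law enter in an essential way.
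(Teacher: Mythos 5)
Your proof is correct and follows essentially the same route as the paper's: intervals are \ts{oml}s for (1), the interval isomorphism $[x,y]\cong[0,y\wedge x^\perp]$ together with the cover-to-atom conversion $b\wedge a^\perp$ for (2), and self-duality plus the manufactured atom $p=a\wedge(a\wedge b)^\perp$ for (3). The only cosmetic difference is in the final step of (3), where you use the orthomodular law to get $a=(a\wedge b)\vee p$ and hence $b\vee p=a\vee b$, whereas the paper invokes the Foulis--Holland theorem; you also explicitly check $p\not\leq b$, a point the paper leaves implicit.
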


\begin{proof}
(1) An interval in an \ts{oml} is itself an \ts{oml} and hence is complemented. (2)~The equivalence of atomic and atomistic for \ts{oml}s is in \cite{KalmbachBook}*{p.~140}. An interval $[x,y]$ in an \ts{oml} is isomorphic to the interval $[0,y\wedge x^\perp]$, so atomic implies strongly atomic for \ts{oml}s. If an \ts{oml} $L$ is weakly atomic and $x$ is non-zero, then there are $a,b$ with $0\leq a\covers b\leq x$. It follows that $b\wedge a^\perp$ is an atom below $x$. So weakly atomic implies atomic, and since strongly atomic always implies weakly atomic, the three notions are equivalent in \ts{oml}s. (3)~Since any \ts{oml} is self-dual, semimodular and dual semimodular coincide for \ts{oml}s. Semimodularity always implies the covering property. Suppose $L$ is an \ts{oml} with the covering property and $a,b\in L$ with  $a\wedge b \covers a$. Since $[a\wedge b,a]$ is isomorphic to $[0,a\wedge(a^\perp\vee b^\perp)]$, we have that $a\wedge(a^\perp\vee b^\perp)$ is an atom. The covering property gives $b\covers b\vee(a\wedge(a^\perp\vee b^\perp))$, and since $a^\perp\vee b^\perp$ commutes with $a$ and $b$, the Foulis-Holland theorem \cite{KalmbachBook}*{p.~25} gives $b\vee(a\wedge(a^\perp\vee b^\perp))=a\vee b$. So $L$ is semimodular. 
\end{proof}

\begin{rem}
Consider the class $\mc{W}$ of complete, relatively complemented, weakly atomic lattices. It is known \cite{CrawleyDilworth}*{p.~93} that each member of $\mc{W}$ is the product of subdirectly irreducible lattices belonging to $\mc{W}$. It follows that the notions of directly irreducible and subdirectly irreducible coincide for members of $\mc{W}$. The lattice reduct of a completely hereditarily atomic \ts{oml} belongs to $\mc{W}$. Using this fact and the fact that any lattice quotient of an \ts{oml} carries a orthocomplementation making it an \ts{oml} quotient \cite{BrunsHarding}*{Prop.~4.1}, we can obtain an alternative route to Proposition~\ref{prod}.
\end{rem}

We are now ready for our result characterizing algebraic \ts{oml}s with the covering property.

\begin{thm}\label{predolph}
Let $L$ be a directly irreducible, algebraic \ts{oml} with the covering property. Then, $L$ is modular and has finite height.
\end{thm}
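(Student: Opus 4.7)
The plan is to establish modularity first, and then finite height.

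For modularity, I would simply chain propositions already proved in this section. Since $L$ is algebraic, Proposition~\ref{nbj}(3) gives that $L$ is weakly atomic; within the \ts{oml} setting, Proposition~\ref{swr}(2) upgrades this to strongly atomic. The covering property, combined with Proposition~\ref{swr}(3), then yields that $L$ is both semimodular and dual semimodular. The preceding bounded-lattice proposition, in its part~(2), records that algebraic, strongly atomic, semimodular, and dual semimodular together imply modular. Hence $L$ is modular.

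For finite height, I would argue by contradiction. Suppose $L$ has infinite height. Combining modularity with algebraicity, irreducibility, atomicity (Proposition~\ref{xxz} and Proposition~\ref{equiv1}), and completeness, $L$ is a geomodular lattice in the sense of the introduction. For height at most three the conclusion is immediate, so assume the height is at least four. The Birkhoff--Menger coordinatization theorem then yields $L \iso \mc{S}(\mc{V})$ as lattices for some vector space $\mc{V}$ over a division ring $D$. The orthocomplementation on $L$ transfers to an order-inverting period-two map $\perp$ on $\mc{S}(\mc{V})$ with $W \wedge W^\perp = 0$ for every subspace $W$. In particular, $\perp$ restricts to a bijection between the atoms of $\mc{S}(\mc{V})$, namely the one-dimensional subspaces parametrized by $\PP(\mc{V})$, and the coatoms, namely the hyperplanes parametrized by $\PP(\mc{V}^*)$ via kernels of nonzero linear functionals modulo scalars. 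By the Erdos--Kaplansky theorem, whenever $\mc{V}$ is infinite-dimensional one has $\dim \mc{V}^* = |D|^{\dim \mc{V}} > \dim \mc{V}$, so $|\PP(\mc{V}^*)| > |\PP(\mc{V})|$. No such bijection can exist, forcing $\mc{V}$ to be finite-dimensional and $L$ to have finite height.

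The hard part is the finite-height half. The modularity step is a routine chaining of the propositions of this section, whereas the finite-height conclusion rests on the nontrivial inputs of the Birkhoff--Menger coordinatization theorem and the Erdos--Kaplansky cardinality asymmetry between $\mc{V}$ and $\mc{V}^*$. A conceptual alternative is to invoke Kaplansky's theorem directly---a complete orthocomplemented modular lattice is a continuous geometry---together with the classification result that an irreducible atomic continuous geometry has finite height, which bypasses the explicit cardinality computation.
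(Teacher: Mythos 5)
Your modularity argument is exactly the paper's: algebraicity gives weak atomicity (Proposition~\ref{nbj}), hence strong atomicity for an \ts{oml} (Proposition~\ref{swr}); the covering property gives semimodularity and dual semimodularity (Proposition~\ref{swr}); and the Crawley--Dilworth criterion then yields modularity. That half is correct.

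The finite-height half has a genuine gap. After coordinatizing $L \iso \mc{S}(\mc{V})$, you observe that $\perp$ restricts to a bijection between the atoms, parametrized by $\PP(\mc{V})$, and the coatoms, parametrized by $\PP(\mc{V}^*)$, and you rule this out by claiming that $\dim\mc{V}^* = |D|^{\dim\mc{V}} > \dim\mc{V}$ forces $|\PP(\mc{V}^*)| > |\PP(\mc{V})|$. That implication fails: for an infinite-dimensional space the cardinality of its projective space is $\max(\dim,|D|)$, so whenever $|D|^{\dim\mc{V}} = |D|$ the two projective spaces are equinumerous. For instance, a countable-dimensional space over $\RR$ has $|\PP(\mc{V})| = 2^{\aleph_0} = |\PP(\mc{V}^*)|$ even though $\dim\mc{V}^* = 2^{\aleph_0} > \aleph_0 = \dim\mc{V}$, so there is no cardinality obstruction to a bijection between atoms and coatoms. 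The point is that $\perp$ is far more than a bijection: being a lattice anti-automorphism, its restriction to atoms is a collineation $\PP(\mc{V}) \to \PP(\mc{V}^*)$, and the fundamental theorem of projective geometry (applicable since the height is at least four) upgrades this to a semilinear bijection $\mc{V} \to \mc{V}^*$, forcing $\dim\mc{V} = \dim\mc{V}^*$; only then does Erd\"os--Kaplansky give the contradiction. This repaired argument is essentially the content of the self-duality theorem in \cite{Baer}*{Ch.~4}, which the paper simply cites at this step rather than reproving. Your closing alternative via Kaplansky's continuous-geometry theorem is sound and is precisely the route sketched in the remark following the paper's proof (all atoms are perspective, hence have equal positive dimension, so no infinite orthogonal set of atoms exists); had you led with that, the proof would be complete. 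The paper also gives a second, form-theoretic proof in Section~4, using Hermitian spaces and the fact that the difference vectors $f_\alpha - f_{\alpha+1}$ of an infinite orthogonal basis span a dense but proper subspace.
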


\begin{proof}
Since $L$ is algebraic, it is atomic, and so by Proposition~\ref{swr}, it is strongly atomic. Since $L$ has the covering property, by  Proposition~\ref{swr}, it is semimodular and dually semimodular. Any lattice that is algebraic, strongly atomic, semimodular, and dually semimodular is modular \cite{CrawleyDilworth}*{p.~24}. 
Thus $L$ is modular. Since $L$ is an irreducible, algebraic, complemented modular lattice, if it has height at least four, it is isomorphic to the subspace lattice of a vector space. Since $L$ is orthocomplemented, it is self-dual, so by \cite[Ch.~4]{Baer}, $L$ has finite height.
\end{proof}

In the above proof, rather than use Baer's result, we can use Kaplansky's theorem \cite{KalmbachBook}*{p.~179} that any irreducible, complete modular \ts{ol} is a continuous geometry, and hence has a dimension function with range $[0,1]$. Since $L$ is algebraic, it is atomic, and any two atoms $x,y$ are perspective, meaning that here is an atom $z$ with $x\vee z = y\vee z$ \cite{KalmbachBook}*{p.~209}. It follows that any two distinct atoms have the same dimension. Since the dimension function is finite, $L$ cannot have an infinite pairwise orthogonal set of atoms, so $L$ is of finite height. 

\begin{cor}\label{dolph}
Algebraic \ts{oml}s with the covering property are exactly the products of irreducible finite-height modular \ts{ol}s. 
\end{cor}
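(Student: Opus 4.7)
The plan is to derive Corollary~\ref{dolph} by combining Theorem~\ref{predolph} with the direct-product decomposition given by Proposition~\ref{prod} and the closure properties recorded in Propositions~\ref{nbj} and~\ref{swr}; no new structural work is required.

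For the forward direction, suppose $L$ is an algebraic \ts{oml} with the covering property. Since $L\in\sf{ALG}$, Proposition~\ref{prod} yields an isomorphism $L\cong\prod_{i\in I} L_i$ in which each $L_i$ is a directly irreducible algebraic \ts{oml} realized as an interval $[0,c_i]$ for a central element $c_i\in L$. I would first verify that each factor inherits the covering property. An atom $a$ of $[0,c_i]$ is also an atom of $L$: if $0<b\leq a$ in $L$, then $b\leq c_i$, so $b\in[0,c_i]$, and atomicity there forces $b=a$. Moreover, joins in $[0,c_i]$ agree with joins in $L$, so for $x\in[0,c_i]$ the interval $[x,x\vee a]$ is the same computed in $[0,c_i]$ or in $L$ and therefore has height at most one. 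Theorem~\ref{predolph} then applies to each $L_i$, showing that $L_i$ is modular of finite height, hence an irreducible finite-height modular \ts{ol}.

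For the converse, suppose $L=\prod_{i\in I} L_i$ with each $L_i$ an irreducible finite-height modular \ts{ol}. Every modular \ts{ol} is an \ts{oml}: for $x\leq y$, the modular law with parameter $x^\perp$ gives $x\vee(x^\perp\wedge y)=(x\vee x^\perp)\wedge y=y$. Hence each $L_i$, and therefore $L$, is an \ts{oml}. A finite-height lattice is chain-finite, and every chain-finite lattice is algebraic, so each $L_i$ is algebraic; by Proposition~\ref{nbj}(1), $L\in\sf{ALG}$. Finally, every atom $a$ of $\prod_i L_i$ is supported on a single coordinate $i$ with $a_i$ an atom of $L_i$, so for $x\in\prod_i L_i$ the interval $[x,x\vee a]$ is isomorphic to $[x_i,x_i\vee a_i]$, which by Proposition~\ref{swr}(3) has height at most one, since modularity of $L_i$ gives semimodularity and hence the covering property.

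The only delicate points are the two preservation claims for the covering property---under passage to central intervals (forward direction) and under arbitrary products (converse)---and both reduce to the observation that an atom's covering behaviour is controlled entirely within the single factor in which it lives. With those preservation statements in hand, Theorem~\ref{predolph} carries all the structural content of the corollary.
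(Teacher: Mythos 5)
Your proposal is correct and follows essentially the same route as the paper: the forward direction via the central decomposition of Proposition~\ref{prod} followed by Theorem~\ref{predolph}, and the converse via chain-finiteness, closure of $\sf{ALG}$ under products, and preservation of the covering property under products. The paper states the two preservation facts for the covering property without proof; your explicit verification that an atom's covering behaviour is confined to a single central factor is exactly the detail being elided.
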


\begin{proof}
Every finite-height \ts{ol} is algebraic, and every modular lattice has the covering property. So every finite height modular \ts{ol} is algebraic and has the covering property. The class of algebraic \ts{oml}s is closed under products by Proposition~\ref{sre} and the product of lattices with the covering property also has the covering property. This provides one containment. The other containment follows from Proposition~\ref{prod} and Theorem~\ref{predolph}.
\end{proof}

\begin{rem}
An alternative proof of Theorem~\ref{predolph} and hence of Corollary~\ref{dolph} will be given in Section~4.
\end{rem}
   
\section{Kalmbach's construction}

In this section we construct further examples of completely hereditarily atomic \ts{oml}s via a method called the Kalmbach construction. This method was originally proposed by Kalmbach \cite{KalmbachOriginalConstruction} and used to show that there are no special lattice identities holding in the variety of \ts{oml}s. The construction was used in \cite{HardingKalmbachMacNeille} to discuss completions of \ts{oml}s. In \cite{HardingConcrete} it was used to investigate properties of concrete \ts{oml}s, and it was also shown that the Kalmabach construction was adjoint to the forgetful functor from the category of \ts{omp}s to the category of bounded posets. Jen\v{c}a \cite{Jenca} showed that the Eilenberg-Moore algebras of this adjunction are exactly the effect algebras \cite{EffectAlgebras}. We assume familiarity with the Kalmbach construction as in \cites{HardingKalmbachMacNeille,HardingConcrete} but recall several points. 

\begin{defn}
For $L$ a bounded lattice, let $\mc{K}(L)$ be the set of all even-length strictly increasing sequences $x$ in $L$. Let $\ell(x)$ be half of the length of $x$, and denote $x$ by
$$x: x_0<x_1<\cdots<x_{2\ell(x)-2}<x_{2\ell(x)-1}.$$ 
Define a binary relation $\sqsubseteq$ on $\mc{K}(L)$ by setting
$x\sqsubseteq y$ iff for each $i<\ell(x)$ there is $j<\ell(y)$ with $y_{2j}\leq x_{2i}<x_{2i+1}\leq y_{2j+1}$. Finally, define a unary operation $(\,\cdot\,)^\perp$ on $\mc{K}(L)$ by letting $x^\perp$ be the increasing  sequence whose set of terms is the symmetric difference of the set of terms of $x$ and $\{0,1\}$. Thus, $x^\perp$ is formed from $x$ by inserting $0$ into the sequence if it is not in $x$ and removing it if it is in $x$ and doing likewise with $1$.
\end{defn}

The motivation behind this construction is to ``paste together'' the Boolean algebras generated by the bounded chains of $L$. In \cite{BalbesDwinger}*{p.~106} $\mc{B}(C)$ is constructed as the ring of subsets of $C\setminus\{1\}$ generated by half-open intervals $[x,y)$ and this leads directly to the representations in terms of even-length sequences. The following is found in \cites{HardingKalmbachMacNeille}:

\begin{thm}\label{blocks}
Let $L$ be a bounded lattice. Then $\mc{K}(L)$ is an \ts{oml}, and blocks $B \subseteq \mc{K}(L)$ are in bijective correspondence with maximal chains $C \subseteq L$ via $C \mapsto \mc{K}(C)$.
\end{thm}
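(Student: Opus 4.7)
The plan is to prove the two claims in sequence: first that $\mc{K}(L)$ is an \ts{oml}, and then that its blocks are precisely the subsets $\mc{K}(C)$ for $C$ a maximal chain of $L$. The unifying idea is to reduce everything to the chain case. When $L$ is itself a chain, I would identify $\mc{K}(L)$ with the Boolean ring of finite unions of half-open intervals $[c, c') \subseteq L$: an even-length sequence $x_0 < x_1 < \cdots < x_{2\ell-1}$ encodes the set $\bigcup_{i} [x_{2i}, x_{2i+1})$. Under this bijection, $\sqsubseteq$ corresponds to inclusion and $(\cdot)^\perp$ corresponds to the complement inside $[0, 1)$, with the toggling of $0$ and $1$ in $x^\perp$ matching the fact that complementing a union of half-open intervals produces a new union of the same form whose endpoints differ from the original precisely by whether $0$ or $1$ should be included. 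Thus $\mc{K}(C)$ is a Boolean algebra for every chain $C$ containing the bounds.

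Next I would verify that $\sqsubseteq$ is a partial order on $\mc{K}(L)$ for arbitrary bounded $L$: reflexivity and transitivity are immediate from the definition, and antisymmetry holds because the nesting condition forces the two sequences to partition $[0,1)$ identically. The critical observation is that whenever $x \sqsubseteq y$ in $\mc{K}(L)$, the union of their terms is linearly ordered, since each interval $[x_{2i}, x_{2i+1}]$ sits inside some interval $[y_{2j}, y_{2j+1}]$. Consequently any two comparable elements of $\mc{K}(L)$ lie in a common $\mc{K}(C)$ for some chain $C \subseteq L$, and inside that $\mc{K}(C)$ the induced $\sqsubseteq$ and $(\cdot)^\perp$ agree with the Boolean structure. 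From this, the ortholattice axioms and the existence of meets and joins in $\mc{K}(L)$ can all be established by reduction to computations inside various $\mc{K}(C)$; in particular the orthomodular law $x \sqsubseteq y \Rightarrow x \vee (x^\perp \wedge y) = y$ follows by working inside a Boolean $\mc{K}(C)$ that contains both $x$ and $y$.

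For the block correspondence, in one direction I would fix a maximal chain $C$ of $L$ and show that $\mc{K}(C)$ is a block. It is Boolean by the first paragraph, so it suffices to check maximality among Boolean subalgebras. If $B \supseteq \mc{K}(C)$ is Boolean and $x \in B$, then $x$ commutes in $\mc{K}(L)$ with every two-term element $(c_1, c_2)$ drawn from $C$; this forces each term of $x$ to be comparable in $L$ with every element of $C$, and by maximality of $C$ the terms of $x$ must already lie in $C$, so $x \in \mc{K}(C)$. In the opposite direction, given a block $B$ of $\mc{K}(L)$, let $C_B$ be the union of $\{0, 1\}$ with the set of all terms appearing in all sequences of $B$. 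Pairwise commutation of elements of $B$ ensures $C_B$ is linearly ordered, hence a chain, with $B \subseteq \mc{K}(C_B)$; since $\mc{K}(C_B)$ is Boolean, maximality of $B$ forces $B = \mc{K}(C_B)$, and any strict chain extension $C' \supsetneq C_B$ would give a Boolean extension $\mc{K}(C') \supsetneq B$, contradicting maximality.

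The main technical obstacle is the explicit characterization of commutation in $\mc{K}(L)$: two elements $x, y$ commute if and only if the union of their sets of terms is a chain in $L$. This is what reduces the \ts{oml} axioms to Boolean computations in each $\mc{K}(C)$ and what drives both directions of the block bijection; once it is in hand the rest of the proof amounts to routine bookkeeping with the interval-nesting condition defining $\sqsubseteq$.
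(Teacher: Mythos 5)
The paper does not prove this theorem itself; it cites \cite{HardingKalmbachMacNeille}, and the closest in-paper analogue of your argument is the treatment of $\mc{K}^*(L)$ in Section~3 (Lemma~\ref{xdr}, Corollary~\ref{who}), which your outline mirrors in spirit: reduce to the Boolean algebras $\mc{K}(C)$ via the characterization of commutation by ``the terms form a chain.'' That architecture is sound for the orthocomplementation axioms, for the orthomodular law, and for the block correspondence, and you are right that the commutation criterion is the load-bearing lemma (its nontrivial direction needs something like the commutator computation the paper gives in Lemma~\ref{xdr}; merely flagging it as an obstacle is acceptable in a sketch).

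The genuine gap is your claim that ``the existence of meets and joins in $\mc{K}(L)$ can all be established by reduction to computations inside various $\mc{K}(C)$.'' That reduction only applies to pairs whose terms jointly form a chain. Two non-commuting elements of $\mc{K}(L)$ lie in no common $\mc{K}(C)$ whatsoever -- e.g.\ for $L$ the four-element Boolean lattice with incomparable atoms $a,b$, the sequences $0<a$ and $0<b$ share no chain, yet one must still exhibit their least upper bound (which turns out to be $1$). Showing that an arbitrary pair has a \emph{least} upper bound is the hardest part of proving $\mc{K}(L)$ is a lattice, and it requires the explicit recursive construction of the join from \cite{HardingKalmbachMacNeille}, which this paper itself invokes in Lemma~\ref{dvd}(3) rather than reproves. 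A related, smaller omission: even for comparable $x\sqsubseteq y$, verifying the orthomodular law inside $\mc{K}(C)$ presupposes that the meets and joins computed in $\mc{K}(C)$ agree with those in $\mc{K}(L)$; this is exactly the content of Lemma~\ref{dvd}(3) and needs to be stated and proved, not assumed.
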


\begin{prop}\label{cde}
The atoms of $\mc{K}(L)$ are the covers $a \covers b$ in $L$. Further, 
\begin{enumerate}
\item $\mc{K}(L)$ is atomic iff $L$ is weakly atomic,
\item every block of $\mc{K}(L)$ is atomic iff every maximal chain in $L$ is weakly atomic.
\end{enumerate}
\end{prop}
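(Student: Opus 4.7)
The plan is to first identify the atoms of $\mc{K}(L)$ explicitly as the covers of $L$, then deduce (1) directly from this identification, and finally obtain (2) by applying (1) to each maximal chain via Theorem~\ref{blocks}.

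For the atom identification, I would examine a length-two sequence $(a,b)$ with $a<b$ in $L$ and unpack $\sqsubseteq$: if $y\sqsubseteq (a,b)$ then every pair $(y_{2i},y_{2i+1})$ of $y$ must satisfy $a\leq y_{2i}<y_{2i+1}\leq b$, so all terms of $y$ lie in $[a,b]$. If $a\covers b$ then $[a,b]=\{a,b\}$ and the only nonzero such $y$ is $(a,b)$ itself, so $(a,b)$ is an atom. Conversely, any $c$ with $a<c<b$ gives $(a,c)\sqsubsetneq (a,b)$, and any sequence of length at least four has $(x_0,x_1)$ as a nonzero strict predecessor; so no non-cover, and no longer sequence, can be an atom.

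For (1), suppose $L$ is weakly atomic and let $x$ be nonzero in $\mc{K}(L)$. Then $x_0<x_1$ and weak atomicity in the interval $[x_0,x_1]$ produces a cover $a\covers b$ with $x_0\leq a<b\leq x_1$; then $(a,b)$ is an atom beneath $x$. Conversely, if $\mc{K}(L)$ is atomic and $u<v$ in $L$, then $(u,v)$ is nonzero, so it has an atom beneath it, which by the atom description is a cover $a\covers b$ with $u\leq a<b\leq v$, witnessing weak atomicity of $[u,v]$.

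For (2), Theorem~\ref{blocks} tells us that every block of $\mc{K}(L)$ is $\mc{K}(C)$ for a maximal chain $C$ of $L$, and applying (1) to the bounded lattice $C$ gives that $\mc{K}(C)$ is atomic iff $C$ is weakly atomic; quantifying over all blocks resp.\ all maximal chains then yields the equivalence. A small point worth recording is that the atoms of $\mc{K}(C)$, viewed as a lattice in its own right, coincide with its atoms as a subalgebra of $\mc{K}(L)$: because $C$ is maximal, any cover in $C$ is also a cover in $L$ (an intermediate $a<c<b$ in $L$ would be comparable to every element of $C$ and so extend it, contradicting maximality), and conversely the covers in $L$ sitting in $\mc{K}(C)$ are covers of $C$. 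The only mildly delicate step in the whole argument is parsing $\sqsubseteq$ in the atom identification; everything else is bookkeeping.
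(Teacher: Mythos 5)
Your proposal is correct and follows essentially the same route as the paper: identify the atoms of $\mc{K}(L)$ as the covers of $L$ by unpacking $\sqsubseteq$, read off claim (1) from the observation that an atom $a\covers b$ lies beneath $x$ exactly when some interval $[x_{2i},x_{2i+1}]$ contains that cover, and deduce claim (2) by applying (1) to each maximal chain via Theorem~\ref{blocks}. You simply supply more detail than the paper (which dismisses the atom identification as ``clear''), and your closing remark about covers in a maximal chain being covers in $L$ is a harmless extra.
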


\begin{proof}
Clearly the atoms of $\mc{K}(L)$ are the covers in $L$. An element $x$ lies above the atom $a \covers b$ iff there is an index $i<\ell(x)$ with the interval $[x_{2i},x_{2i+1}]$ containing this cover. So $\mc{K}(L)$ is atomic iff every non-trivial interval in $L$ contains a cover, i.e., iff $L$ is weakly atomic. This proves claim (1). Claim (2) follows immediately from claim (1) and Theorem~\ref{blocks} because every maximal chain $C \subsetof L$ is itself a bounded lattice.
\end{proof}

The following result is a corollary of \cite{HardingKalmbachMacNeille}*{Prop.~2.1}, noting that any complete lattice $L$ satisfies the condition identified there as ($\dagger$). We write $\ol{\mc{K}(L)}$ for the MacNeille completion, or the completion by cuts, of $\mc{K}(L)$. For details on MacNeille completions, see \cite{HardingKalmbachMacNeille}.

\begin{thm}\label{cdf}
If $L$ is a complete lattice, then $\ol{\mc{K}(L)}$ is an \ts{oml}. 
\end{thm}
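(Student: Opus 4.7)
The plan is to reduce the theorem to Proposition~2.1 of \cite{HardingKalmbachMacNeille}, which guarantees that $\ol{\mc{K}(L)}$ is an \ts{oml} whenever $L$ satisfies the technical hypothesis ($\dagger$) described there. The substantive content then amounts to checking that every complete $L$ satisfies ($\dagger$). Since ($\dagger$) is a chain-completeness condition asserting that the joins and meets needed to realise MacNeille cuts as sequences in $\mc{K}(L)$ already exist in $L$, and a complete lattice has all joins and meets, this verification is immediate: every bounded chain in $L$ admits the relevant suprema and infima, so the cited proposition applies.

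For readers who wish to see why ($\dagger$) suffices, I would sketch the underlying argument as follows. It is standard that the MacNeille completion of any ortholattice is an ortholattice, and $\mc{K}(L)$ is an \ts{oml} by Theorem~\ref{blocks}, so $\ol{\mc{K}(L)}$ is at least an \ts{ol}. By \cite{BrunsHarding}*{Prop.~3.2}, it then suffices to show that any comparable pair $x\leq y$ in $\ol{\mc{K}(L)}$ lies in a common block. Theorem~\ref{blocks} identifies the blocks of $\mc{K}(L)$ with $\mc{K}(C)$ for $C$ a maximal chain of $L$. When $L$ is complete, each such $C$ is complete as a sublattice of $L$, and $\mc{K}(C)$ is a complete Boolean subalgebra of $\mc{K}(L)$; condition ($\dagger$) is precisely what is needed to extend each such $\mc{K}(C)$ to a Boolean subalgebra of $\ol{\mc{K}(L)}$ whose union covers all cuts. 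Comparable elements of $\ol{\mc{K}(L)}$ then share a block-extension, so the orthomodular law holds.

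The only obstacle is the check of ($\dagger$) itself, and for complete $L$ this is trivial. Thus the proof in the paper should consist of essentially one sentence: observe completeness of $L$ gives ($\dagger$), and appeal to \cite{HardingKalmbachMacNeille}*{Prop.~2.1}.
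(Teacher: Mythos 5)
Your proposal matches the paper's own treatment: the paper proves Theorem~\ref{cdf} exactly by observing that every complete lattice satisfies the condition ($\dagger$) of \cite{HardingKalmbachMacNeille}*{Prop.~2.1} and then citing that proposition, with no further argument given. Your additional sketch of why ($\dagger$) suffices is extra commentary not present in the paper, but the core reduction is identical.
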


\begin{ex}
We give an example of a complete lattice $L$ where $\ol{\mc{K}(L)}$ is a complete atomic \ts{oml} that has a block with no atoms. We first construct a complete lattice $L$ where every non-trivial interval contains a cover, yet has a maximal chain with no covers. Let $[0,1]$ be the real unit interval and set $L=[0,1]\cup\{x_{\lambda_1,\lambda_2}: 0\leq\lambda_1<\lambda_2\leq 1\}$. Define a relation $\leq$ on $L$ to be the union of the linear ordering on $[0,1]$, the identity relation on $L$, the sets $\{(\lambda,x_{\lambda_1,\lambda_2}):\lambda\leq\lambda_1\}$ and $\{(x_{\lambda_1,\lambda_2},\lambda):\lambda_2\leq\lambda\}$ for $\lambda\in[0,1]$, and the set $\{(x_{\lambda_1, \lambda_2}, x_{\lambda_1', \lambda_2'}) : \lambda_2 \leq \lambda_1'\}$. It is easy to verify that this is a partial ordering on $L$ and that $L$ is a complete lattice. The chain $[0,1]$ is a maximal chain in $L$ without covers, and as each $\lambda_1\covers x_{\lambda_1,\lambda_2}$ and $x_{\lambda_1,\lambda_2}\covers \lambda_2$ are covers, it follows that every non-trivial interval in $L$ contains a cover.

By Proposition~\ref{cde}, $\mc{K}(L)$ is atomic but has a block $B$ without atoms. By Theorem~\ref{cdf}, $\ol{\mc{K}(L)}$ is a complete \ts{oml}. It is atomic because $\mc{K}(L)$ is atomic and MacNeille completions are join dense. The block $B$ of $\mc{K}(L)$ is a Boolean subalgebra of $\ol{\mc{K}(L)}$ and hence extends to a block $B'$ of $\ol{\mc{K}(L)}$. An atom of $B'$ is an atom of $\ol{\mc{K}(L)}$ and hence is an atom of $\mc{K}(L)$ since MacNeille completions are join dense. But there are no atoms of $\mc{K}(L)$ that commute with all elements of $B$, so there are none that commute with all elements of $B'$. Thus $B'$ is an atomless block.
\end{ex}

We next use $\ol{\mc{K}(L)}$ to produce examples of completely hereditarily atomic \ts{oml}s with various additional features. We restrict our attention to working with bounded lattices $L$ where every maximal chain is finite or has order type $\omega+1$, i.e. it is isomorphic to $\mathbb{N}\cup\{\infty\}$. 

\begin{defn}
An \emph{$(\omega+1)$-lattice} is a lattice $L$ where every maximal chain is either finite or has order type $\omega+1$. A strictly increasing sequence $x$ in $L$ is \emph{admissible} if it is either finite and of even length or is infinite. Let $\mc{K}^*(L)$ be the set of admissible sequences in $L$. 
\end{defn}

For $x\in\mc{K}^*(L)$ we let $\ell(x)$ be half the length of $x$ if $x$ is finite and be $\omega$ if $x$ is infinite. Then, for any admissible sequence $x$, we have that $\{x_{2k},x_{2k+1}:k<\ell(x)\}$ is the set of its elements. 

\begin{defn}
If $L$ is an $(\omega+1)$-lattice, define a binary relation $\sqsubseteq$ on $\mc{K}^*(L)$ by setting $x\sqsubseteq y$ if for each $i<\ell(x)$ there is $j<\ell(y)$ with $y_{2j}\leq x_{2i}<x_{2i+1}\leq y_{2j+1}$. For $x\in\mc{K}^*(L)$, define $x^\perp$ as before if $\ell(x)$ is finite. If $\ell(x)=\omega$ set 
\[ x^\perp = \begin{cases} x_1<x_1<\cdots&\mbox{if }x_0=0\\ \,0\,<\, x_0<\cdots&\mbox{if }x_0\neq 0 \end{cases} \]
We let $0$ and $1$ be as in $\mc{K}(L)$, the empty sequence and the sequence $0<1$, respectively. 
\end{defn}

\pagebreak[3]

\begin{lem}\label{dvd}
Let $L$ be an $(\omega+1)$-lattice. Then, 
\begin{enumerate}
\item $\mc{K}^*(L)$ is a poset under $\sqsubseteq$ with bounds $0$ and $1$,
\item $\mc{K}(L)$ is a subposet of $\mc{K}^*(L)$, 
\item the join of any two elements in $\mc{K}(L)$ is their join in $\mc{K}^*(L)$. 
\end{enumerate}
\end{lem}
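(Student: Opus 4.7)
The plan is to prove the three items in order, since (1) supplies the transitivity that (3) will need.

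For (1), I would verify reflexivity, transitivity, and antisymmetry directly from the definition of $\sqsubseteq$. Reflexivity is immediate (take $j=i$), and transitivity follows by composing witnesses: if $x\sqsubseteq y$ and $y\sqsubseteq w$, then each $[x_{2i},x_{2i+1}]$ sits inside some $[y_{2j},y_{2j+1}]$, which sits inside some $[w_{2k},w_{2k+1}]$. For antisymmetry, assume $x\sqsubseteq y$ and $y\sqsubseteq x$; then $[x_0,x_1]$ sits in some $[y_{2j},y_{2j+1}]$ which sits in some $[x_{2i},x_{2i+1}]$, and because the strictly increasing sequence $x$ has pairwise disjoint intervals separated by $x_{2i+1}<x_{2i+2}$, we must have $i=0$ and $[x_0,x_1]=[y_{2j},y_{2j+1}]$; a symmetric chase forces $j=0$, and induction on the tails gives $x=y$. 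The bounds are then immediate: the empty sequence sits below everything vacuously, and $0<1$ lies above every $x$ since $0\leq x_{2i}<x_{2i+1}\leq 1$ for each $i$.

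For (2), the definition of $\sqsubseteq$ on $\mc{K}^*(L)$ is verbatim the one on $\mc{K}(L)$, so the restriction of the larger order to $\mc{K}(L)$ agrees with the order on $\mc{K}(L)$, and the bounds coincide, making $\mc{K}(L)$ a subposet of $\mc{K}^*(L)$.

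For (3), let $x,y\in\mc{K}(L)$ and set $z=x\vee y$ as computed inside the \ts{oml} $\mc{K}(L)$ (existence by Theorem~\ref{blocks}). Clearly $z$ is an upper bound of $x,y$ in $\mc{K}^*(L)$ by (2); I must show it is the least. Take $w\in\mc{K}^*(L)$ with $x,y\sqsubseteq w$. If $w$ is finite then $w\in\mc{K}(L)$ and $z\sqsubseteq w$ follows because $z$ is the join in $\mc{K}(L)$. The substantive case is $\ell(w)=\omega$, and here my plan is to truncate $w$ to a finite upper bound in $\mc{K}(L)$. Let $J$ be the set of indices $j<\omega$ such that $[w_{2j},w_{2j+1}]$ contains some interval of $x$ or of $y$. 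Since $x$ and $y$ are finite sequences, $J$ is finite. Let $w'$ be the even-length strictly increasing sequence whose terms are $\{w_{2j},w_{2j+1}:j\in J\}$. Then $w'\in\mc{K}(L)$, and $x,y\sqsubseteq w'$ by the choice of $J$, so $z\sqsubseteq w'$. Since $w'\sqsubseteq w$ (each interval of $w'$ is literally an interval of $w$), transitivity from (1) gives $z\sqsubseteq w$, as required.

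The main obstacle is the truncation step in (3): extracting a finite upper bound from a potentially infinite one while verifying that $w'$ is a well-defined element of $\mc{K}(L)$ and inherits the upper bound property from $w$. Once that is spelled out, together with transitivity from (1), the rest of the lemma is bookkeeping.
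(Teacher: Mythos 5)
Your proof is correct. Parts (1) and (2) follow the same route as the paper: antisymmetry via the pairwise disjointness of the intervals of a strictly increasing sequence, and (2) by inspection of the definitions. Part (3) is where you genuinely diverge. The paper appeals to the explicit recursive construction of joins in $\mc{K}(L)$ given in the cited earlier work and observes that the construction and its proof go through unchanged in $\mc{K}^*(L)$; your argument instead treats the join in $\mc{K}(L)$ as a black box and handles an arbitrary (possibly infinite) upper bound $w$ by truncating it to the finite subsequence $w'$ consisting of those intervals $[w_{2j},w_{2j+1}]$ that actually absorb an interval of $x$ or of $y$. Since each nondegenerate interval of $x$ or $y$ lands in at most one interval of $w$ (the intervals of $w$ being pairwise disjoint), the index set $J$ is finite, $w'\in\mc{K}(L)$ is an upper bound of $x$ and $y$, and $z\sqsubseteq w'\sqsubseteq w$ closes the argument by transitivity. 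What your approach buys is self-containment: you need only the existence of joins in $\mc{K}(L)$ (from Theorem~\ref{blocks}), not the details of how they are computed. What the paper's approach buys is that the same recursive construction is reused later (e.g., in the proof of Proposition~\ref{fredo}), so invoking it here costs nothing extra. Either way the lemma is established.
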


\begin{proof}
(1)~It follows from the definition that $\sqsubseteq$ is reflexive and transitive. If $x\sqsubseteq y\sqsubseteq x$, then for each $i$ there is a $j$ and then a $k$ with $[x_{2i},x_{2i+1}]\subseteq [y_{2j},y_{2j+1}]\subseteq [x_{2k},x_{2k+1}]$. But $i=k$, as otherwise these intervals would be disjoint. So each interval of $x$ is an interval of $y$ and vice versa. Thus $x=y$. So $\sqsubseteq$ is anti-symmetric and hence a partial order. Clearly $0$ and $1$ are the lower and upper bounds, respectively. (2)~The form of the definitions provides that the partial order of $\mc{K}(L)$ is the restriction of that of $\mc{K}^*(L)$. (3)~A recursive construction of the join of two elements of $\mc{K}(L)$ is given in \cite{HardingKalmbachMacNeille}. This construction and its proof shows that this is also their join in $\mc{K}^*(L)$. 
\end{proof}

By \cite{BalbesDwinger}*{p.~106}, there is a Boolean algebra embedding of the free Boolean extension $\mc{K}(C)$ of a bounded chain $C$ into the powerset $\mc{P}(C\setminus\{1\})$ taking the sequence $x$ to the union of half-open intervals $\bigcup\{[x_{2n},x_{2n+1}):n<\ell(x)\}$. We extend this to $\mc{K}^*(L)$ by allowing possibly infinite unions. 

\begin{prop}\label{bnm}
Suppose $C$ is an $(\omega+1)$-lattice that is a chain. Then there is an order-isomorphism $\Phi$ from $\mc{K}^*(C)$ to the powerset $\mc{P}(C\setminus\{1\})$ that preserves orthocomplementation:
\[\Phi(x) = \bigcup\{[x_{2n},x_{2n+1}):n<\ell(x) \}.\]
Thus $\mc{K}^*(C)$ is a complete atomic Boolean algebra and is equal to $\ol{\mc{K}(C)}$. 
\end{prop}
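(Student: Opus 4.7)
The plan is to prove that $\Phi$ is a bijection preserving the order and the orthocomplementation, and then to deduce from this together with the properties of the powerset that $\mc{K}^*(C)$ is a complete atomic Boolean algebra equal to $\ol{\mc{K}(C)}$. The map is well-defined because each half-open interval $[x_{2n},x_{2n+1})$ excludes $1$. To invert $\Phi$, I would use that $C$ is well-ordered, being of order type finite or $\omega+1$: given $S\subseteq C\setminus\{1\}$, set $x_0=\min S$, $x_1=\min\{c\in C\setminus S : c>x_0\}$ (which exists because $1\in C\setminus S$ always lies above $x_0$), $x_2=\min\{c\in S : c>x_1\}$, and alternate. This process either terminates with an even-length sequence (only the search for an $S$-element can fail) or runs forever, in either case producing an admissible sequence whose image under $\Phi$ is $S$. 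Since this sequence is forced by $S$, $\Phi$ is bijective.

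Order preservation is immediate from the formula. For order reflection, assume $\Phi(x)\subseteq\Phi(y)$ and fix $i<\ell(x)$. The interval $[x_{2i},x_{2i+1})$ is order-convex in $C$ and contained in the disjoint union $\bigcup_j[y_{2j},y_{2j+1})$ of order-convex intervals. If it met two of these, indexed by $j<j'$, then picking witnesses in each intersection and using $y_{2j+1}\leq y_{2j'}$ would place $y_{2j+1}$ in $[x_{2i},x_{2i+1})\subseteq\Phi(y)$; yet $y_{2j+1}$ is missing from every $[y_{2k},y_{2k+1})$, a contradiction. So $[x_{2i},x_{2i+1})$ lies inside a single $[y_{2j},y_{2j+1})$, and thus $x\sqsubseteq y$. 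Orthocomplementation reduces to a direct case check distinguishing whether $x_0=0$ and, for finite $x$, whether $1$ is a term of $x$; in every case the alternating half-open decomposition yields $\Phi(x^\perp)=(C\setminus\{1\})\setminus\Phi(x)$.

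Once $\Phi$ is established as an order- and orthocomplement-preserving bijection, $\mc{K}^*(C)$ inherits the complete atomic Boolean algebra structure from $\mc{P}(C\setminus\{1\})$. For the identification with $\ol{\mc{K}(C)}$, observe that every $c\in C\setminus\{1\}$ has an immediate successor $c'\in C$ because $C$ is finite or of order type $\omega+1$; hence the singleton $\{c\}=[c,c')$ equals $\Phi((c,c'))$ and so lies in the image of $\mc{K}(C)$. Every $S\in\mc{P}(C\setminus\{1\})$ is a union of such singletons, so $\mc{K}(C)\hookrightarrow\mc{K}^*(C)$ is join-dense, and by orthocomplementation meet-dense, identifying $\mc{K}^*(C)$ as the MacNeille completion of $\mc{K}(C)$. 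The main obstacle is the order-reflection step: one must exploit the specific structure of admissible sequences to see that the boundary points $y_{2j+1}$ are conspicuously missing from $\Phi(y)$, which is what forces any order-convex subset of $\Phi(y)$ into a single half-open piece.
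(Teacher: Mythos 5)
Your proposal is correct and follows essentially the same route as the paper's proof: order reflection via the key observation that the right endpoints $y_{2j+1}$ are absent from $\Phi(y)$, and the MacNeille identification via join- and meet-density of $\mc{K}(C)$ using the singleton atoms $[c,c')$. The only difference is that you spell out the inverse construction for surjectivity, which the paper dismisses as straightforward.
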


\begin{proof}
It follows from the definition of $\sqsubseteq$ that $\Phi$ is order preserving. To see that it is an order embedding, suppose $\Phi(x)\subseteq\Phi(y)$, and let $n<\ell(x)$. Then $x_{2n}\in\Phi(x)\subseteq\Phi(y)$ so there is some $m<\ell(y)$ with $x_{2n}$ belonging to the half-open interval $[y_{2m},y_{2m+1})$, hence with $y_{2m}\leq x_{2n}<y_{2m+1}$. Since $y_{2m+1}\not\in\Phi(y)$, it cannot be the case that $y_{2m+1}<x_{2n+1}$ since this would give $y_{2m+1}$ is in $[x_{2n},x_{2n+1})$ and hence in $\Phi(x)$. As $C$ is a chain, we have $x_{2n+1}\leq y_{2m+1}$. Therefore $y_{2m}\leq x_{2n}<x_{2n+1}\leq y_{2m+1}$. This shows that $x\sqsubseteq y$. So $\Phi$ is an order embedding. 

It is straightforward to show that $\Phi$ is onto and that $\Phi(x^\perp)$ is the set-theoretic complement of $\Phi(x)$. So $\Phi$ is an order-isomorphism preserving $\perp$. Since the powerset $\mc{P}(C\setminus\{1\})$ is a complete atomic Boolean algebra, so is $\mc{K}^*(C)$. 
Clearly $\mc{K}(C)$ is a Boolean subalgebra of $\mc{K}^*(C)$ and every atom of $\mc{K}^*(C)$ belongs to $\mc{K}(C)$. It follows that $\mc{K}^*(C)$ is the MacNeille completion of $\mc{K}(C)$. 
\end{proof}

For $x, y \in \K^*(L)$, let $x\cup y$ be the elements that appear as terms of $x$ or $y$, and for a family of sequences $(x^i)_{i \in I}$, let $\bigcup_{i \in I}x^i$ be the elements that appear as a term of $x^i$ for some $i\in I$.

\begin{cor}\label{ooo}
Suppose $C$ is an $(\omega+1)$-lattice that is a chain. If $(x^i)_{i \in I}$ is a family of elements in $\mc{K}^*(C)$ with join $x$, then each term in $x$ belongs to $\bigcup_{i \in I}x^i\cup\{0,1\}$. 
\end{cor}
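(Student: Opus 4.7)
The plan is to transfer the problem to sets via the order-isomorphism $\Phi$ of Proposition~\ref{bnm}. Since $\Phi$ is an isomorphism, the join $x = \bigvee_{i \in I} x^i$ satisfies $\Phi(x) = \bigcup_{i \in I} \Phi(x^i)$, and the task reduces to showing that every term of $x$ lying in $C \setminus \{0, 1\}$ is already a term of some $x^i$.

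The key structural observation is that, because $C$ is a chain and an $(\omega+1)$-lattice, it has order type either finite or $\omega+1$. Consequently, every element $t \in C \setminus \{0, 1\}$ has an immediate predecessor $t^- \in C$. Moreover, the terms of an admissible sequence are detected by the half-open interval structure of its image under $\Phi$: if $t = x_{2n}$ (a left endpoint, so $t \in \Phi(x)$) and $t \neq 0$, then $t^- \in [x_{2n-1}, x_{2n})$, which is disjoint from every interval $[x_{2k}, x_{2k+1})$, so $t^- \notin \Phi(x)$; if $t = x_{2n+1}$ (a right endpoint, so $t \notin \Phi(x)$) and $t \neq 1$, then $x_{2n} \leq t^- < x_{2n+1}$ places $t^-$ in $\Phi(x)$.

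I would then carry out a case analysis on a term $t \notin \{0, 1\}$ of $x$. If $t = x_{2n}$, pick $i \in I$ with $t \in \Phi(x^i)$ and locate $m$ with $x^i_{2m} \leq t < x^i_{2m+1}$. If $t > x^i_{2m}$, then $x^i_{2m} \leq t^- < x^i_{2m+1}$ places $t^-$ in $\Phi(x^i) \subseteq \Phi(x)$, contradicting $t^- \notin \Phi(x)$; hence $t = x^i_{2m}$. If instead $t = x_{2n+1}$, then $t^- \in \Phi(x)$, so $t^- \in \Phi(x^i)$ for some $i$, say $x^i_{2m} \leq t^- < x^i_{2m+1}$. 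Since $t$ is the immediate successor of $t^-$, we get $t \leq x^i_{2m+1}$; strict inequality would yield $t \in [x^i_{2m}, x^i_{2m+1}) \subseteq \Phi(x)$, contradicting $t \notin \Phi(x)$. Thus $t = x^i_{2m+1}$. In either case $t$ is a term of $x^i$.

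The main obstacle is recognizing that the existence of predecessors in $C$, guaranteed by the $(\omega+1)$-lattice hypothesis, is what lets one read off terms of the admissible sequence directly from its image set $\Phi(x)$; once that machinery is in place, the result follows by a straightforward two-case analysis using the disjoint interval structure of each $\Phi(x^i)$.
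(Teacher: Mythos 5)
Your proof is correct and follows essentially the same route as the paper: both transfer the join to a union via the isomorphism $\Phi$ of Proposition~\ref{bnm} and then detect the terms of an admissible sequence by testing membership in $\Phi(x)$ of an element together with its immediate predecessor, which exists because $C\setminus\{0,1\}$ sits in a chain of order type at most $\omega+1$. The paper phrases this as a biconditional characterization of the indices $c_k = y_{2n}$ and $c_k = y_{2n+1}$ in terms of $c_k$ and $c_{k-1}$, which is exactly your $t$, $t^-$ case analysis.
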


\begin{proof}
Assume $C$ is the chain $c_0<c_1<\cdots <c_\alpha$ where $\alpha$ is finite or $\omega$. For $y\in\mc{K}^*(C)$ and $k<\alpha$, we have that 
\[
\begin{aligned}
 c_k = y_{2n} \mbox{ for some }n<\ell(y) & \mbox{ iff } c_k\in\Phi(y)\mbox{ and } c_{k-1}\not\in\Phi(y), \\
c_k = y_{2n+1}\mbox{ for some }n<\ell(y) & \mbox{ iff } c_k\not\in\Phi(y)\mbox{ and } c_{k-1}\in\Phi(y).
\end{aligned}
\]
Suppose $n<\ell(x)$ and $c_k = x_{2n}$. Then $c_k\in\Phi(x)$ and $c_{k-1}\not\in\Phi(x)$. Since $x=\bigvee_{i \in I}x^i$ and $\Phi$ is an isomorphism, we have $\Phi(x)=\bigcup_{i \in I}\Phi(x^i)$. So there exists an $i\in I$ with $c_k\in\Phi(x^i)$ and clearly $c_{k-1}\not\in\Phi(x^i)$. So $c_k$ is a term of $x^i$. Suppose that $c_p=x_{2n+1}$. Then $c_{p-1}\in\Phi(x)$ and $c_p\not\in\Phi(x)$. So there is $i\in I$ with $c_{p-1}\in\phi(x^i)$ and $c_p\not\in\phi(x^i)$. Again $c_p$ is a term of $x^i$. 
\end{proof}

\begin{cor}\label{zonk}
Let $C$ be an $(\omega+1)$-lattice that is a chain, and let $D$ be a bounded subchain of~$C$. Then the complete Boolean algebra $\mc{K}^*(D)$ is a complete subalgebra of $\mc{K}^*(C)$. 
\end{cor}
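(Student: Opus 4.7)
The plan is to show that the inclusion $\mc{K}^*(D)\subseteq\mc{K}^*(C)$ preserves the bounds, the orthocomplementation, and all joins; closure under arbitrary meets will then follow automatically from $\bigwedge_i x^i = \bigl(\bigvee_i (x^i)^\perp\bigr)^\perp$. Before applying this strategy, one must verify that $D$ is itself an $(\omega+1)$-lattice, so that $\mc{K}^*(D)$ is defined and Proposition~\ref{bnm} applies to it. Since $D$ is a bounded subchain of $C$, it contains $0$ and $1$; and if $D$ is infinite then, as an infinite bounded subchain of a chain of order type $\omega+1$, it must again have order type $\omega+1$. Consequently both $\mc{K}^*(D)$ and $\mc{K}^*(C)$ are complete atomic Boolean algebras, and a sequence in $D$ is in particular a sequence in $C$.

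The agreement of the operations is nearly immediate from the definitions. The bounds $0$ (the empty sequence) and $1$ (the sequence $0<1$) belong to $\mc{K}^*(D)$ because $0,1\in D$. The relation $\sqsubseteq$ is defined by the same quantified condition on the terms of two sequences in either ambient lattice, so it restricts from $\mc{K}^*(C)$ to $\mc{K}^*(D)$. The orthocomplementation $\perp$ only inserts or deletes $0$ and $1$ from a sequence, so it also requires only that $0,1\in D$ and hence restricts as well.

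The main step, and essentially the only nontrivial one, is closure of $\mc{K}^*(D)$ under joins computed in $\mc{K}^*(C)$. Given a family $(x^i)_{i\in I}$ in $\mc{K}^*(D)$, I would let $y$ be its join in $\mc{K}^*(C)$ and invoke Corollary~\ref{ooo} to conclude that every term of $y$ lies in $\bigcup_{i\in I} x^i\cup\{0,1\}\subseteq D$. So $y$ is itself an admissible sequence in $D$, hence $y\in\mc{K}^*(D)$. As $\sqsubseteq$ restricts, any upper bound of the family in $\mc{K}^*(D)$ is an upper bound in $\mc{K}^*(C)$, so $y$ is also the join of the family in $\mc{K}^*(D)$. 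The hard part of the argument is thus already packaged into Corollary~\ref{ooo}; once that is in hand, the rest is unwinding the definitions.
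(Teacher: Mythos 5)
Your proposal is correct and follows essentially the same route as the paper: the entire content is the appeal to Corollary~\ref{ooo} to see that the join computed in $\mc{K}^*(C)$ has all its terms in $D$ and hence already lies in $\mc{K}^*(D)$. The additional checks you include (that $D$ is an $(\omega+1)$-lattice, that $\sqsubseteq$ and $\perp$ restrict, and the De Morgan reduction of meets to joins) are left implicit in the paper but are accurate.
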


\begin{proof}
Let $(x^i)_{i \in I}$ be a family in $\mc{K}^*(D)$ and $x$ be its join in $\mc{K}^*(C)$. By Corollary~\ref{ooo} we have that $x$ is an element of $\mc{K}^*(D)$. Thus $x$ is the join of this family in $\mc{K}^*(D)$. 
\end{proof}

\begin{lem}\label{pout}
Let $L$ be an $(\omega+1)$-lattice and $(x^i)_{i \in I}$ be a family in $\mc{K}^*(L)$. If $D=\bigcup_{i \in I}x^i\cup\{0,1\}$ is a chain, then the join $x$ of $(x^i)_{i \in I}$ in $\mc{K}^*(D)$ is its join in $\mc{K}^*(L)$. 
\end{lem}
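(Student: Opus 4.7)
The plan is to argue in three movements. First, I will show that $D$ is itself an $(\omega+1)$-chain, so that $\mc{K}^*(D)$ is defined and, by Proposition~\ref{bnm}, a complete Boolean algebra. Since $D$ is a chain in $L$ containing $0$ and $1$, Zorn's lemma extends $D$ to a maximal chain $M$ of $L$ whose order type is finite or $\omega+1$; a bounded subchain of such an $M$ containing both endpoints is again finite or of order type $\omega+1$, since any infinite strictly increasing subset of $M \setminus \{1\}$ is automatically cofinal in $M$. Hence the join $x$ of $(x^i)_{i \in I}$ in $\mc{K}^*(D)$ exists, and by Corollary~\ref{ooo} its terms all lie in $D$, so $x \in \mc{K}^*(L)$. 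Because the relation $\sqsubseteq$ on $\mc{K}^*$ is defined purely in terms of the ambient partial order, each inequality $x^i \sqsubseteq x$ transports from $\mc{K}^*(D)$ to $\mc{K}^*(L)$, and $x$ is an upper bound of $(x^i)_{i \in I}$ in $\mc{K}^*(L)$.

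For the least-upper-bound property, I will take an arbitrary upper bound $y \in \mc{K}^*(L)$, fix $n < \ell(x)$, write $a = x_{2n}$ and $b = x_{2n+1}$, and produce a single $m$ with $y_{2m} \leq a < b \leq y_{2m+1}$. Since $a \in \Phi(x) = \bigcup_i \Phi(x^i)$ I can pick $i_a$ with $a \in \Phi(x^{i_a})$. The strict monotonicity of $x$, combined with $x^{i_a} \sqsubseteq x$, rules out $a$ being interior to an $x^{i_a}$-interval, so $a = x^{i_a}_{2k_a}$ is a left endpoint. Then $x^{i_a} \sqsubseteq y$ places the whole $x^{i_a}$-interval inside a single $[y_{2m_a}, y_{2m_a+1}]$, and this $m_a$ is the candidate.

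To check it works, I will show $D \cap [a,b] \subseteq [y_{2m_a}, y_{2m_a+1}]$, so that taking the endpoint $b$ gives $b \leq y_{2m_a+1}$. Let $S = D \cap [a, b] \cap [y_{2m_a}, y_{2m_a+1}]$, which is order-convex in $D \cap [a, b]$ and contains $a$. If $S$ were proper, its complement has a minimum $c$, since $D$ is well-ordered. If $c < b$, then $c \in (a,b)_D \subseteq \Phi(x)$, and an argument analogous to the one for $a$ --- this time using the minimality of $c$ to rule out the interior case --- shows that $c$ is a left endpoint $x^{j_c}_{2p_c}$, so $c$ lies in some $[y_{2m_c}, y_{2m_c+1}]$ with $m_c \neq m_a$ by the disjointness of closed $y$-intervals. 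But the $D$-predecessor $c^-$ of $c$ lies in $S$, and any $x^{j'}$-interval supplied by $c^- \in \Phi(x^{j'})$ must have right endpoint $\geq c$ (as $c$ is the $D$-successor of $c^-$) and must sit in $[y_{2m_a}, y_{2m_a+1}]$, forcing $c$ itself into that interval --- a contradiction. So $c = b$. If $b < 1$, then $b \notin \Phi(x)$ but $b \in \bigcup_i x^i$ by Corollary~\ref{ooo}, making $b$ an odd-position term of some $x^{i_b}$, and the same pairing argument with the left endpoint of the $x^{i_b}$-interval (which lies in $S$) forces $b$ into $[y_{2m_a}, y_{2m_a+1}]$. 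Finally, if $b = 1$, then all of $[a,1)_D$ lies in $S$, so $y_{2m_a+1}$ is an upper bound in $L$ for that subchain; were $y_{2m_a+1} < 1$, the chain $[a,1)_D \cup \{y_{2m_a+1}, 1\}$ in $L$ would have order type $\omega+2$, contradicting that every chain of $L$ embeds in a maximal chain of order type at most $\omega+1$.

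The main obstacle I anticipate is this final inductive argument, particularly the interlocking use of Corollary~\ref{ooo}, the disjointness of the closed $y$-intervals, and the ordinal argument ruling out an $\omega+2$ subchain in $L$ in the case $b = 1$. Everything upstream reduces to bookkeeping once $\mc{K}^*(D)$ is identified via Proposition~\ref{bnm} with the complete powerset algebra $\mc{P}(D \setminus \{1\})$.
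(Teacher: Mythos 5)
Your argument takes a genuinely different route from the paper's. The paper's proof is a short structural one: given an arbitrary upper bound $w$ of the family in $\mc{K}^*(L)$, each relation $x^i \sqsubseteq w$ forces every term of $x^i$ to be comparable with every term of $w$, so $C = D \cup w$ is again a bounded $(\omega+1)$-chain; Corollary~\ref{zonk} then says $\mc{K}^*(D)$ is a complete subalgebra of $\mc{K}^*(C)$, so the join $x$ computed in $\mc{K}^*(D)$ is also the join in $\mc{K}^*(C)$, and since $w$ belongs to $\mc{K}^*(C)$ and is an upper bound there, $x \sqsubseteq w$. You instead verify $x \sqsubseteq y$ interval-by-interval through $\Phi$, with a minimal-counterexample induction along the well-ordered chain $D$. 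Both work: the paper buys brevity by reusing Corollary~\ref{zonk} on the enlarged chain, while your version is self-contained at the level of the sequences and makes explicit why each interval of $x$ lands in a single closed interval of $y$ (disjointness of the closed $y$-intervals plus the successor structure of $D$). Your upstream steps --- $D$ is an $(\omega+1)$-chain, $x$ is an upper bound, the left-endpoint analysis at $a$, and the predecessor argument at $c < b$ --- all check out.

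There is one sub-case where your argument as written does not close: $c = b = 1$ with $D$ finite. Then $[a,1)_D$ is finite, the chain $[a,1)_D \cup \{y_{2m_a+1},1\}$ is finite, and no order type $\omega+2$ contradiction is available. (When $D$ has order type $\omega+1$, the set $[a,1)_D$ is automatically infinite because $a$ occupies a finite position, so the dichotomy you actually need is ``$D$ infinite'' versus ``$D$ finite'' rather than ``$b<1$'' versus ``$b=1$.'') The repair is the pairing argument you already use for $b<1$: the maximum $d$ of $D \setminus \{1\}$ lies in $[a,1)_D \subseteq \Phi(x) = \bigcup_{i}\Phi(x^i)$, so $d \in [x^{j}_{2p}, x^{j}_{2p+1})$ for some $j$ and $p$; since $1$ is the only element of $D$ above $d$, the right endpoint $x^{j}_{2p+1}$ equals $1$; the left endpoint lies in $S$, so the closed $y$-interval containing $[x^{j}_{2p},1]$ must be $[y_{2m_a}, y_{2m_a+1}]$, forcing $y_{2m_a+1} = 1 \geq b$. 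With that patch the proof is complete.
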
 

\begin{proof}
From the form of their definitions, the partial order $\sqsubseteq$ in $\mc{K}^*(D)$ is the restriction of the partial order $\sqsubseteq$ in $\mc{K}^*(L)$. So $x$ is an upper bound of this family in $\mc{K}^*(L)$. Suppose $w$ is an upper bound of this family in $\mc{K}^*(L)$. For each $i\in I$, we have $x^i\sqsubseteq w$ and this implies that $x^i\cup w$ is a chain. 
Since $D$ and $w$ are chains and each element of $D$ is comparable to each element of $w$, we have that $C=D\cup w$ is a chain.
By Corollary~\ref{zonk}, the join $x$ of this family in $\mc{K}^*(D)$ is its join in $\mc{K}^*(C)$. But $w$ belongs to $\mc{K}^*(C)$ and is an upper bound of this family. Therefore $x\sqsubseteq w$. So $x$ is the join in $\mc{K}^*(L)$. 
\end{proof}

\begin{prop}\label{fredo}
If $L$ is an $(\omega+1)$-lattice, then $\mc{K}^*(L)$ is the MacNeille completion of $\mc{K}(L)$. 
\end{prop}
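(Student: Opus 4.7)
The plan is to establish three properties of $\mc{K}^*(L)$: that $\mc{K}(L)$ is join-dense in $\mc{K}^*(L)$, that $\mc{K}(L)$ is meet-dense, and that $\mc{K}^*(L)$ is a complete lattice. By the standard characterization of the MacNeille completion as the unique complete lattice extension in which the original poset is both join-dense and meet-dense, these three together give $\mc{K}^*(L) \cong \ol{\mc{K}(L)}$.

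For join-density, given $x \in \mc{K}^*(L)$, the length-2 sequences $y^k := (x_{2k}, x_{2k+1}) \in \mc{K}(L)$ lie below $x$, and the defining condition of $\sqsubseteq$ makes it transparent that any upper bound $w$ of $\{y^k\}_k$ must satisfy $x \sqsubseteq w$, so $x = \bigvee_k y^k$. For meet-density I will verify that $(\cdot)^\perp$ is an order-reversing involution on $\mc{K}^*(L)$. Involutivity is a short case check on the parity of $\ell(x)$ and the membership of $0, 1$ among the terms of $x$. Order-reversal rests on the observation that $x \sqsubseteq y$ forces $D := \mathrm{terms}(x) \cup \mathrm{terms}(y) \cup \{0, 1\}$ to be a chain in $L$, since each term of $x$ lies inside some interval $[y_{2j}, y_{2j+1}]$ of $y$; hence $x, y \in \mc{K}^*(D)$, and via Proposition~\ref{bnm} the isomorphism $\Phi \colon \mc{K}^*(D) \to \mc{P}(D \setminus \{1\})$ sends $\perp$ to set-complement, so $\Phi(x) \subseteq \Phi(y)$ gives $y^\perp \sqsubseteq x^\perp$. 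Applying $\perp$ to the join-density formula $x = \bigvee_k y^k$ then exhibits $x^\perp$ as the meet of $\{(y^k)^\perp\}_k \subseteq \mc{K}(L)$ in $\mc{K}^*(L)$; since every element of $\mc{K}^*(L)$ has this form, meet-density follows.

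The main obstacle is completeness of $\mc{K}^*(L)$. I will define the natural order-preserving map $\Psi \colon \mc{K}^*(L) \to \ol{\mc{K}(L)}$ by $\Psi(x) := \bigvee\{y \in \mc{K}(L) : y \sqsubseteq x\}$ (computed in $\ol{\mc{K}(L)}$); using meet-density to show that $\{y \in \mc{K}(L) : y \sqsubseteq x\}$ is Galois-closed in $\mc{K}(L)$, one obtains that $\Psi$ is an order-embedding. The goal is then to show $\Psi$ is surjective, so that $\mc{K}^*(L)$ inherits completeness from $\ol{\mc{K}(L)}$ via the resulting order isomorphism. Surjectivity means constructing, for each $z \in \ol{\mc{K}(L)}$, an $x \in \mc{K}^*(L)$ whose down-closure in $\mc{K}(L)$ is the ideal $A_z := \{y \in \mc{K}(L) : y \leq z\}$. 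Since every element of $\mc{K}(L)$ is a finite join of its length-2 subsequences, this reduces to building the intervals of $x$ out of the length-2 members $(a, b) \in A_z$: one organizes these into maximal, pairwise-disjoint intervals of $L$, with overlaps between intervals in $A_z$ resolved using the closure of $A_z$ under the finite joins of $\mc{K}(L)$, and the endpoints of the resulting intervals (listed in increasing order) form $x$. The $(\omega+1)$-hypothesis on $L$ ensures admissibility: the endpoints form a strictly increasing chain in $L$, hence have order type at most $\omega+1$, which yields a sequence of even length or of length $\omega$. After verifying that $\Psi(x) = z$, the three properties combine to identify $\mc{K}^*(L)$ with $\ol{\mc{K}(L)}$.
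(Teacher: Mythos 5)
Your join-density and meet-density arguments are sound and essentially match the paper's: the length-$2$ subsequences of $x$ do have $x$ as their least upper bound by inspection of $\sqsubseteq$, and your verification that $\perp$ is an order-reversing involution (by passing to the chain $D$ of terms and using Proposition~\ref{bnm}) is a legitimate way to get meet-density from join-density. The gap is in the completeness step. The paper proves completeness of $\mc{K}^*(L)$ \emph{intrinsically}: Lemma~\ref{pout} gives joins of chains, finite joins are obtained by truncating $x$ and $y$ to finite sequences, joining in $\mc{K}(L)$, and taking the join of the resulting chain, and then a general principle (chain-complete plus finitely join-complete implies complete) finishes. You instead try to prove completeness by exhibiting an order-isomorphism $\Psi$ onto $\ol{\mc{K}(L)}$, which forces you to reconstruct an admissible sequence from an arbitrary normal ideal $A_z$ of $\mc{K}(L)$ --- and that reconstruction is where your sketch breaks down.

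Concretely, two things are asserted without justification. First, you need every length-$2$ element of $A_z$ to lie below a \emph{maximal} length-$2$ element of $A_z$; Zorn's lemma requires showing that an increasing chain of length-$2$ elements $(a_i,b_i)$ of $A_z$ is bounded by a length-$2$ element of $A_z$, which uses both the well-orderedness of chains in an $(\omega+1)$-lattice (so the $a_i$ stabilize) and the Galois-closedness of $A_z$ (to see that $(a,\sup b_i)$ is forced into $A_z$ by every upper bound of $A_z$ in $\mc{K}(L)$). Second, and more seriously, your phrase ``overlaps \ldots resolved using closure under finite joins'' only treats maximal elements whose endpoints are \emph{comparable}; in a non-chain lattice such as the Rieger--Nishimura lattice, the real danger is two maximal length-$2$ elements $(a,b)$ and $(c,d)$ of $A_z$ with incomparable terms, in which case they cannot be listed as disjoint intervals of a single strictly increasing sequence. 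This can be ruled out --- their join $w$ in $\mc{K}(L)$ lies in $A_z$, each of $(a,b)$ and $(c,d)$ sits inside an interval of $w$ and hence, by maximality, \emph{is} an interval of $w$, so all four terms lie on the chain of terms of $w$ --- but this argument is the crux of your surjectivity claim and it is absent. Without it (and without the subsequent verification that the down-set of the assembled $x$ is exactly $A_z$), the proof is incomplete. If you want to keep your architecture, supply these two lemmas; otherwise the paper's route through Lemma~\ref{pout} avoids the normal-ideal analysis entirely.
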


\begin{proof}
We use $\sqcup$ for the join of two elements in $\mc{K}^*(L)$ when it exists and $\bigsqcup$ for the join of an arbitrary family of elements when it exists. Note first that Lemma~\ref{pout} implies that any chain in $\mc{K}^*(L)$ has a join. 

Suppose $x, y\in\mc{K}^*(L)$. For each $n\in\mathbb{N}$ define $x^{n}$ as follows:
\[ x^{n} = \begin{cases} x_0<\cdots <x_{2n-1} & \mbox{ if } n\leq \ell(x) \\ x & \mbox{ otherwise } \end{cases} \]
Define $y^{n}$ similarly. It is easily seen that $x=\bigsqcup_\mathbb{N}x^n$ and $y=\bigsqcup_\mathbb{N}y^n$.
Set $z^{n} = x^{n}\sqcup y^{n}$. Note that this is defined since $x^{n}$ and $y^{n}$ are finite sequences  so they belong to $\mc{K}(L)$ and by Lemma~\ref{dvd} finite joins in $\mc{K}(L)$ are joins in $\mc{K}^*(L)$. For $m\leq n$, we have $x^{m}\sqsubseteq x^{n}$ and $y^{m}\sqsubseteq y^{n}$, hence $z^{m}\sqsubseteq z^{n}$. Let $z=\bigsqcup_\mathbb{N}z^n$ be the join of this chain in $\mc{K}^*(L)$. Since $x=\bigsqcup_\mathbb{N}x^n$ and $y=\bigsqcup_\mathbb{N}y^n$, it follows that $z=x\sqcup y$ in $\mc{K}^*(L)$. 

Any chain in $\mc{K}^*(L)$ has a join and any two elements of $\mc{K}^*(L)$ have a join. From general principles, it follows that $\mc{K}^*(L)$ is complete. Clearly $\perp$ is of period-two. If $x,y\in\mc{K}^*(L)$ and $x\sqsubseteq y$, then there is a bounded chain $C$ in $L$ with $x,y\in\mc{K}^*(C)$. The ordering and $\perp$ of $\mc{K}^*(C)$ are the restrictions of those of $\mc{K}^*(L)$. But $\mc{K}^*(C)$ is a Boolean algebra, so $x\sqcup x^\perp =1$ and $x\sqsubseteq y$ implies $y^\perp\sqsubseteq x^\perp$. It follows that $\perp$ is an orthocomplementation on $\mc{K}^*(L)$. 
As we have seen, each element $x\in\mc{K}^*(L)$ is the join $x=\bigsqcup_{\mathbb{N}}x^n$ of elements of $\mc{K}(L)$. So the ortholattice $\mc{K}(L)$ is join-dense in the complete ortholattice $\mc{K}^*(L)$, and therefore it is meet-dense as well. 
Being a join- and meet-dense subalgebra of a complete \ts{ol} characterizes the MacNeille completion up to isomorphism, so $\mc{K}^*(L)$ is the MacNeille completion of $\mc{K}(L)$. 
\end{proof}

\begin{prop}
If $L$ is an $(\omega+1)$-lattice, then $\mc{K}^*(L)$ is a complete \ts{oml}.
\end{prop}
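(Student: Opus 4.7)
The plan is to leverage Proposition~\ref{fredo}, which already establishes that $\mc{K}^*(L)$ is a complete ortholattice; the remaining task is to verify the orthomodular identity $x \sqcup (x^\perp \sqcap y) = y$ for any $x \sqsubseteq y$ in $\mc{K}^*(L)$. The strategy is to embed the relevant computation into a Boolean algebra of the form $\mc{K}^*(C)$ for an appropriately chosen bounded chain $C$ in $L$, where orthomodularity is automatic, and then transfer the conclusion back to $\mc{K}^*(L)$.

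Given $x \sqsubseteq y$, we take $C$ to be the union of the terms of $x$, the terms of $y$, and $\{0,1\}$. The nesting condition $x \sqsubseteq y$ forces any term of $x$ and any term of $y$ to be comparable, so $C$ is a bounded chain in $L$. Extending $C$ to a maximal chain of $L$ and invoking the $(\omega+1)$-lattice hypothesis, that maximal chain has order type finite or $\omega+1$; a subchain containing both its bottom and its top likewise has order type finite or $\omega+1$. Hence $C$ itself is an $(\omega+1)$-lattice that is a chain, and Proposition~\ref{bnm} yields that $\mc{K}^*(C)$ is a complete atomic Boolean algebra containing $x$ and $y$.

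It remains to see that the operations on $\mc{K}^*(C)$ agree with those inherited from $\mc{K}^*(L)$. The relation $\sqsubseteq$ and the orthocomplementation $\perp$ agree directly from their defining formulas, since $C$ shares its bounds with $L$. For joins, Lemma~\ref{pout} applied to any family whose terms lie in $C$ shows that the join computed in $\mc{K}^*(C)$ coincides with the join in $\mc{K}^*(L)$, and meets then agree via de Morgan. Since $\mc{K}^*(C)$ is Boolean, the orthomodular identity $x \sqcup (x^\perp \sqcap y) = y$ holds there, and because the operations match, it holds in $\mc{K}^*(L)$ as well.

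The main obstacle is the order-type control needed for Proposition~\ref{bnm} to apply to $C$: without the $(\omega+1)$-lattice hypothesis on $L$, the chain assembled from the terms of $x$ and $y$ could have order type larger than $\omega+1$, and the Boolean reduction would break down. This is precisely where the hypothesis on $L$ enters the argument.
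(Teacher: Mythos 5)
Your proof is correct, but it takes a different route from the paper. The paper's proof is very short: it observes that every chain in an $(\omega+1)$-lattice has a join (a finite chain trivially, an infinite one with join $1$), so $L$ is a complete lattice; Theorem~\ref{cdf} then says that the MacNeille completion $\ol{\mc{K}(L)}$ is an \ts{oml}, and Proposition~\ref{fredo} identifies $\mc{K}^*(L)$ with $\ol{\mc{K}(L)}$. You instead use Proposition~\ref{fredo} only for the complete-ortholattice structure and verify the orthomodular law directly, by localizing any comparable pair $x \sqsubseteq y$ inside the complete Boolean algebra $\mc{K}^*(C)$ for $C$ the chain of terms of $x$ and $y$ together with the bounds, where the identity is automatic; the agreement of joins and meets needed for the transfer is exactly the content of Corollary~\ref{zonk} and Lemma~\ref{pout} (packaged later as Corollary~\ref{who}, whose proof does not depend on this proposition, so there is no circularity). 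This is in fact the same localization trick the paper uses inside the proof of Proposition~\ref{fredo} to check that $\perp$ is an orthocomplementation, pushed one step further. What your route buys is self-containment: it avoids the appeal to the external Theorem~\ref{cdf} (the condition $(\dagger)$ result of Harding on MacNeille completions) at the cost of a slightly longer argument; the paper's route is shorter given that machinery. All the steps you need do check out: $x \sqsubseteq y$ forces every term of $x$ to be comparable with every term of $y$, and a bounded subchain of a maximal chain of order type at most $\omega+1$ again has order type at most $\omega+1$, so Proposition~\ref{bnm} applies to $C$.
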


\begin{proof}
Any chain in $L$ has a least upper bound in $L$ since it is either finite or it is infinite and therefore has $1$ as its join. Thus $L$ is a complete lattice. So by Theorem~\ref{cdf}, the MacNeille completion $\ol{\mc{K}(L)}$ is an \ts{oml}. By Proposition~\ref{fredo}, $\mc{K}^*(L)$ is the MacNeille completion of $\mc{K}(L)$ and therefore is a complete \ts{oml}. 
\end{proof}

Now that we know that $\mc{K}^*(L)$ is a complete \ts{oml}, we can state the following result that is a consequence of Corollary~\ref{zonk} and Lemma~\ref{pout}.

\begin{cor}\label{who}
If $L$ is an $(\omega+1)$-lattice and $C$ is a bounded subchain of $L$, then $\mc{K}^*(C)$ is a complete Boolean subalgeba of $\mc{K}^*(L)$.
\end{cor}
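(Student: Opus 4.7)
My plan is to view $C$ as an $(\omega+1)$-lattice in its own right and to deduce the result by combining Lemma~\ref{pout} and Corollary~\ref{zonk} along an intermediate chain. First I would observe that since $C$ is a bounded subchain of $L$ (i.e., a chain in $L$ containing $0$ and $1$), it extends to a maximal chain of $L$, whose order type is finite or $\omega+1$. Because $C$ shares the bounds of that maximal chain, $C$ itself is either finite or of order type $\omega+1$, and hence is an $(\omega+1)$-lattice. By Proposition~\ref{bnm}, $\mc{K}^*(C)$ is then a complete atomic Boolean algebra. The inclusion $\mc{K}^*(C)\subseteq\mc{K}^*(L)$ on underlying sets is immediate from the definitions, and since $C$ and $L$ share $0$ and $1$, this inclusion respects the partial order and the orthocomplementation $\perp$.

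The main point is to verify that arbitrary joins in $\mc{K}^*(C)$ coincide with those in $\mc{K}^*(L)$. Given a family $(x^i)_{i \in I}$ in $\mc{K}^*(C)$, I would introduce the auxiliary subchain $D = \bigcup_{i \in I} x^i \cup \{0,1\}$. Since $D\subseteq C$ is a chain in $L$ containing $0$ and $1$, Lemma~\ref{pout} yields that the join of $(x^i)$ in $\mc{K}^*(D)$ agrees with its join in $\mc{K}^*(L)$. At the same time, $D$ is a bounded subchain of $C$, so Corollary~\ref{zonk} gives that $\mc{K}^*(D)$ is a complete subalgebra of $\mc{K}^*(C)$, whence the join of $(x^i)$ in $\mc{K}^*(D)$ agrees with its join in $\mc{K}^*(C)$. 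Chaining the two identifications shows that joins in $\mc{K}^*(C)$ agree with joins in $\mc{K}^*(L)$.

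Because both the orthocomplementation and arbitrary joins agree, meets agree as well by De Morgan, and so $\mc{K}^*(C)$ is a complete Boolean subalgebra of $\mc{K}^*(L)$. I do not anticipate any substantial obstacle: the intermediate chain $D$ does double duty, fitting Corollary~\ref{zonk} as a bounded subchain of $C$ and fitting Lemma~\ref{pout} as the chain generated by the family together with $\{0,1\}$, so each result supplies exactly one half of the required identification of joins.
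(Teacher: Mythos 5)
Your proof is correct and follows essentially the same route as the paper: both introduce the intermediate chain $D=\bigcup_{i\in I}x^i\cup\{0,1\}$ and identify joins across $\mc{K}^*(C)$, $\mc{K}^*(D)$, and $\mc{K}^*(L)$ using Corollary~\ref{zonk} and Lemma~\ref{pout} in exactly the way you describe. The added check that $C$ is itself an $(\omega+1)$-lattice is a welcome detail the paper leaves implicit, but it does not change the argument.
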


\begin{proof}
Let $(x^i)_{i \in I}$ be a family in $\mc{K}^*(C)$ and $x$ be the join of this family in the complete Boolean algebra $\mc{K}^*(C)$. Set $D=\bigcup_{i \in I}x^i\cup\{0,1\}$. By Corollary~\ref{zonk} we have that $x$ is the join of this family in $\mc{K}^*(D)$, and by Lemma~\ref{pout}, $x$ is the join of this family in $\mc{K}^*(L)$. So $\mc{K}^*(C)$ is closed under joins in $\mc{K}^*(L)$, and it is clearly closed under the operation $\perp$ of $\mc{K}^*(L)$ also.      
\end{proof}

For the following result we recall that the {\em commutator} of elements $x,y$ in an \ts{oml} is given by $\gamma(x,y)=(x\vee y)\wedge(x\vee y^\perp)\wedge(x^\perp\vee y)\wedge(x^\perp\vee y^\perp)$. It has the property that $\gamma(x,y)=0$ iff $x$ and $y$ commute. See \cite{KalmbachBook} for details. 

\begin{lem}\label{xdr}
If $L$ is an $(\omega+1)$-lattice and $x,y\in\mc{K}^*(L)$, then $x$ and $y$ commute iff $x\cup y$ is a chain. Thus the blocks of $\mc{K}^*(L)$ are the sets $\mc{K}^*(C)$ where $C$ is a maximal chain of $L$. 
\end{lem}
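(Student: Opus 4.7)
The plan is to prove the biconditional first and then deduce the block description from it.

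The easy direction $(\Leftarrow)$ follows immediately from Corollary~\ref{who}: if $x \cup y$ is a chain, set $C = x \cup y \cup \{0, 1\}$, a bounded subchain of $L$; by Corollary~\ref{who}, $\mc{K}^*(C)$ is a complete Boolean subalgebra of $\mc{K}^*(L)$ containing both $x$ and $y$, so they commute.

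For the harder direction $(\Rightarrow)$ I would argue the contrapositive using the commutator $\gamma(x, y) = (x \vee y) \wedge (x \vee y^\perp) \wedge (x^\perp \vee y) \wedge (x^\perp \vee y^\perp)$. Assume $x \cup y$ is not a chain. Since each of $x$ and $y$ is individually a chain, there exist terms $a \in x$ and $b \in y$ that are incomparable in $L$. Observe that $a, b \notin \{0, 1\}$, since $0$ and $1$ are comparable to every element; consequently $a$ is also a term of $x^\perp$ and $b$ of $y^\perp$ by the definition of the orthocomplementation. The proposed witness is $p := (a \wedge b, a \vee b) \in \mc{K}(L)$, which is nonzero because $a \neq b$. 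I claim that whenever $z_1, z_2 \in \mc{K}^*(L)$ have $a$ and $b$ respectively among their terms, then $p \sqsubseteq z_1 \vee z_2$. Applied to each $(z_1, z_2) \in \{x, x^\perp\} \times \{y, y^\perp\}$, this yields $p \sqsubseteq \gamma(x, y)$, so $\gamma(x, y) \neq 0$ and $x, y$ fail to commute, a contradiction.

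The main obstacle is verifying this common-interval claim. The term $a$ lies in some interval $[(z_1)_{2i}, (z_1)_{2i+1}]$ of $z_1$, and $b$ lies in some interval $[(z_2)_{2j}, (z_2)_{2j+1}]$ of $z_2$. Since $z_1, z_2 \sqsubseteq z_1 \vee z_2$, each of these intervals is contained in some interval of $z_1 \vee z_2$. If they fell in distinct intervals $[u_k, v_k]$ and $[u_l, v_l]$ of $z_1 \vee z_2$, say with $k < l$, then the strict increase of that sequence forces $v_k < u_l$, whence $a \leq v_k < u_l \leq b$, contradicting incomparability of $a$ and $b$. Hence both intervals lie inside a single interval $[u, v]$ of $z_1 \vee z_2$, giving $u \leq a \wedge b \leq a \vee b \leq v$, i.e., $p \sqsubseteq z_1 \vee z_2$.

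For the block characterization, I use the first assertion. For $C$ a maximal chain of $L$, $\mc{K}^*(C)$ is a complete Boolean subalgebra by Corollary~\ref{who}. If a strictly larger Boolean subalgebra contained some $z$ with a term $c' \notin C$, then by maximality of $C$ there would exist $c \in C \setminus \{0, 1\}$ incomparable to $c'$; the element $(0, c) \in \mc{K}(C)$ would then commute with $z$ while their combined terms contain the incomparable pair $c, c'$, contradicting the first assertion. Hence $\mc{K}^*(C)$ is a block. Conversely, for any block $B$, the union $T$ of all terms of elements of $B$ is a chain by the first assertion applied to pairs in $B$; extending $T$ to a maximal chain $C$ of $L$ gives $B \subseteq \mc{K}^*(C)$, and maximality of $B$ forces $B = \mc{K}^*(C)$.
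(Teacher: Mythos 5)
Your proposal is correct and follows essentially the same route as the paper: the easy direction via Corollary~\ref{who}, the hard direction via the commutator with the witness $a\wedge b < a\vee b$ landing in a single interval of each join by incomparability, and the block description via the union of terms of a Boolean subalgebra. Your argument that $\mc{K}^*(C)$ is maximal (deriving a non-commuting pair from a term outside $C$) is a minor variant of the paper's (which instead embeds the larger subalgebra into some $\mc{K}^*(D)$ and uses $C \subseteq D$), but the substance is the same.
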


\begin{proof}
For $x\in\mc{K}^*(L)$ the terms of $x$ and $x^\perp$ are the same except possibly the bounds. Assume that $x \cup y$ is not a chain. Then, there are terms $u$ of $x$ and $v$ of $y$ that are incomparable. Clearly $u,v \not \in \{0,1\}$, and therefore $u$ is also a term of $x^\perp$ and $v$ is a term of $y^\perp$. Let $z$ be the sequence $u\wedge v < u\vee v$. We show that the sequence $z$ is below $\gamma (x,y)$ or equivalently below each of the sequences $x \sqcup y$, $x \sqcup y^\perp$, $x^\perp \sqcup y$, and $x^\perp \sqcup y^\perp$. Let $w = x \sqcup y$. Then there are $m,n<\ell(w)$ with $w_{2m}\leq u\leq w_{2m+1}$ and $w_{2n}\leq v\leq w_{2n+1}$. Since $u$ and $v$ are incomparable we must have $m=n$ and therefore there exists an $n<\ell(w)$ with $w_{2n}\leq u\wedge v<u\vee v\leq w_{2n+1}$. So $z\sqsubseteq w = x\sqcup y$. Similarly, $z \sqsubseteq x \vee y^\perp, x^\perp \vee y, x^\perp \vee y^\perp$. We conclude that $0 \neq z \sqsubseteq \gamma(x, y)$, that is, that $x$ and $y$ do not commute.

For the converse, if $x\cup y$ is a chain, it is contained in some bounded chain $C$ of $L$. Then $x,y$ are contained in the Boolean algebra $\mc{K}^*(C)$. By Corollary~\ref{who}, $\mc{K}^*(C)$ is a Boolean subalgebra of $\mc{K}^*(L)$. Thus $x$ and $y$ commute. 

For the further comment, suppose $B$ is a Boolean subalgebra of $\mc{K}^*(L)$. Then we have that $D=\bigcup\{x:x\in B\}$ is a chain in $L$. Extending $D$ to a maximal chain $C$ in $L$ we have $B\subseteq\mc{K}^*(C)$ and $\mc{K}^*(C)$ is a Boolean subalgebra of $\mc{K}^*(L)$. So if $B$ is a block, $B=\mc{K}^*(C)$ for some maximal chain $C$ of $L$. It remains to show that if $C$ is a maximal chain of $L$ then $\mc{K}^*(C)$ is a block. Suppose that $\mc{K}^*(C)$ is contained in a Boolean subalgebra $B$ of $\mc{K}^*(L)$. Then $B$ is contained in $\mc{K}^*(D)$ for some maximal chain $D$ in $L$. It is easily seen that $C=\bigcup\{x:x\in\mc{K}^*(C)\}$ and $D=\bigcup\{x:x\in\mc{K}^*(D)\}$. So $C\subseteq D$ and by the maximality of $C$ we have equality. Thus $\mc{K}^*(C)\subseteq B\subseteq \mc{K}^*(C)$, providing equality. So $\mc{K}^*(C)$ is a block. 
\end{proof}

\begin{thm}\label{ratso}
If $L$ is an $(\omega+1)$-lattice, then $\mc{K}^*(L)$ is a completely hereditarily atomic \ts{oml}.
\end{thm}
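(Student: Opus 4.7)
The plan is to apply Proposition~\ref{equiv} in its form (1)~$\Leftrightarrow$~(2): a complete \ts{oml} is completely hereditarily atomic iff every block of it is atomic. We already know by the preceding proposition that $\mc{K}^*(L)$ is a complete \ts{oml}, so all that remains is to verify that each of its blocks is an atomic Boolean algebra.

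First I would invoke Lemma~\ref{xdr}, which identifies the blocks of $\mc{K}^*(L)$ as exactly the subalgebras of the form $\mc{K}^*(C)$ where $C$ ranges over the maximal chains of $L$. Since $L$ is an $(\omega+1)$-lattice, every such $C$ is either finite or has order type $\omega+1$. In either case, $C$ viewed on its own is a chain whose unique maximal chain is itself, so $C$ is again an $(\omega+1)$-lattice (now happening to be a chain).

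Next I would apply Proposition~\ref{bnm} to this chain $C$: the map $\Phi$ exhibits $\mc{K}^*(C)$ as isomorphic to the powerset $\mc{P}(C\setminus\{1\})$, which is a complete atomic Boolean algebra. Hence each block $\mc{K}^*(C)$ of $\mc{K}^*(L)$ is atomic.

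Combining these two observations with Proposition~\ref{equiv} gives the desired conclusion that $\mc{K}^*(L)$ is completely hereditarily atomic. There is no real obstacle here; the work has been absorbed into the earlier lemmas identifying the blocks and realizing each $\mc{K}^*(C)$ as a powerset.
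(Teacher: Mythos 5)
Your proposal is correct and follows essentially the same route as the paper: invoke Lemma~\ref{xdr} to identify the blocks of $\mc{K}^*(L)$ as the algebras $\mc{K}^*(C)$ for maximal chains $C$, show each such block is atomic, and conclude via Proposition~\ref{equiv}. The only cosmetic difference is that you justify atomicity of each block through the powerset isomorphism of Proposition~\ref{bnm}, while the paper observes directly that every interval of such a chain contains a cover; both justifications are immediate for chains that are finite or of order type $\omega+1$.
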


\begin{proof}
By Lemma~\ref{xdr}, the blocks of $\mc{K}^*(L)$ are exactly the sets $\mc{K}^*(C)$ where $C$ is a maximal chain in $L$. Since each maximal chain in $L$ is either finite or of order type $\omega+1$, each interval in a maximal chain in $L$ contains a cover, so $\mc{K}^*(C)$ is atomic for each maximal chain $C$ of $L$. Thus every block of $\mc{K}^*(L)$ is atomic, and this yields that $\mc{K}^*(L)$ is completely hereditarily atomic. 
\end{proof}

We next make use of this tool to construct an interesting example of a completely hereditarily atomic \ts{oml}. Consider the lattice $L$ shown at left in Figure~\ref{fig1}. This lattice is known as the Rieger-Nishimura lattice. 

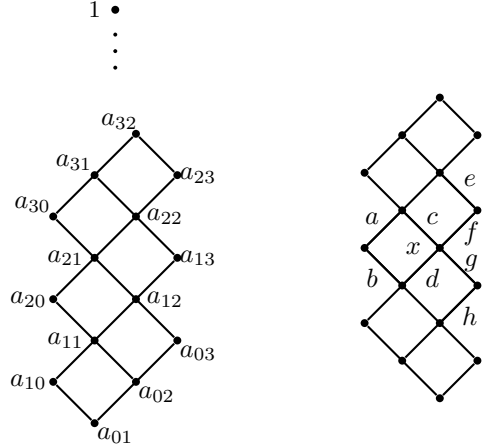
\begin{figure}
\begin{center}
\begin{tikzpicture}[scale=1.1,Q/.style={circle,fill=black,inner sep=0pt,minimum size=3pt},
R/.style={circle,fill=black,inner sep=0pt,minimum size=1.5pt}]
\node[Q] (01) at (0,0) {}; \node[Q] (02) at (.5,.5) {}; \node[Q] (03) at (1,1) {};  
\node[Q] (10) at (-.5,.5) {}; \node[Q] (11) at (0,1) {}; \node[Q] (12) at (.5,1.5) {}; \node[Q] (13) at (1,2) {}; 
\node[Q] (20) at (-.5,1.5) {}; \node[Q] (21) at (0,2) {}; \node[Q] (22) at (.5,2.5) {}; \node[Q] (23) at (1,3) {};
\node[Q] (30) at (-.5,2.5) {}; \node[Q] (31) at (0,3) {}; \node[Q] (32) at (.5,3.5) {}; 
\node[Q] (1) at (.25,5) {}; 
\draw[thick] (01) -- (03); \draw[thick] (10) -- (13); \draw[thick] (20) -- (23); \draw[thick] (30) -- (32); 
\draw[thick] (01) -- (10); \draw[thick] (02) -- (20); \draw[thick] (03) -- (30); \draw[thick] (13) -- (31); \draw[thick] (23) -- (32);
\node at (.25,-.15) {$a_{01}$}; \node at (.75,.35) {$a_{02}$}; \node at (1.25,.85) {$a_{03}$};
\node at (-.8,.5) {$a_{10}$}; \node at (-.35,1) {$a_{11}$}; \node at (.85,1.5) {$a_{12}$}; \node at (1.25,2) {$a_{13}$};
\node at (-.8,1.5) {$a_{20}$}; \node at (-.35,2) {$a_{21}$}; \node at (.85,2.5) {$a_{22}$}; \node at (1.25,3) {$a_{23}$};
\node at (-.75,2.65) {$a_{30}$}; \node at (-.25,3.15) {$a_{31}$}; \node at (.3,3.65) {$a_{32}$}; 
\node at (0,5) {\small{1}}; 
\node[R] at (.25,4.7) {}; \node[R] at (.25,4.5) {}; \node[R] at (.25,4.3) {};
\end{tikzpicture}
\hspace{10ex} 
\begin{tikzpicture}[scale=1.0,Q/.style={circle,fill=black,inner sep=0pt,minimum size=3pt},
R/.style={circle,fill=black,inner sep=0pt,minimum size=1.5pt}]
\node at (0,.9) {};

\draw[thick] (.5,2.5) -- (1,3) -- (0,4); \draw[thick] (0,3) -- (1,4) -- (.5,4.5); 
\draw[thick] (0,3) -- (-.5,3.5) -- (0,4); 

\node[Q] (12) at (.5,1.5) {}; \node[Q] (13) at (1,2) {}; 

\node[Q] (21) at (0,2) {}; \node[Q] (22) at (.5,2.5) {}; \node[Q] (23) at (1,3) {};
\node[Q] (30) at (-.5,2.5) {}; \node[Q] (31) at (0,3) {}; \node[Q] (32) at (.5,3.5) {}; \node[Q] (33) at (1,4) {};

\node[Q] (40) at (-.5,3.5) {}; \node[Q] (41) at (0,4) {}; \node[Q] (42) at (.5,4.5) {}; \node[Q] (43) at (1,5) {};
\node[Q] (50) at (-.5,4.5) {}; \node[Q] (51) at (0,5) {}; \node[Q] (52) at (.5,5.5) {}; 

\draw[thick] (12) -- (13); \draw[thick] (21) -- (23); \draw[thick] (30) -- (33); 
\draw[thick] (40) -- (43); \draw[thick] (50) -- (52); 
\draw[thick] (12) -- (30); \draw[thick] (13) -- (40); \draw[thick] (23) -- (50);
\draw[thick] (33) -- (51); \draw[thick] (43) -- (52); 

\node at (-.4,3.90) {$a$}; \node at (-.4,3.10) {$b$}; 
\node at (.4,3.90) {$c$}; 
\node at (.4,3.10) {$d$}; 
\node at (.15,3.5) {$x$}; 
\node at (0.9, 4.4) {$e$};
\node at (.92, 3.7) {$f$};
\node at (.92, 3.3) {$g$};
\node at (0.9, 2.6) {$h$};
\end{tikzpicture}
\end{center}
\caption{The Rieger-Nishimura lattice is the lattice depicted at left. At right is a portion of the Reiger-Nishimura lattice with edges of the Hasse diagram labelled $a,\ldots,h$ and an element of the lattice labelled $x$.} \label{fig1}
\end{figure}

\begin{defn}
An interval $[x,y]$ in a lattice is itself a lattice, and as such it has height at most $n$ if each chain in this interval has at most $n+1$ elements. An atomic lattice $L$ satisfies the {\em $n$-covering property} if for any atom $a$ and element $x\in L$ the interval $[x,x\vee a]$ has height at most $n$. 
\end{defn}

The 1-covering property is usually called the covering property. The covering property is closely related to modularity. Any modular atomic lattice has the covering property, and 
as seen in Proposition~\ref{dolph}, the algebraic \ts{oml}s with the covering property are exactly the products of finite-height modular \ts{ol}s. Our aim is to show for $L$ the Rieger-Nishimura lattice, that $\mc{K}^*(L)$ is a directly irreducible non-modular algebraic \ts{oml} with the 2-covering property. We proceed in stages. 

\begin{prop}
For $L$ the Reiger-Nishimura lattice, $\mc{K}^*(L)$ is a directly irreducible \ts{oml}. 
\end{prop}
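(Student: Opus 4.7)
The plan is to prove that the center of $\mc{K}^*(L)$ consists only of $0$ and $1$, which together with Proposition~\ref{prod} (or the standard direct decomposition of an \ts{oml} via central elements) yields direct irreducibility.

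First, I will use Lemma~\ref{xdr} to translate centrality into a purely order-theoretic condition in $L$. If $z \in \mc{K}^*(L)$ is central, then $z \cup y$ must be a chain in $L$ for every $y \in \mc{K}^*(L)$. Specializing to the length-$2$ sequences $(0,v)$ and $(v,1)$ for $v \in L \setminus \{0,1\}$, one concludes that every term $u$ of $z$ must be comparable, in $L$, to every element of $L$.

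Second, I will show that in the Rieger-Nishimura lattice the only elements comparable to every element are $0$ and $1$. The structural input is that Figure~\ref{fig1} exhibits incomparable pairs at every level: $a_{02}$ and $a_{10}$ are incomparable atoms with join $a_{11}$; then $a_{03}$ and $a_{11}$ are incomparable with join $a_{12}$; then $a_{12}$ and $a_{20}$ are incomparable with join $a_{21}$; and so on up both chains. If $u$ is comparable to every element and $u \neq 0$, then comparing $u$ with $a_{02}$ and $a_{10}$ rules out $u = a_{02}$ and $u = a_{10}$ (each is incomparable to the other atom), forcing $u > a_{02}$ and $u > a_{10}$, hence $u \geq a_{11}$. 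Iterating this argument pair by pair up the lattice forces $u \geq a_{ij}$ for every $i,j$, and since these elements join to $1$ in $L$, we obtain $u = 1$.

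Third, I conclude. Since every term of a central $z$ lies in $\{0,1\}$ and sequences in $\mc{K}^*(L)$ are strictly increasing, $z$ is either the empty sequence (which represents $0 \in \mc{K}^*(L)$) or the length-$2$ sequence $0 < 1$ (which represents $1$). Hence $C(\mc{K}^*(L)) = \{0,1\}$ and $\mc{K}^*(L)$ is directly irreducible.

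The main obstacle is the inductive climb in the second step: one must confirm that the incomparable-pairs-at-every-level pattern of the Rieger-Nishimura lattice really does force $u$ upward through every level. This is a routine but essential combinatorial check once the cover structure has been read off from the Hasse diagram, and it is precisely where the specific choice of $L$ (as opposed to a generic $(\omega+1)$-lattice) is being used.
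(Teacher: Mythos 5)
Your proposal is correct and follows essentially the same route as the paper: both arguments reduce direct irreducibility to triviality of the center and then use Lemma~\ref{xdr}, the paper by observing that the center is the intersection of the blocks $\mc{K}^*(C)$ and that $0,1$ are the only elements of $L$ lying in every maximal chain, you by the equivalent observation that every term of a central element must be comparable to every element of $L$. Your explicit inductive climb through the incomparable pairs just spells out the combinatorial fact that the paper asserts in one line.
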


\begin{proof}
Note that $L$ is an $\omega+1$ lattice, so by Theorem~\ref{ratso}, $\mc{K}^*(L)$ is an \ts{oml}. By Lemma~\ref{xdr}, the blocks of $\mc{K}^*(L)$ are the $\mc{K}^*(C)$ where $C$ is a maximal chain in $L$. Since 0,1 are the only elements of $L$ that belong to every maximal chain, the bounds of $\mc{K}^*(L)$ are the only elements that belong to every block. So the center of $\mc{K}^*(L)$ is trivial. This implies that $\mc{K}^*(L)$ is directly irreducible. 
\end{proof}

\begin{prop}
If $L$ is the Reiger-Nishimura lattice, then $\mc{K}^*(L)$ is algebraic. 
\end{prop}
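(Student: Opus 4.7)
The plan is to show that $\mc{K}^*(L)$ is atomistic and that each of its atoms is compact; these two facts together give algebraicity, since every element is then the join of its (compact) atoms below it.

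Atomisticity will be straightforward. The Rieger-Nishimura lattice $L$ is weakly atomic, as every non-trivial interval in $L$ contains a cover (visible in Figure~\ref{fig1}), so by Proposition~\ref{cde} the \ts{oml} $\mc{K}(L)$ is atomic with atoms being the covers $p\covers q$ of $L$. Since $\mc{K}^*(L)$ is the MacNeille completion of $\mc{K}(L)$ by Proposition~\ref{fredo} and MacNeille completions are join-dense, $\mc{K}^*(L)$ is atomic with the same atoms, and being an \ts{oml} it is atomistic by Proposition~\ref{swr}.

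The main work will go into showing that every atom $(p<q)$ of $\mc{K}^*(L)$ is compact. Two structural features of the Rieger-Nishimura lattice are key here. First, $1$ is covered by no element of $L$, so every such atom has $q\neq 1$, making $\downarrow q$ a finite sublattice of $L$. Second, $L$ is itself a distributive algebraic lattice in which every non-top element is compact in $L$: as $L$ is a complete Heyting algebra the distributive law $a = \bigvee_I(a\wedge x_i)$ combines with finiteness of $\downarrow a$ to give compactness of any $a\neq 1$. Given $(p<q)\sqsubseteq \bigsqcup_I y^i = z$, let $[z_{2m},z_{2m+1}]$ be the interval of $z$ containing $[p,q]$. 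Its lower endpoint $z_{2m}\leq p$ lies in the finite set $\downarrow p$, and its upper endpoint satisfies $z_{2m+1}\geq q$. Writing $z$ as a directed supremum $z=\bigsqcup\{\bigsqcup_{i\in J}y^i : J\subseteq I \text{ finite}\}$ and applying the recursive join construction from \cite{HardingKalmbachMacNeille} to the finite joins, I will trace how the endpoints of the relevant interval are realized as $L$-operations on finitely many terms of the $y^i$'s. Compactness of $q$ in $L$ then cuts this data down to a finite subfamily $I'\subseteq I$ for which $\bigsqcup_{I'} y^i$ has an interval containing $[p,q]$, proving compactness of the atom.

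The main obstacle I anticipate is the case $z_{2m+1}=1$: here the interval $[z_{2m},1]$ of $z$ is not contained in the finite downset $\downarrow z_{2m+1}$, and the naive finite-sublattice reduction fails. The resolution leverages that $q$ is itself compact in $L$, even though $1$ is not: since $q\leq 1$ is witnessed by the $L$-join of upper endpoints of certain intervals of the $y^i$'s, compactness of $q$ yields a finite subcollection whose $L$-join is already $\geq q$, and this feeds back into the finite subfamily $I'$. Once atom-compactness has been handled, algebraicity of $\mc{K}^*(L)$ follows at once from atomisticity.
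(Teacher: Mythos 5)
Your high-level frame --- show $\mc{K}^*(L)$ is atomic (hence atomistic) and that every atom is compact --- is the same as the paper's, and the atomisticity half is fine. The gap is in the compactness half, and it is a genuine one: you try to deduce compactness of the atom $p\covers q$ in $\mc{K}^*(L)$ from compactness of $q$ in $L$, but the relation $\sqsubseteq$ is not controlled by the $L$-order of endpoints. Having $p\covers q\sqsubseteq w$ requires a \emph{single} interval $[w_{2j},w_{2j+1}]$ of $w$ to contain $[p,q]$, and which intervals of a join merge into one is governed by the commutation structure of the atoms involved (two atoms commute iff their four terms form a chain, Lemma~\ref{xdr}), not by where their endpoints sit in $L$. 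Concretely, your ``feeds back into the finite subfamily $I'$'' step asserts in effect that $q\leq\bigvee_L\{v_i:i\in I'\}$ yields $p\covers q\sqsubseteq\bigsqcup_{i\in I'}y^i$; this is false. Take a single atom $y^1:u_1\covers v_1$ with $u_1\geq q$: then $v_1\geq q$, yet $y^1$ is orthogonal to $p\covers q$ (it lies under the term $q<1$ of $(p\covers q)^\perp$), so $p\covers q\not\sqsubseteq y^1$. The same phenomenon blocks your main case: an infinite family $S$ of atoms can have $\bigsqcup S=1$ (e.g.\ all covers along one infinite maximal chain), so the interval of $z$ containing $[p,q]$ is $[0,1]$ and its endpoints carry no finite ``data'' at all; whether a finite subfamily suffices then depends entirely on which atoms of $S$ fail to commute with $p\covers q$, which your argument never examines. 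Your tracing of endpoints through the recursive join construction is also only asserted, not carried out, and for infinite joins (defined via the MacNeille completion, not by a formula) it is not clear what it would mean.

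The paper closes exactly this gap by a commutation analysis: it classifies the atoms and shows that each atom fails to commute with only finitely many others (four for internal atoms, six for external ones); given $p\covers q\sqsubseteq\bigsqcup S$, the Foulis--Holland theorem forces $S$ to contain at least two of these non-commuting atoms (if it contained only one, distributing over the remaining, commuting part of $S$ would give $p\covers q=0$), and the pairwise join of any two of them already lies above $p\covers q$. Your observations about $L$ itself (finite downsets, $1$ not covered, compactness of all non-top elements) are correct and not useless --- they are implicitly why the non-commutation is so locally finite --- but on their own they do not yield compactness in $\mc{K}^*(L)$; the horizontal sum $\mc{P}(\NN)\oplus\mc{P}(\NN)$ in Section~2 arises as $\mc{K}^*$ of two $\omega+1$ chains glued at their bounds and is not algebraic, which shows that some interaction between distinct maximal chains must be invoked.
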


\begin{proof}
By Theorem~\ref{ratso}, $\mc{K}^*(L)$ is a complete and atomic \ts{oml}. Since every atomic \ts{oml} is atomistic, it is enough to show that that atoms of $\mc{K}^*(L)$ are compact. For this, it is enough to show that if $p$ is an atom of $\mc{K}^*(L)$ and $S$ is a set of atoms that does not contain $p$ and satisfies $p\sqsubseteq \bigsqcup S$, then there is a finite $S'\subseteq S$ with $p\sqsubseteq \bigsqcup S'$. 

Referring to the right side of Figure~\ref{fig1}, we establish some terminology. For each element $x \in L$ that has degree four in the Hasse diagram of $L$, let $x_\downarrow: 0 < x$ and $x_\uparrow:x < 1$. They are complementary elements of $\K^*(L)$. The atoms of $\K^*(L)$ correspond to the edges of the Hasse diagram; an atom is below $x_\downarrow$ in $\K^*(L)$ iff both vertices of the corresponding edge are below $x$ in $L$, and an atom is below $x_\uparrow$ iff both vertices of its corresponding edge are above $x$ in $L$. There are 8 atoms labelled at right in Figure~\ref{fig1}, atoms $a,\ldots,h$. We call $a,b$ red atoms and $c,\ldots,f$ blue atoms. 
The red atoms are the only atoms of $\K^*(L)$ that are neither below $x_\downarrow$ nor below $x_\uparrow$. The blue atoms are the only atoms of $\K^*(L)$ that do not commute with the red atom $a$, and it is likewise for the red atom $b$. Indeed, the blocks of $\K^*(L)$ correspond to the maximal chains of $L$ by Lemma~\ref{xdr}, and the blue atoms correspond to exactly those edges of the Hasse diagram that cannot be in the same ascending path as the edge labelled $a$.

We classify the atoms of $\mc{K}^*(L)$ as internal atoms, external atoms, and exceptional atoms.  The internal atoms are those atoms such that both endpoints of their corresponding edge have degree four. The external atoms are those atoms such that one endpoint of their corresponding edge has degree two and the other has degree four, except the atom $a_{10} \cov a_{11}$. The exceptional atoms are the remaining atoms; they are $a_{01} \cov a_{02}$, $a_{02} \cov a_{03}$, $a_{01} \cov a_{10}$, $a_{02} \cov a_{11}$, and $a_{10} \cov a_{11}$.
\vspace{1ex}

\noindent {\em Claim 1:} Each internal atom is compact. 

\begin{proof}[Proof of claim.]
Let $c:x \cov y $ be an internal atom of $\K^*(L)$. To show that $c$ is compact, let $S$ be a set of atoms of $\K^*(L)$ such that $c\sqsubseteq\bigsqcup S$. Without loss of generality, $c \not \in S$. There are four atoms of $\mc{K}^*(L)$ that do not commute with $c$, the two red atoms $a$ and $b$ and the two blue atoms $e$ and $f$. All other atoms, besides $c$ itself, are orthogonal to $c$. So $S$ must contain at least one of these four atoms that do not commute with $c$, as otherwise $\bigsqcup S$ would be orthogonal to $c$. Note that the join of any two of these four atoms lies above $c$ in $\K^*(L)$, so if $S$ contains at least two of these four atoms, then trivially there is a finite subset $S' \subseteq S$ such that $c \sqsubseteq \bigsqcup S'$. We are left then only to consider the case where exactly one of these four atoms belongs to $S$.

Assume that $a$ is the unique atom in $S$ that does not commute with $c$. The blue atoms $c,\ldots,h$ are exactly the atoms that do not commute with $a$. Assume that $S$ does not contain a blue atom, and let $T=S\setminus\{a\}$. Since $c\sqsubseteq\bigsqcup S$, we have that $c = c\sqcap \big(a\sqcup \bigsqcup T\big)$. Since each element of $T$ commutes with both $a$ and $c$, we also have that $\bigsqcup T$ commutes with both $a$ and $c$. By the Foulis-Holland Theorem \cites{BrunsHarding,KalmbachBook}, we may distribute $c$ to obtain that $c=(c\sqcap a)\sqcup \big( c\sqcap \bigsqcup T\big)$. This is a contradiction because $a\sqcap c=0$ and every element of $T$ is orthogonal to $c$. Therefore, $S$ does contain a blue atom. The join of $a$ with any of the blue atoms lies above $c$, so we obtain a finite subset of $S$ whose join is above $c$.

We have shown that if $a$ is the unique atom in $S$ that does not commute with $c$, then $S$ also contains one of the blue atoms $c,\ldots, h$ whose join with $a$ lies above $c$. If $b$ is the unique atom in $S$ that does not commute with $c$, then the we may apply the same argument. If $e$ (resp.~$f$) is the unique atom in $S$ that does not commute with $c$, then we may apply the same argument to obtain that $S$ must contain an atom that does not commute with $e$ (resp.~$f$) and the join of this atom with $e$ (resp.~$f$) lies above $c$. Therefore, $c$ is compact.
\end{proof}

\noindent {\em Claim 2:} Each external atom is compact. 

\begin{proof}[Proof of claim.]
Let $a : w \cov y$ be an external atom of $\K^*(L)$, where $w$ is a vertex of degree two and $y$ is a vertex of degree four. Let $x$ be the predecessor of $y$ that is distinct from $w$; it is a vertex of degree four, as depicted in Figure~\ref{fig1}. To show that $a$ is compact, let $S$ be a set of atoms of $\K^*(L)$ such that $a \sqsubseteq S$. Without loss of generality, $a \not \in S$. The blue atoms $c, \ldots, h$ are exactly the atoms that do not commute with $a$. All other atoms, besides $a$ itself, are orthogonal to $a$, so $S$ must contain at least one blue atom. The join of $b$ with any blue atom lies above $a$, so if $b \in S$, then $S$ contains two atoms whose join lies above $b$.

It remains to consider the case $a, b \not \in S$. Let $p, q \in \K^*(L)$ be given by
$$p = \bigsqcup \{ s \in S : s \sqsubseteq x_{\downarrow}\}, \qquad \qquad q = \bigsqcup \{s \in S : s \sqsubseteq x_\uparrow\}.$$
Since $s \sqsubseteq x_\downarrow$ or $s \sqsubseteq x_\uparrow$ for each $s \in S$, we have that $a \sqsubseteq \bigsqcup S = p \sqcup q$. We reason that
$
c \sqsubseteq d \sqcup a \sqsubseteq x_\downarrow \sqcup p \sqcup q = x_\downarrow \sqcup q$ and $
d \sqsubseteq a \sqcup c \sqsubseteq p\sqcup q\sqcup x_\uparrow = p\sqcup x_\uparrow.
$
The sequences $x_\downarrow, x_\uparrow \in \K^*(L)$ are orthogonal. Thus, $c, q \sqsubseteq x_\uparrow$ implies that $c \sqsubseteq q$, and $d, p \sqsubseteq x_\downarrow$ implies that $d \sqsubseteq p$. Altogether, we have that $c, d \sqsubseteq p \sqcup q = \bigsqcup S$. 

The atoms $c$ and $d$ are internal, so by claim 1, they are compact. In this way, we obtain a finite subset $S' \subseteq S$ such that $c,d \sqsubseteq \bigsqcup S'$. We conclude that $a \sqsubseteq c \sqcup d \sqsubseteq \bigsqcup S'$. Thus, $a$ is compact. We may apply the same argument to show that each external atom $b: z \cov w$ such that $w$ is a vertex of degree two and $z$ is a vertex of degree four is compact.
\end{proof}

We have a lattice embedding $L \to L$ that is given by $a_{ij} \mapsto a_{(i+1)j}$ and $1 \mapsto 1$. It induces a lattice embedding $\K^*(L) \to \K^*(L)$ that maps each atom to an internal atom or an external atom. It follows immediately that each atom of $\K^*(L)$ is compact. Therefore, $\mc{K}^*(L)$ is algebraic. 
\end{proof}

\begin{prop}
For $L$ the Reiger-Nishimura lattice, $\mc{K}^*(L)$ has the 2-covering property. 
\end{prop}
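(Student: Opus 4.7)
The plan is to apply the \ts{oml} interval isomorphism $[x, x \vee a] \cong [0, (x \vee a) \wedge x^\perp]$ recalled in Section~2, and thereby reduce the claim to bounding the height of $z := (x \vee a) \wedge x^\perp$ in $\mc{K}^*(L)$ by $2$.

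If $a$ commutes with $x$, then by Lemma~\ref{xdr} both lie in a common block $\mc{K}^*(C)$ for some maximal chain $C$ of $L$, and by Corollary~\ref{who} this block is a complete Boolean subalgebra of $\mc{K}^*(L)$. Inside it, $z = a \wedge x^\perp$ is either $0$ or $a$ since $a$ is an atom, giving height at most $1$, which is better than required.

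For the substantive non-commuting case, write $a : u \cov v$. By Lemma~\ref{xdr} the set $a \cup x$ is not a chain, so there is a term of $x$ incomparable with $u$ or $v$. The plan is to classify the atom $a$ via the internal/external/exceptional distinction introduced in the algebraicity proof and, exploiting the lattice self-embedding $L \to L$ given by $a_{ij}\mapsto a_{(i+1)j}$ and $1 \mapsto 1$, reduce to a small number of representative atoms. For each representative I would enumerate the admissible sequences $x$ that fail to commute with $a$, noting that terms of $x$ outside the immediate Hasse-diagram neighborhood of $u$ or $v$ commute with $a$ and so contribute nothing to $z$. For each resulting local configuration I would compute $x \vee a$ explicitly via the sequence-level operations of Lemma~\ref{dvd} and Proposition~\ref{fredo}, read off $z$, and verify directly that $z$ has height at most $2$. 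The key structural input is that the portion of the Rieger-Nishimura Hasse diagram relevant to any given atom is a bounded local configuration (a degree-four vertex together with its immediate neighbors), keeping $z$ within an \ts{oml}-subinterval of height at most $2$.

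The main obstacle is the case analysis in the non-commuting case. Although the preliminary reduction disposes of all terms of $x$ that commute with $a$, one must still enumerate the several local shapes that the relevant portion of $x$ can take around each representative atom, and in each case carefully compute the join and meet to confirm that no four-element chain $x < y_1 < y_2 < x \vee a$ arises. The bookkeeping, rather than any single calculation, is the real difficulty.
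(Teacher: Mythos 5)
Your overall strategy is exactly the paper's: pass to $[0,(x\vee a)\wedge x^\perp]$, dispose of the commuting part, and then case-split on the type of atom (internal/external/exceptional), using the self-similarity of the Rieger--Nishimura lattice to cut down the number of representatives. The commuting case is handled correctly. One small imprecision in the reduction: what you want to strip away is not the \emph{terms} of the sequence $x$ that are comparable with $u$ and $v$, but the orthogonal \emph{atoms} below $x$ that commute with $a$. The clean way to do this, and the way the paper does it, is to write $x$ as a join of pairwise orthogonal atoms, split them into those that commute with $a$ (call their join $\bigsqcup T$) and those that do not (join $\bigsqcup U$), observe that every member of $T$ is actually \emph{orthogonal} to $a$ because $a\not\sqsubseteq x$, and then apply Foulis--Holland/orthomodularity to get $(a\vee x)\wedge x^\perp=(a\vee\bigsqcup U)\wedge(\bigsqcup U)^\perp$. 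Your "contribute nothing to $z$" claim is true but needs this argument; it is not automatic from commutation alone.

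The genuine gap is that the decisive step --- the verification that $z$ has height at most $2$ in each local configuration --- is announced but never carried out, and that is where essentially all of the content of the proposition lives. A raw enumeration of "admissible sequences $x$ that fail to commute with $a$" is also not finite, since $x$ ranges over an infinite \textsc{oml}; you need structural observations to collapse the cases. The paper supplies two: for an internal atom $c$ there are exactly four atoms not commuting with it (the atoms $a,b,e,f$ of Figure~\ref{fig1}), and the join of \emph{any two} of them already lies above $c$, which forces the reduced $x$ to be a single atom and makes $z$ an atom; for an external atom $a$ the non-commuting atoms are $c,\ldots,h$ and one bounds $a\vee x$ above by $e\vee f\vee g\vee h$, an element of height four, and then argues by height count. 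Without observations of this kind your "bounded local configuration" claim does not by itself yield a finite, checkable list, so as written the proposal is a plausible plan rather than a proof.
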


\begin{proof}
Let $x,y\in\mc{K}^*(L)$ with $x$ being an atom. We must show the interval $[y,x\sqcup y]$ in $\mc{K}^*(L)$ has height at most two. Since this interval is isomorphic to the interval $[0,(x\sqcup y)\sqcap y^\perp]$ \cite{BrunsHarding}*{Sec.~2}, it is enough to show that $[0,(x\sqcup y)\sqcap y^\perp]$ has height at most two. As is customary, we refer to the height of an interval $[0,z]$ as the \emph{height} of $z$. Thus we must show that $(x\sqcup y)\sqcap y^\perp$ has height at most two. This is obvious when $x\sqsubseteq y$ or $y=0$, so we assume $x\not\sqsubseteq y$ and $y\neq 0$. The equality $(x\sqcup y)\sqcap y^\perp =  x \sqcup y $ would imply that $y\sqsubseteq y^\perp$, so we have that $(x\sqcup y)\sqcap y^\perp \sqsubset x \sqcup y$.

Since $\mc{K}^*(L)$ is atomic, $y$ is equal to the join of a family $S$ of pairwise orthogonal atoms. Let $T$ be the atoms in $S$ that commute with $x$ and $U$ be the atoms in $S$ that do not commute with $x$. Since $x\not\sqsubseteq y$, each member of $T$ is orthogonal to $x$. We compute that
\[
(x\sqcup y)\sqcap y^\perp \,\,=\,\, 
\left(x\sqcup\bigsqcup U\sqcup\bigsqcup T\right)\sqcap \left(\bigsqcup T\right)^\perp\sqcap \left(\bigsqcup U\right)^\perp 
\,\,=\,\, \left(x\sqcup \bigsqcup U\right)\sqcap \left(\bigsqcup U\right)^\perp.
\]

\noindent For the first equality we use $y=\bigsqcup S = \bigsqcup T\sqcup\bigsqcup U$. For the second, we use the orthomodular law. Indeed, each member of $T$ is orthogonal to $x$ and to each member of $U$, so we have that $x\sqcup\bigsqcup U\leq (\bigsqcup T)^\perp$. Hence, to verify the 2-covering property, we may assume that $y$ is a join of atoms that do not commute with $x$. 

Assume that $x$ is an internal atom. Without loss of generality, we can assume it is the atom $c$ in Figure~\ref{fig1}. Then there are four atoms that do not commute with $x=c$: the two red atoms $a$ and $b$ and the two blue atoms $e$ and $f$. Since the join of any two of these four atoms lies above $x=c$ and we have assumed that $x\not\sqsubseteq  y$, it must be that $y$ is one of these four atoms. In each case, $(x\sqcup y)\sqcap y^\perp$ is an atom and hence has height one. 

Assume that $x$ is an external atom, without loss of generality the atom $a$ in Figure~\ref{fig1}. The atoms that do not commute with $x=a$ are the blue atoms $c, \ldots, h$. Hence $y$ is a join of an orthogonal family of blue atoms. We have that $a, \ldots, h \sqsubseteq e \sqcup f \sqcup g \sqcup h$, so $x \sqcup y \sqsubseteq e \sqcup f \sqcup g \sqcup h$. If this inequality is strict, then $x \sqcup y \sqsubset e \sqcup f \sqcup g \sqcup h$ has height at most three and hence $(x\sqcup y)\sqcap y^\perp \sqsubset x \sqcup y$ has height at most two. If instead $x \sqcup y = e \sqcup f \sqcup g \sqcup h$, then $(x \sqcup y) \sqcap y^\perp$ has height at most two, because $y$ must be the join of at least two blue atoms and $\K^*(L)$ has no elements of height greater than two that are both below $x \sqcup y$ and that are orthogonal to any two blue atoms.

Each of the five exceptional atoms may be treated by an appropriate variant of one of the two above arguments. 
\end{proof}

\begin{thm}
Let $L$ be the Rieger-Nishimura lattice depicted in Figure~\ref{fig1}. Then, $\mc{K}^*(L)$ is a directly irreducible, algebraic \ts{oml} with the 2-covering property that has infinite height.
\end{thm}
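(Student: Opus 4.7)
The plan is to harvest the three immediately preceding propositions and supply only the one missing ingredient, infinite height. Direct irreducibility, algebraicity, and the 2-covering property of $\mc{K}^*(L)$ are exactly the content of the three propositions just proved, so the statement of the theorem is essentially a summary combined with the observation that the construction does produce examples of unbounded height.

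For the infinite-height claim, I would invoke the fact that the Rieger-Nishimura lattice admits maximal chains of order type $\omega+1$. For example, reading off Figure~\ref{fig1}, the sequence $0 < a_{0,1} < a_{0,2} < a_{0,3} < a_{1,3} < a_{2,3} < \cdots < 1$ can be extended to such a maximal chain $C$ of $L$. By Proposition~\ref{bnm}, $\mc{K}^*(C)$ is isomorphic to the powerset Boolean algebra $\mc{P}(C\setminus\{1\})$, and $C\setminus\{1\}$ is countably infinite, so $\mc{K}^*(C)$ is an infinite-height Boolean algebra. By Lemma~\ref{xdr}, $\mc{K}^*(C)$ sits inside $\mc{K}^*(L)$ as a block (in particular as a subposet), so every chain of $\mc{K}^*(C)$ is a chain of $\mc{K}^*(L)$. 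Hence $\mc{K}^*(L)$ has chains of every finite length and is therefore of infinite height.

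Alternatively, and perhaps more transparently, one can argue directly in terms of atoms: by Proposition~\ref{cde}, each cover $u \cov v$ in $L$ is an atom of $\mc{K}^*(L)$, and the covers $a_{0,1}\cov a_{0,2}$, $a_{1,0}\cov a_{1,1}$, $a_{2,0}\cov a_{2,1}$, $\ldots$ are pairwise orthogonal atoms of $\mc{K}^*(L)$ (they lie in pairwise disjoint half-open intervals inside any enveloping chain, so the corresponding sequences are orthogonal under $\sqsubseteq$). Taking partial joins of such an infinite orthogonal family gives a strictly increasing chain in $\mc{K}^*(L)$ of infinite length.

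The proof itself should therefore be one short paragraph. No step is a genuine obstacle; the only point to double-check is that a chain in $L$ of the claimed type really exists in the Rieger-Nishimura lattice, which is immediate from the depiction in Figure~\ref{fig1}.
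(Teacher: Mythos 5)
Your proposal matches the paper, which states this theorem without a separate proof as the immediate conjunction of the three preceding propositions, the infinite-height claim being clear from the existence of an infinite maximal chain $C$ in $L$ and the resulting infinite block $\mc{K}^*(C)\cong\mc{P}(C\setminus\{1\})$. One caveat on your ``alternative'' argument: the atoms $a_{01}\cov a_{02}$ and $a_{10}\cov a_{11}$ are not orthogonal (indeed $a_{02}$ and $a_{10}$ are incomparable, so these atoms do not even commute by Lemma~\ref{xdr}); a genuinely orthogonal infinite family is obtained instead by taking the consecutive covers along a single infinite maximal chain of $L$.
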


\section{Keller's example}

In this section, we provide an example of a directly irreducible, completely hereditarily atomic \ts{oml} with the covering property. This \ts{oml} is built by a process that is similar to the construction of $\mc{C}(\mc{H})$ for $\mc{H}=\ell^2$, the space of square summable real-valued sequences, but with sequences taking values in a particular non-archimedean ordered field $\KK$ rather than in the reals. This ingenious construction was pioneered by Keller \cite{Keller}. See \cite{GrossKunzi} for a survey. Our basic ingredient is a {\em Hermitian space}, a vector space $\E$ over an involutive division ring $\KK$ together with an anisotropic Hermitian form $\langle \cdot |\cdot \rangle$ on $\E$.

\begin{thm}\label{thm: representation}
If $\E$ is a Hermitian space, then its \ts{ol} $\mc{L}_{\perp\perp}$ of biorthogonally closed subspaces is directly irreducible, complete, atomistic, and has the covering property. It is an \ts{oml} iff $\E$ is an {\em orthomodular space}, meaning $\E=X+X^\perp$ for each $X\in\mc{L}_{\perp\perp}$. Conversely, if $L$ is a directly irreducible, complete, atomistic $\ts{ol}$ with the covering property that is of height at least 4, then there is a Hermitian space $\E$ so that $L$ is isomorphic to the \ts{ol} $\mc{L}_{\perp\perp}$. 
\end{thm}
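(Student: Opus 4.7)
The plan is to prove the three assertions in turn, with the coordinatization step the main substantive obstacle.

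For the first assertion, I would begin by noting that $X\mapsto X^\perp$ on the lattice of all linear subspaces of $\E$ is an antitone Galois connection. Its fixed-point set $\mc{L}_{\perp\perp}$ is therefore a complete lattice with $X\wedge Y=X\cap Y$ and $X\vee Y=(X+Y)^{\perp\perp}$, and $\perp$ restricted to $\mc{L}_{\perp\perp}$ is an order-reversing involution. Anisotropy of $\langle\cdot\,|\,\cdot\rangle$ gives $X\cap X^\perp=0$ for every subspace, so $\perp$ is an orthocomplementation and $\mc{L}_{\perp\perp}$ is an ortholattice. Every finite-dimensional subspace of a Hermitian space is biorthogonally closed, so each $1$-dimensional subspace $\langle v\rangle$ belongs to $\mc{L}_{\perp\perp}$, and a standard computation shows $X=\bigvee\{\langle v\rangle:v\in X\}$ in $\mc{L}_{\perp\perp}$ for every $X\in\mc{L}_{\perp\perp}$, giving atomisticity. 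The covering property follows because $(X+\langle v\rangle)^{\perp\perp}$ has codimension at most one over $X$ whenever $v\notin X$. Direct irreducibility is obtained by arguing that if $X$ is a proper nonzero biorthogonally closed subspace, then some $1$-dimensional subspace lies neither in $X$ nor in $X^\perp$, producing an element of $\mc{L}_{\perp\perp}$ that fails to commute with $X$, so $X$ cannot be central.

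For the characterization of orthomodularity, the forward direction is concrete: assuming $\E=X+X^\perp$ for every biorthogonally closed $X$ and taking $X\leq Y$ in $\mc{L}_{\perp\perp}$, any $y\in Y$ decomposes as $y=x+z$ with $x\in X$ and $z\in X^\perp$, whence $z=y-x\in Y$, so $y\in X+(X^\perp\cap Y)$. This yields $X+(X^\perp\cap Y)=Y$ as subspaces, and since $Y$ is already biorthogonally closed, this equals $X\vee(X^\perp\wedge Y)$ in $\mc{L}_{\perp\perp}$, which is the orthomodular law. The reverse direction is a well-known but delicate calculation; I would carry it out by applying the orthomodular law to pairs $X\leq X\vee\langle v\rangle$ for $v\in\E$ and reducing the set identity $X+X^\perp=\E$ to the lattice identity, using anisotropy and atomisticity to pin down the required decomposition of each $v$.

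The converse assertion is the main obstacle, and for it I would follow the classical coordinatization of propositional systems. By Proposition~\ref{swr}, $L$ is strongly atomic, and since any \ts{ol} is self-dual, the covering property for $L$ entails both semimodularity and dual semimodularity. Consequently every finite-height interval of $L$ is a complemented modular lattice, so the sublattice $L_{\mathrm{fin}}$ of finite-height elements is an irreducible complemented modular lattice of height at least four. The classical Birkhoff--Frink coordinatization theorem then furnishes a left vector space $V$ over a division ring $\KK$ together with an isomorphism of $L_{\mathrm{fin}}$ with the lattice of finite-dimensional subspaces of $V$. Transporting the orthocomplementation along this isomorphism and applying the Birkhoff--von Neumann construction equips $\KK$ with an involutive anti-automorphism and $V$ with an anisotropic Hermitian form whose orthogonality on finite-dimensional subspaces recovers $\perp$. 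Finally, atomicity and completeness of $L$ combined with the fact that each biorthogonally closed subspace of $V$ is the join in $\mc{L}_{\perp\perp}$ of the $1$-dimensional subspaces it contains extend this identification to all of $L$, yielding $L\cong\mc{L}_{\perp\perp}$ for the Hermitian space $\E=V$. The single most delicate step is the extraction of the involution on $\KK$ and the Hermitian form on $V$ from the orthocomplementation on $L_{\mathrm{fin}}$, for which I would follow the account in \cite{GrossKunzi}.
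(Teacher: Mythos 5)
Your outline is essentially correct, but it is worth saying plainly that the paper does not prove this theorem: it simply cites \cite{MaedaMaeda}*{Thm.~34.2, 34.5} and \cite{GrossKunzi}*{p.~191}, whereas you reconstruct the classical arguments. Your reconstruction follows the standard route, and the places where it is thinner than it should be are exactly the places where the cited sources do real work. For the covering property, the mechanism is that when $v\notin X=X^{\perp\perp}$ one picks $u\in X^{\perp}$ with $\langle u,v\rangle\neq 0$ and checks $X^{\perp}=(X^{\perp}\cap v^{\perp})\oplus \langle u\rangle$; dualizing along the anti-automorphism $\perp$ gives the cover, and a further short computation shows $X+\langle v\rangle$ is already biorthogonally closed, which is what justifies your codimension claim and also feeds your orthomodularity equivalence (it makes $X^{\perp}\wedge(X\vee\langle v\rangle)$ at most one-dimensional, so the orthomodular law forces it to contain a vector $x_0+v$ and hence $v\in X+X^{\perp}$). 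In the converse direction, the step from ``self-dual atomistic \ts{ol} with the covering property'' to ``the finite-height elements form an \emph{irreducible} complemented modular lattice'' is the substantive content of the Maeda--Maeda theory of DAC-lattices and M-symmetry: semimodularity plus dual semimodularity does give modularity in finite height, but complementedness of $[0,y]$ and, above all, irreducibility of the resulting projective geometry (connectedness of the atoms under perspectivity, which is what the coordinatization theorem actually requires) need arguments and do not drop out of direct irreducibility of $L$ in one line. Finally, the coordinatization you want is the Birkhoff--Menger/von~Neumann representation of irreducible complemented modular lattices (the lattice form of the fundamental theorem of projective geometry), not the Birkhoff--Frink theorem, which concerns subalgebra lattices; the extraction of the involution and the anisotropic Hermitian form from the orthocomplementation is the Birkhoff--von~Neumann step, which you correctly defer to \cite{GrossKunzi}. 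What your approach buys is a self-contained account of why the theorem is true; what the paper's citation buys is brevity and the guarantee that these genuinely nontrivial intermediate steps are carried out in full in the literature.
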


This theorem combines \cite{MaedaMaeda}*{Thm.~34.2, 34.5} and \cite{GrossKunzi}*{p.~191}.  
It situates the work of Gross, Keller, K\"unzi and others. It also provides an alternative proof that any directly irreducible, algebraic \ts{oml} with the covering property has finite height:

\begin{thm}[i.e., \ref{predolph}]
Let $L$ be a directly irreducible, algebraic \ts{oml} with the covering property. Then, $L$ is modular and has finite height.
\end{thm}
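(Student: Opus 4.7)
The plan is to invoke Theorem~\ref{thm: representation} to identify $L$ concretely as the ortholattice of biorthogonally closed subspaces of an orthomodular space, and then to use algebraicity to force finite-dimensionality, from which both modularity and finite height follow at once.

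First I would show that $L$ is atomistic: algebraicity gives weak atomicity by Proposition~\ref{nbj}(3), and weak atomicity is equivalent to being atomistic in \ts{oml}s by Proposition~\ref{swr}(2). Next I handle the case where $L$ has small height. If the height of $L$ is at most three, then $L$ has finite length, and modularity follows from Proposition~\ref{swr}(3) together with the classical fact that a finite-length lattice that is both semimodular and dually semimodular is modular. We may therefore assume $L$ has height at least four.

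Under this assumption, Theorem~\ref{thm: representation}, applied to $L$ (as a directly irreducible, complete, atomistic ortholattice with the covering property and height at least four), yields an orthomodular space $\E$ with $L \iso \mc{L}_{\perp\perp}(\E)$. I would then argue that $\E$ is finite-dimensional. Observe first that in any algebraic atomistic lattice every atom is compact, since an atom $a$ is a join of compact elements below it and the only candidates for such are $0$ and $a$ itself. Assume for contradiction that $\E$ is infinite-dimensional, and by Zorn's lemma pick a maximal orthogonal family $\{f_i : i \in I\}$ of nonzero vectors in $\E$, with $I$ infinite. Fix a countable orthogonal subfamily $f_1, f_2, \dots$ and, for each $n \geq 2$, set $Y_n = \langle f_1 - f_n \rangle \in L$. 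On the one hand, $\bigvee_{n \geq 2} Y_n$ equals $\E$ in $\mc{L}_{\perp\perp}(\E)$: its orthogonal complement consists of vectors $w$ satisfying $\langle w | f_n \rangle = \langle w | f_1 \rangle$ for every $n \geq 2$, and a Bessel-type argument characteristic of orthomodular spaces rules out such a nonzero $w$. Hence $\langle f_1 \rangle \leq \bigvee_{n \geq 2} Y_n$. On the other hand, for any finite $F \subseteq \{2, 3, \dots\}$, the subjoin $\bigvee_{n \in F} Y_n$ is the finite-dimensional (hence biorthogonally closed) subspace $\mathrm{span}(f_1 - f_n : n \in F)$, and an equation $\sum_{n \in F} c_n (f_1 - f_n) = f_1$ would require both $\sum_n c_n = 1$ and each $c_n = 0$, which is impossible. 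Thus $\langle f_1 \rangle$ is not compact, contradicting algebraicity. So $\E$ is finite-dimensional, and $L$ is isomorphic to the subspace lattice of a finite-dimensional vector space, which is modular and of finite height.

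The main obstacle is the Bessel-type fact used to show that $\bigvee_{n \geq 2} Y_n = \E$, which reflects the subtle interplay between orthomodularity, anisotropicity, and the summability structure of infinite-dimensional orthomodular spaces. This is standard in the classical theory of orthomodular spaces, but requires attentive handling in Keller's non-archimedean setting.
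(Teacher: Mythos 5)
Your overall strategy is the same as the paper's: reduce to height at least four, invoke Theorem~\ref{thm: representation} to realize $L$ as $\mc{L}_{\perp\perp}$ of a Hermitian space $\E$, and contradict compactness of an atom using the family of difference vectors $f_1-f_n$. But there is a genuine gap at the crucial step, namely the claim that $\langle f_1\rangle \leq \bigvee_{n\geq 2} Y_n$. You justify this by a ``Bessel-type argument,'' and this is exactly what is not available here: Theorem~\ref{thm: representation} hands you only a Hermitian space over an involutive division ring, with no norm, valuation, ordering, or topology, so there is no sense in which $\langle w\,|\,f_n\rangle \to 0$ and no Bessel inequality to invoke. (Your stronger assertion that $\bigvee_{n\geq 2} Y_n = \E$ is moreover false as stated when the maximal orthogonal family is larger than your chosen countable subfamily: any $f_j$ outside that subfamily is orthogonal to every $Y_n$. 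Only the containment $\langle f_1\rangle\leq\bigvee Y_n$ is needed, but even that requires an argument.)

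The paper closes precisely this gap by using algebraicity \emph{twice}, in a particular order. First, since atoms (and hence finite-dimensional, biorthogonally closed subspaces) are compact and the maximal orthogonal set $F$ satisfies $F^{\perp\perp}=\E$, every atom lies under a finite join of the $\KK f_\alpha$; hence $F$ is a Hamel basis. This is the algebraic substitute for Bessel: every $w\in\E$ is a \emph{finite} linear combination of the $f_\alpha$, so $\langle w\,|\,f_\alpha\rangle=0$ for all but finitely many $\alpha$, and therefore any $w$ orthogonal to all the differences has all its ``Fourier coefficients'' equal and almost all zero, hence orthogonal to $f_1$ (equivalently, in the paper's phrasing, $G^{\perp\perp}=\E$). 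Only then does one run the spanning/compactness contradiction. If you insert this step --- derive the Hamel-basis property from compactness of atoms before attempting to show $\langle f_1\rangle\leq\bigvee_{n\geq 2} Y_n$ --- your proof goes through and coincides with the paper's; without it, the key containment is unsupported.
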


\begin{proof}
If $L$ is of height strictly less than four, then $L$ is modular because it is straightforward to show that $L$ does not contain the pentagon as a sublattice. Hence we assume that $L$ is of height at least four. Thus, there is a Hermitian space $\E$ such that $L$ is isomorphic to the ortholattice $\mc{L}_{\perp\perp}$ of biorthogonally closed subsets of $\E$. If $\E$ is finite-dimensional, the result is clear. Toward a contradiction, assume that $\E$ is not finite-dimensional. Then $\E$ has a maximal orthogonal set $F = \{f_\alpha\}_{\alpha \in \beta}$ where $\beta$ is an infinite ordinal. Finite-dimensional subspaces of $\E$ are biorthogonally closed \cite{MaedaMaeda}*{Lem.~33.3}, and as $L$ is algebraic, they are compact elements of $\mc{L}_{\perp\perp}$ . This implies that each atom $\KK f$ is under a finite join of the atoms $\KK f_i$, so $F$ is a basis for $\E$. We now observe that the $G = \{f_\alpha - f_{\alpha+1}\}_{\alpha \in \beta}$ has the property that $G^{\perp\perp} = \E$, so $G$ is a basis for $\E$. This is absurd since $f_0$ is not in the span of $G$. Therefore, $\E$ has finite height and is modular.
\end{proof}

We next construct the Hermitian space whose biorthogonal sets will provide the example desired in this section. It is described in more detail in \cite{Keller-Ochsenius-95}*{Sec.~4}. Let $\Gamma$ be the abelian group $\ZZ \oplus \ZZ \oplus \cdots$ with the  \emph{reverse lexicographic order}: $\gamma < \gamma'$ if $\gamma(n) < \gamma'(n)$ for the largest $n$ such that $\gamma(n) \neq \gamma'(n)$. Let $\KK$ be the field $\RR[[t^\Gamma]]$ of generalized power series, i.e., Hahn series, with exponents in $\Gamma$ and coefficients in $\RR$. So elements of $\KK$ are functions $x\:\Gamma\to\RR$ whose support is a well-ordered subset of $\Gamma$. Equip $\KK$ with the trivial involution and the valuation $\varphi:\KK\to\Gamma\cup\{\infty\}$ where $\varphi(x)=\min\{\gamma:x(\gamma)\neq 0\}$, for $x \in \KK \setminus \{0\}$, and $\varphi(0) = \infty$. As with all valued fields, $\KK$ carries a topology, where $x_n\to x$ if for each $\gamma\in\Gamma$ there is $N\in\NN$ with $\varphi(x-x_n)>\gamma$ for all $n>N$. 

\begin{thm}[\cite{Keller-Ochsenius-95}*{Sec.~4}] 
The valued field $(\KK,\varphi)$ is complete. 
\end{thm}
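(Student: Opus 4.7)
The plan is to prove completeness directly by constructing the limit of a Cauchy sequence coefficient by coefficient. The main technical obstacle will be to show that the function $\Gamma \to \RR$ so obtained has well-ordered support, so that it genuinely lies in $\KK$.

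Starting from a Cauchy sequence $(x_n)_{n \in \NN}$ in $\KK$, I would first observe that the Cauchy condition provides, for each $\gamma \in \Gamma$, an index $N_\gamma \in \NN$ such that $\varphi(x_n - x_m) > \gamma$ for all $n, m \geq N_\gamma$. By the definition of $\varphi$, this is equivalent to $x_n(\delta) = x_m(\delta)$ for every $\delta \leq \gamma$ and every $n, m \geq N_\gamma$. In particular, for each fixed $\delta \in \Gamma$ the sequence of real numbers $(x_n(\delta))$ is eventually constant, so I can define $x(\delta) \in \RR$ to be its eventual value.

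The key step is to prove that the support $S = \{\delta \in \Gamma : x(\delta) \neq 0\}$ is well-ordered. For this, I would fix any $\gamma \in \Gamma$ and note that whenever $\delta \leq \gamma$, the common value of $x_n(\delta)$ for $n \geq N_\gamma$ is both $x(\delta)$ and $x_{N_\gamma}(\delta)$, which gives $\{\delta \in S : \delta \leq \gamma\} \subseteq \mathrm{supp}(x_{N_\gamma})$. Since $\mathrm{supp}(x_{N_\gamma})$ is well-ordered by hypothesis on elements of $\KK$, so is every such initial segment of $S$. Any strictly descending sequence in $S$ would be contained in one such initial segment by taking $\gamma$ to be its first term, so no such descending sequence can exist, and $S$ is well-ordered. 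Hence $x \in \KK$. The convergence $x_n \to x$ then follows immediately: for each $\gamma \in \Gamma$ and each $n \geq N_\gamma$, the identity $x(\delta) = x_n(\delta)$ for all $\delta \leq \gamma$ yields $\varphi(x - x_n) > \gamma$, which is exactly the convergence condition. The only genuinely substantive step is the well-orderedness of $\mathrm{supp}(x)$, which crucially exploits the fact that the Cauchy condition bounds the portion of $\mathrm{supp}(x)$ lying at or below $\gamma$ by the single well-ordered set $\mathrm{supp}(x_{N_\gamma})$.
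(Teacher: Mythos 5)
Your proof is correct and complete. The paper itself offers no argument for this statement --- it is quoted from Keller--Ochsenius \cite{Keller-Ochsenius-95}*{Sec.~4} as a known property of the Hahn field $\RR[[t^\Gamma]]$ --- so there is nothing in the text to compare against; what you have written is the standard direct proof of (sequential) completeness of a Hahn series field. Each step checks out: the equivalence of $\varphi(x_n-x_m)>\gamma$ with agreement of all coefficients at indices $\delta\leq\gamma$ uses only that $\Gamma$ is totally ordered and that $\varphi$ is the minimum of the support; the coefficientwise limit is therefore well defined; and your key observation, that the initial segment $\{\delta\in\mathrm{supp}(x):\delta\leq\gamma\}$ is contained in the single well-ordered set $\mathrm{supp}(x_{N_\gamma})$, correctly rules out an infinite strictly descending sequence in $\mathrm{supp}(x)$ and hence shows $x\in\KK$. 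One remark: the classical result for Hahn fields is the stronger statement of spherical completeness (every nested family of valuation balls has nonempty intersection); your argument is the specialization of that proof to countable nested families, which suffices here because the value group $\Gamma=\bigoplus_{\NN}\ZZ$ under the reverse lexicographic order has countable cofinality (the elements $\delta_n$ are cofinal), so sequential completeness is the appropriate notion for the paper's topology.
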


For $n\in\NN$, let $\delta_n \in \Gamma$ be the sequence with $\delta_n(n)=1$ and $\delta_n(k)=0$ for $k\neq n$, and let $t_n\in\KK$ be the function $t_n:\Gamma\to\RR$ with $t_n(\delta_n)=1$ and $t_n(\gamma)=0$ for $\gamma\neq\delta_n$. Note $\varphi(t_n)=\delta_n$. Let $\E$ be the set of all sequences $f$ of elements of $\KK$ that are ``square-summable'' in the sense that $\sum_\NN f(n)^2t_n$ converges in the valuation topology, or equivalently with $f_n^2t_n\to 0$. Define $\<\cdot,\cdot\>:\E\times\E\to\KK $ by 
$$\< f,g\> = \sum_\NN f(n)g(n)t_n.$$
For convenience, we write $\< f\>$ for $\< f,f\>$. For $n\in\NN$, let $e_n$ be the sequence whose $n^{th}$ term is 1 and is 0 otherwise. Note that $\{e_n\}_{n \in \NN}$ is a maximal orthogonal family in $\E$ and $\langle e_n\rangle = t_n$. Finally, let  $\mc{L}$ be the \ts{ol} $\mc{L}_{\perp\perp}$ of biorthogonally closed subsets of $\E$. 

\begin{thm}\label{much}
$\mc{L}$ is an infinite height, directly irreducible, complete, atomic \ts{oml} with the covering property. 
\end{thm}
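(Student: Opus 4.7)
The plan is to reduce to Theorem~\ref{thm: representation}, which yields everything except the orthomodular law and infinite height, and then to handle these two residual claims separately.

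First, I would check that $(\E, \langle\cdot,\cdot\rangle)$ is a Hermitian space. The field $(\KK, \mathrm{id})$ is involutive, so only anisotropy of the form is nontrivial: for nonzero $f \in \E$, the Hahn series $\langle f\rangle = \sum_n f(n)^2 t_n$ has a term of least valuation coming from $n_0 = \min\{n : f(n) \neq 0\}$, namely $f(n_0)^2 t_{n_0}$, which is nonzero because $f(n_0)^2 > 0$ in $\RR$. Hence $\langle f\rangle \neq 0$. Theorem~\ref{thm: representation} then immediately supplies direct irreducibility, completeness, atomisticity (hence atomicity), and the covering property.

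For infinite height, the orthogonal family $(e_n)_{n \in \NN}$ spans a strictly increasing sequence of finite-dimensional subspaces, each of which is biorthogonally closed by \cite{MaedaMaeda}*{Lem.~33.3}; this gives an infinite ascending chain in $\mc{L}$.

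The heart of the argument, and the main obstacle, is showing that $\E$ is an orthomodular space: $\E = X + X^\perp$ for every $X \in \mc{L}$. My plan is the standard Keller-type construction of an orthogonal projection. Given $X$ and $f \in \E$, I would use Zorn's lemma to choose a maximal orthogonal family $(x_\alpha)$ in $X$ and form the formal Fourier series $p = \sum_\alpha \langle f, x_\alpha\rangle \langle x_\alpha\rangle^{-1} x_\alpha$. I would then show the partial sums converge in the valuation topology of $\E$, using completeness of $(\KK, \varphi)$, the strictly increasing valuations $\delta_n$ of the basis terms $t_n$, and the square-summability condition defining $\E$, to force the tail terms to have coordinates of arbitrarily large valuation. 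Using continuity of $\langle \cdot, y\rangle$ for each $y \in X^\perp$ together with $X = X^{\perp\perp}$, I would conclude $p \in X$, and orthogonality of $f - p$ to every $x_\alpha$ combined with maximality of the family would give $f - p \in X^\perp$, yielding the desired decomposition $f = p + (f - p)$. The delicate step — and where I expect the main difficulty — is the convergence argument, in which the non-archimedean topology on $\KK$, the well-ordered supports of Hahn series, and the square-summability of $f$ must be carefully coordinated.
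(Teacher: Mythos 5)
The routine parts of your proposal (anisotropy of the form, the reduction to Theorem~\ref{thm: representation}, and infinite height via the chain spanned by the $e_n$) are fine and match the paper's intent. But the central claim --- that $\E$ is an orthomodular space, i.e.\ $\E = X + X^\perp$ for every $X \in \mc{L}$ --- is exactly the hard content of the theorem, and your proposal does not prove it: it announces a plan for a Keller-type projection argument and explicitly defers ``the delicate step.'' The paper does not reprove this fact either; it cites \cite{Keller-Ochsenius-95}*{Thm.~4} for the orthomodularity of $\E$ and then invokes Theorem~\ref{thm: representation} together with \cite{GrossKunzi}*{p.~191}. So the correct fix is either to cite that result or to actually carry out the projection argument --- a sketch with an acknowledged hole is a genuine gap.

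Moreover, the sketch as written is aimed at the wrong structural feature. You propose to get convergence of $\sum_\alpha \langle f, x_\alpha\rangle \langle x_\alpha\rangle^{-1} x_\alpha$ from ``the strictly increasing valuations $\delta_n$ of the basis terms $t_n$'' and square-summability of $f$, but the family $(x_\alpha)$ is an arbitrary maximal orthogonal family in $X$, not the standard basis, and for a generic complete non-archimedean valued field the analogous $\ell^2$-type space is \emph{not} orthomodular. What makes Keller's example work is the \emph{type condition}: because $\Gamma = \ZZ \oplus \ZZ \oplus \cdots$ is reverse-lexicographically ordered and the weights $t_n$ have valuations $\delta_n$ lying in distinct classes of $\Gamma/2\Gamma$, orthogonal vectors have distinct types (Lemma~\ref{foul}), which forces the Fourier coefficients of any orthogonal family to tend to zero (the mechanism isolated in Lemma~\ref{bnq}). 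Your convergence plan never mentions types or the quotient $\Gamma/2\Gamma$, so it cannot be completed as stated. There is also a smaller unaddressed point at the end: orthogonality of $f-p$ to the family $(x_\alpha)$ gives $f-p \in \{x_\alpha\}^\perp$, and passing from that to $f-p \in X^\perp$ requires knowing $X = \{x_\alpha\}^{\perp\perp}$, which is not an immediate consequence of maximality.
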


\begin{proof}
$\mc{L}$ is of infinite height since $\E$ is infinite-dimensional. By \cite{Keller-Ochsenius-95}*{Thm.~4}, $\E$ is an orthomodular space. This means that $\E$ is a Hermitian space, so by Theorem~\ref{thm: representation}, $\mc{L}$ is directly irreducible, complete, atomic, and has the covering property, and additionally that $\mc{L}$ is an \ts{oml} \cite{GrossKunzi}*{p.~191}. 
\end{proof}

To establish that all the blocks of $\mc{L}$ are atomic, we require further properties of $\E$. Note first that $\KK$ satisfies the ``triangle inequality'' $\varphi(\< f+g\>)\geq\min\{\varphi(\< f\>),\varphi(\< g\>)\}$ \cite{GrossKunzi}*{p.~195}, so $\E$ is a {\em definite space} in the sense of \cite{GrossKunzi}*{p.~196}. Let  $T:\Gamma\to\Gamma/2\Gamma=\ZZ_2\oplus\ZZ_2\oplus\cdots$ be the natural quotient map, and for $f\in \E$ define its {\em type} to be $\type(f)=\varphi(\langle f\rangle)$. Note that $\type(e_n)=T(\delta_n)$, so the $e_n$ each have different types. We next establish a result that says $\{e_n\}_{n \in \NN}$ satisfies the ``type condition''. This result is surely known, but we prove it since we cannot find the explicit statement. 

\begin{lem}\label{bnq}
If $(x_n)_{n \in \NN}$ is a family in $\KK$ such that $\{\varphi(\< x_ne_n\>):n\in\NN\}$ is bounded below, then $x_ne_n\to 0$ and hence $\sum_\NN x_ne_n$ converges.   
\end{lem}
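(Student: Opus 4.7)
The plan is to reduce everything to the combinatorial behaviour of the reverse lexicographic order on $\Gamma=\bigoplus_{n\in\NN}\ZZ$. Using bilinearity of $\langle\cdot,\cdot\rangle$, the orthogonality $\langle e_n\rangle=t_n$, and the valuation identity $\varphi(ab)=\varphi(a)+\varphi(b)$, I would rewrite
\[
\eta_n \;:=\; \varphi\bigl(\langle x_ne_n\rangle\bigr) \;=\; 2\varphi(x_n)+\delta_n \;\in\;\Gamma,
\]
and fix $\gamma_0\in\Gamma$ with $\eta_n\ge\gamma_0$ for all $n$. The lemma is then equivalent to showing $\eta_n\to\infty$ in $\Gamma$; the ``hence'' clause about convergence of $\sum_n x_ne_n$ will follow at the end because the $e_n$ are orthogonal, so $\langle S_k-S_j\rangle=\sum_{j<n\le k}x_n^2t_n$ and the ultrametric inequality $\varphi(\sum)\ge\min\varphi$ gives $\varphi(\langle S_k-S_j\rangle)\to\infty$, making the partial sums Cauchy, with limit the square-summable vector $(x_n)_{n\in\NN}\in\E$.

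To prove $\eta_n\to\infty$ I would exploit that every element of $\Gamma$ has finite support. Given an arbitrary $\gamma_1\in\Gamma$, choose $N\in\NN$ so large that both $\gamma_0$ and $\gamma_1$ are supported in $\{1,\dots,N\}$; I will show $\eta_n>\gamma_1$ whenever $n>N$. Fix such an $n$ and let $m$ be the largest index in the support of $\eta_n=2\varphi(x_n)+\delta_n$. Because $\delta_n(n)=1$ is odd while $2\varphi(x_n)(n)$ is even, coordinate~$n$ of $\eta_n$ is nonzero, hence $m\ge n>N$. Two cases arise: either the support of $\varphi(x_n)$ is contained in $\{1,\dots,n\}$, in which case $m=n$ and $\eta_n(m)=2\varphi(x_n)(n)+1$; or $\varphi(x_n)$ has a top index $m>n$, in which case $\eta_n(m)=2\varphi(x_n)(m)\ne0$.

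In either case $m>N$, so $\gamma_0(m)=0\ne\eta_n(m)$, and above $m$ both $\eta_n$ and $\gamma_0$ vanish. Hence the largest index where $\eta_n$ and $\gamma_0$ differ is exactly $m$, and the hypothesis $\eta_n\ge\gamma_0$ forces the top coefficient $\eta_n(m)$ to be \emph{strictly positive}. Comparing $\eta_n$ with $\gamma_1$ in the same way, $\gamma_1(m)=0<\eta_n(m)$ and they vanish above $m$, so $\eta_n>\gamma_1$. Since $\gamma_1$ was arbitrary, $\eta_n\to\infty$, which is what $x_ne_n\to 0$ means in the valuation topology, and the series then converges by the Cauchy argument above.

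The substantive step is the case analysis that pins down the sign (not merely the non-vanishing) of the top coefficient of $\eta_n$; the key conceptual ingredients are the parity argument ruling out cancellation at position $n$ and the fact that a lower bound in a reverse-lex-ordered direct sum of $\ZZ$'s controls a sequence globally once one moves past the finite support of the bound. Everything else, including the deduction that $\sum_n x_ne_n$ converges, is a routine consequence of orthogonality of the $e_n$ and the ultrametric nature of $\varphi$.
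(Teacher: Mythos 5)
Your proof is correct and follows essentially the same route as the paper's: the parity of coordinate $n$ of $\varphi(\langle x_ne_n\rangle)=2\varphi(x_n)+\delta_n$ forces the top support index to be at least $n$, the lower bound forces the leading coefficient there to be positive, and the reverse lexicographic order then gives divergence. The paper packages the parity step as the statement $T(\gamma_n)=\type(e_n)=T(\delta_n)$ for the shifted quantity $\gamma_n=\varphi(\langle x_ne_n\rangle)-2\gamma>0$, concluding $\gamma_n>\delta_{n-1}$, but the underlying argument is the one you spelled out coordinate by coordinate.
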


\begin{proof}
Suppose $\varphi(\< x_ne_n\>)>2\gamma$ for $n\in\NN$, and set $\gamma_n=\varphi(\< e_n\>)+2\varphi(x_n)-2\gamma$. Note that $\gamma_n>0$ and $T(\gamma_n)=\type(e_n)=T(\delta_n)$. This implies $\gamma_n>\delta_{n-1}$. So for each $\gamma'\in\Gamma$, all but finitely many $n$ satisfy $\gamma_n > \gamma'$, so $\gamma_n+2\gamma\to\infty$. This gives $\varphi(\langle x_ne_n\rangle)\to\infty$, hence that $x_ne_n\to 0$.
\end{proof}

Therefore $\E$ is a definite space that has a maximal orthogonal set $\{e_n\}_{n \in \NN}$ satisfying Lemma~\ref{bnq} known as the {\em type condition}. The counting types function $v$ for $\{e_n\}_{n\in\NN}$ is the map $v:\Gamma/2\Gamma \to \NN$ where $v(t) = \mbox{card}\{n:\type(e_n)=t\}$ \cite{GrossKunzi}*{p.~199}. Since the $e_n$ have different types, this function takes values in $\{0,1\}$. Let $\Delta=\{\delta_n:n\in\NN\}$ and note that $\{\type(e_n):n\in\NN\}=T[\Delta]$. 

\begin{lem}\label{foul} $ $
\begin{enumerate}
\item Orthogonal vectors in $\E$ have different types, 
\item If $S$ is a maximal orthogonal set in $\E$, then $\type[S]=T[\Delta]$, 
\item If $X\in\mc{L}$ and $P$ and $Q$ are maximal orthogonal subsets of $X$, then $\type[P]=\type[Q]$.  
\end{enumerate}
\end{lem}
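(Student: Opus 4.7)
The plan is to first extract from the structure of $\E$ a \emph{leading index} $n_f \in \NN$ for every nonzero $f \in \E$. Writing $f = \sum_n f(n) e_n$, the summands of $\langle f\rangle = \sum_n f(n)^2 t_n$ have valuations $2\varphi(f(n)) + \delta_n$. Since each $\delta_n$ has a $1$ in a unique position, the $\delta_n$ lie in pairwise distinct cosets of $2\Gamma$; hence these valuations also lie in pairwise distinct cosets and are therefore pairwise distinct in $\Gamma$. The minimum is then attained at a unique index $n_f$, giving $\varphi(\langle f\rangle) = 2\varphi(f(n_f)) + \delta_{n_f}$, and so $\type(f) = T(\delta_{n_f}) \in T[\Delta]$ for every nonzero $f \in \E$.

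For part~(1), if $f, g$ are nonzero, orthogonal, and share a type, then since $T$ is injective on $\Delta$ we must have $n_f = n_g =: k$. In $\langle f, g\rangle = \sum_n f(n) g(n) t_n$ the $k$-th summand has valuation $\varphi(f(k)) + \varphi(g(k)) + \delta_k$. For any $n \neq k$ with $f(n) g(n) \neq 0$, adding the strict inequalities $2\varphi(f(n)) + \delta_n > 2\varphi(f(k)) + \delta_k$ and $2\varphi(g(n)) + \delta_n > 2\varphi(g(k)) + \delta_k$ and cancelling a factor of~$2$ (valid in the torsion-free totally ordered group $\Gamma$) yields $\varphi(f(n)) + \varphi(g(n)) + \delta_n > \varphi(f(k)) + \varphi(g(k)) + \delta_k$. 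So the $k$-th summand is the unique summand of least valuation, giving $\varphi(\langle f, g\rangle) = \varphi(f(k)) + \varphi(g(k)) + \delta_k < \infty$, contradicting $\langle f, g\rangle = 0$.

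For part~(2), I appeal to the classical invariance of the counting types function in the theory of definite spaces with the type condition \cite{GrossKunzi}: $v$ depends only on $\E$, and since the witness $\{e_n\}$ gives $v = \chi_{T[\Delta]}$, every maximal orthogonal set $S \subseteq \E$ has this same counting function, realizing each type in $T[\Delta]$ exactly once. Combined with (1), this gives $\type[S] = T[\Delta]$. For part~(3), fix a maximal orthogonal set $R$ of $X^\perp$; since $\E$ is orthomodular, $\E = X + X^\perp$, and $P \cup R$ is a maximal orthogonal set of $\E$: orthogonality is immediate, and any $v \perp P \cup R$ decomposes as $v = x + y$ with $x \in X$, $y \in X^\perp$, whereupon $x = v - y$ is orthogonal to $P$ (using $y \in X^\perp \subseteq P^\perp$), so $x \in P^\perp \cap X = \{0\}$ by the maximality of $P$ in $X$, and symmetrically $y = 0$. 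By~(2), $\type[P \cup R] = T[\Delta]$; by~(1), $\type[P] \cap \type[R] = \emptyset$ since $P \perp R$; so $\type[P] = T[\Delta] \setminus \type[R]$, and the same reasoning applied to $Q$ yields $\type[Q] = T[\Delta] \setminus \type[R] = \type[P]$. The main obstacle is part~(2): a proof that avoids the cited invariance would require a delicate convergence argument for an infinite Gram--Schmidt-type process against $S$, whereas parts (1) and (3) reduce respectively to a valuation calculation and to an application of the orthomodular decomposition.
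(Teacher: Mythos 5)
Your proof is correct, and parts (2) and (3) essentially coincide with the paper's: for (2) both you and the paper lean on the Gross--K\"unzi invariance of the counting types function (the paper cites it as \cite{GrossKunzi}*{Cor.~26}), and for (3) both arguments take a maximal orthogonal set $R$ of $X^\perp$ and read off $\type[P]=T[\Delta]\setminus\type[R]=\type[Q]$ --- though you usefully make explicit the step the paper leaves tacit, namely that $P\cup R$ is maximal orthogonal in $\E$, which you justify via the orthomodular splitting $\E=X+X^\perp$. The genuine divergence is in part (1): the paper simply cites \cite{GrossKunzi}*{Lem.~25}, whereas you reprove it from scratch by extracting the leading index $n_f$ from the valuations $2\varphi(f(n))+\delta_n$ (pairwise distinct because the $\delta_n$ lie in distinct cosets of $2\Gamma$) and then showing that the $k$-th summand of $\langle f,g\rangle$ has strictly minimal valuation, forcing $\langle f,g\rangle\neq 0$. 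That computation is sound --- the averaging step $2a>2b\Rightarrow a>b$ is valid in any totally ordered abelian group, and the uniquely attained minimum of valuations does give the valuation of the convergent sum --- and it buys a self-contained argument (indeed it is essentially the internal proof of the cited lemma) at the cost of some length; your candid remark that a citation-free proof of (2) would require substantially more work is accurate, and the paper does not attempt one either.
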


\begin{proof}
The first statement is given by \cite{GrossKunzi}*{Lem.~25} and the second by \cite{GrossKunzi}*{Cor.~26}. For the third, let $R$ be a maximal orthogonal set in $X^\perp$. Then the first two statements give that $R$ is a set-theoretic complement of both $P$ and $Q$ in $T[\Delta]$.
\end{proof}

\begin{defn}\label{Pi}
For each $X \in \mc{L}$, let $\pi(X)= \type[P]$, where $P$ is any maximal orthogonal subset $P \subsetof X$. This defines a function $\pi$ from $\mc{L}$ to the powerset $\mc{P}(T[\Delta])$.
\end{defn}

\begin{lem}\label{dd}
The function $\pi:\mc{L}\to \mc{P}(T[\Delta])$ has the following properties:
\begin{enumerate}
\item $\pi(X) = \emptyset$ iff $X = \{0\}$,
\item $\pi(X) \leq \pi(Y)$ whenever $X \subseteq Y$,
\item $\pi(X \vee Y) = \pi(X) \cup \pi(Y)$ whenever $X \perp Y$,
\item $\pi(X^\perp) = T[\Delta] \setminus \pi(X)$,
\item $\pi\big(\bigvee_{\alpha \in \kappa} X_\alpha\big) = \bigcup_{\alpha \in \kappa} \pi(X_\alpha)$ for each increasing sequence $\{ X_\alpha:\alpha\in\kappa\}$, where $\kappa$ is an ordinal.
\end{enumerate}
Therefore, the restriction of $\pi$ to each block $\B \subseteq \mc{L}$ is a complete homomorphism that is injective. 
\end{lem}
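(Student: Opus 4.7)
The overall strategy is to leverage the orthomodular decomposition $\E = X + X^\perp$, valid for each biorthogonally closed $X$, together with Lemma~\ref{foul} to manipulate maximal orthogonal subsets. A recurring preliminary observation will be that if $P$ is a maximal orthogonal subset of a biorthogonally closed subspace $X$, then $P^\perp = X^\perp$: the containment $\supseteq$ is immediate from $P \subseteq X$; for $\subseteq$, given $v \in P^\perp$, decompose $v = v_1 + v_2$ with $v_1 \in X$ and $v_2 \in X^\perp$, and observe that $v_1 \in X$ is then orthogonal to $P$, hence $v_1 = 0$, so $v = v_2 \in X^\perp$.

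Properties (1) and (2) are essentially immediate (for (2), extend a maximal orthogonal subset of $X$ to one of $Y$ and compare types). For (3), from $X \perp Y$ the elements $X, Y, X^\perp$ pairwise commute, so the Foulis--Holland theorem yields $X^\perp \wedge (X \vee Y) = (X^\perp \wedge X) \vee (X^\perp \wedge Y) = Y$. Extending a maximal orthogonal $P \subseteq X$ to a maximal orthogonal $P \cup Q \subseteq X \vee Y$, the preliminary observation places $Q \subseteq X^\perp \wedge (X \vee Y) = Y$, and a short argument (using $Y \subseteq X^\perp = P^\perp$ to see that any $v \in Y$ orthogonal to $Q$ is orthogonal to $P \cup Q$) shows $Q$ is maximal orthogonal in $Y$. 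Property (4) follows by combining a maximal orthogonal subset of $X$ with one of $X^\perp$; the preliminary observation makes this union maximal orthogonal in $\E$, so Lemma~\ref{foul}(2) identifies its image under $\type$ with $T[\Delta]$, while Lemma~\ref{foul}(1) forces the two pieces to have disjoint types.

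The substantive step is (5), which I expect to be the main obstacle. The plan is transfinite recursion: choose nested maximal orthogonal subsets $P_\alpha \subseteq X_\alpha$ with $P_\beta \subseteq P_\alpha$ for $\beta < \alpha$, extending by Zorn at successors and starting from the union at limits. Setting $W = \bigvee_\alpha X_\alpha$ and $P = \bigcup_\alpha P_\alpha$, the preliminary observation gives $P_\alpha^\perp = X_\alpha^\perp$ for each $\alpha$, hence
\[
P^\perp \;=\; \bigcap_\alpha P_\alpha^\perp \;=\; \bigcap_\alpha X_\alpha^\perp \;=\; W^\perp.
\]
Any $v \in W$ with $v \perp P$ therefore lies in $W \cap W^\perp = \{0\}$, so $P$ is maximal orthogonal in $W$, and $\pi(W) = \type[P] = \bigcup_\alpha \type[P_\alpha] = \bigcup_\alpha \pi(X_\alpha)$.

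For the closing statement, fix a block $B$. Orthocomplements are preserved by (4). Within the Boolean algebra $B$ we have $X \vee Y = X \vee (Y \wedge X^\perp)$ with $X \perp (Y \wedge X^\perp)$, so (3) gives $\pi(X \vee Y) = \pi(X) \cup \pi(Y \wedge X^\perp)$; combined with $\pi(Y \wedge X^\perp) \subseteq \pi(Y) \subseteq \pi(X \vee Y)$ from (2), this yields $\pi(X \vee Y) = \pi(X) \cup \pi(Y)$. Arbitrary joins in $B$ are then handled by well-ordering the family and applying (5) to the increasing sequence of partial joins, with a routine transfinite induction; meets follow by De Morgan and (4). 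Finally, for injectivity: if $\pi(X) = \pi(Y)$ for $X, Y \in B$, then $\pi(X \wedge Y^\perp) \subseteq \pi(X)$ by (2), and $\pi(X \wedge Y^\perp) \subseteq \pi(Y^\perp) = T[\Delta] \setminus \pi(Y) = T[\Delta] \setminus \pi(X)$ by (2), (4), and the hypothesis, so $\pi(X \wedge Y^\perp) = \emptyset$. By (1), $X \wedge Y^\perp = 0$, whence $X \leq Y$; symmetry gives $X = Y$.
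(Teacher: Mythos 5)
Your proof is correct and follows essentially the same route as the paper's: maximal orthogonal subsets, extension arguments for (1)--(4), and transfinite recursion with a union of nested maximal orthogonal sets for (5). Your preliminary observation that $P^\perp = X^\perp$ (via the orthomodular decomposition $\E = X + X^\perp$) is exactly the fact the paper's proof of (5) invokes implicitly when it asserts that a vector orthogonal to $P$ is ``immediately'' orthogonal to each $X_\alpha$, so making it explicit is a welcome addition rather than a departure.
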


\begin{proof}
To prove (1), we observe that $\{0\}$ is the unique element of $\mc{L}$ for which the empty set is a maximal orthogonal subset. To prove (2), we observe that any maximal orthogonal subset of $X$ can be extended to a maximal orthogonal subset of $Y$. To prove (3), we observe if $P$ and $Q$ are maximal orthogonal subsets of $X$ and $Y$, respectively, then $P \cup Q$ is a maximal orthogonal subset of $X \vee Y$. To prove (4), we observe that any maximal orthogonal subset $P \subseteq X$ can be extended to a maximal orthogonal subset $R \subseteq \E$ and that the restricted function $\type|_R\: R \to T[\Delta]$ is a bijection by Lemma~\ref{foul}. We have proved (1)--(4).

To prove (5), let $\{ X_\alpha:\alpha\in\kappa\}$ be an increasing well-ordered sequence. Using transfinite recursion, we construct an increasing sequence $\{P_\alpha:\alpha\in\kappa\}$ of sets such that $P_\alpha$ is a maximal orthogonal subset of $X_\alpha$ for each $\alpha \in \kappa$. 
Then $ P = \bigcup\{P_\alpha:\alpha\in\kappa\}$ is a maximal orthogonal subset of the subspace $X = \bigvee\{X_\alpha:\alpha\in\kappa\}$ because any vector $f$ that is orthogonal to $P$ is immediately orthogonal to $X_\alpha$ for each $\alpha \in \kappa$ and hence to $X$ itself. Finally, we compute that $\pi(X) = \type[P] = \type[\bigcup_{\alpha \in \kappa} P_\alpha] = \bigcup_{\alpha \in \kappa} \type[P_\alpha] = \bigcup_{\alpha \in \kappa} \pi(X_\alpha)$, establishing (5).

Let $\B$ be a block of $\mc{L}$. The restriction of $\pi$ to the block $\B$ is an injective homomorphism $\B \to \mc{P}(T[\Delta])$ because $\pi$ has properties (1)--(4). This restriction is a complete homomorphism because $\pi$ has property (5).
\end{proof}

\begin{thm}
$\mc{L}$ is an infinite height, directly irreducible, completely hereditarily atomic \ts{oml} with the covering property. 
\end{thm}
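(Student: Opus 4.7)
The plan is to observe that Theorem~\ref{much} already establishes every property of $\mc{L}$ in the target statement \emph{except} completely hereditarily atomic, so the entire task reduces to verifying this last condition. By Proposition~\ref{equiv}, it suffices to show that every block $\B$ of $\mc{L}$ is atomic.

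The key tool I would invoke is the type-counting function $\pi \colon \mc{L} \to \mc{P}(T[\Delta])$ from Definition~\ref{Pi}. The concluding assertion of Lemma~\ref{dd} says that the restriction $\pi|_\B$ is an injective complete homomorphism into the powerset Boolean algebra $\mc{P}(T[\Delta])$. Because any injective complete homomorphism is an isomorphism onto its image, and the image of a complete homomorphism is a complete subalgebra of the codomain, $\B$ is isomorphic to a complete subalgebra of $\mc{P}(T[\Delta])$.

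Now I would conclude by a short chain of closure arguments: the Boolean algebra $\mc{P}(T[\Delta])$ is a powerset, hence completely hereditarily atomic by Proposition~\ref{equiv1}. By Proposition~\ref{sre}(3), completely hereditarily atomic is inherited by complete subalgebras, so the image $\pi(\B)$ is in $\sf{CHA}$, and in particular it is atomic as a Boolean algebra. Transporting back along the isomorphism, $\B$ itself is atomic. This holds for every block, so by Proposition~\ref{equiv}, $\mc{L}$ is completely hereditarily atomic.

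The only real obstacle has already been dispatched inside Lemma~\ref{dd}: once the type-counting function $\pi$ is known to send each block faithfully and completely into a powerset, the desired conclusion is a routine deduction from the equivalences in Propositions~\ref{equiv1} and \ref{equiv} combined with the closure result in Proposition~\ref{sre}. No further computation on $\E$, $\KK$, or $\Gamma$ is needed at this stage.
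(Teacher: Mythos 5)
Your proposal is correct and follows essentially the same route as the paper: reduce to Theorem~\ref{much} plus atomicity of blocks, then use the injective complete homomorphism $\pi|_\B$ from Lemma~\ref{dd} to realize each block as a complete subalgebra of the powerset $\mc{P}(T[\Delta])$ and conclude atomicity from the equivalences for complete Boolean algebras. The only (immaterial) difference is the final citation: the paper passes through complete distributivity and Tarski's theorem, while you invoke closure of $\sf{CHA}$ under complete subalgebras; both are instances of Proposition~\ref{equiv1}.
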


\begin{proof}
In view of Theorem~\ref{much}, it remains only to show that each block $\mc{B}$ of $\mc{L}$ is atomic. By Lemma~\ref{dd} $\pi|_\B:\mc{B}\to\mc{P}(T[\Delta])$ is an injective complete homomorphism, hence, its image $\pi[\B]$ is isomorphic to $\B$ and is a complete subalgebra of $\mc{P}(T[\Delta])$. Since $\mc{P}(T[\Delta])$ is completely distributive, so is the complete subalgebra $\pi[\mc{B}]$. By Tarski's theorem, this implies that $\pi[\mc{B}]$ is atomic and therefore that $\mc{B}$ is atomic. 
\end{proof}

\begin{rem}
Each of the blocks of $\mc{L}$ has countably many atoms. The only known non-separable orthomodular spaces are Hilbert spaces \cite{GrossKunzi}*{Problem~2}.
\end{rem}

\section*{Acknowledgement}

The second author thanks Wolfgang Rump for suggesting Keller's Hermitian space as a potential example.

\vspace{5ex}

\noindent 
John Harding\\
New Mexico State University, Las Cruces NM 88003\\
\text{jharding@nmsu.edu}
\vspace{2ex}

\noindent
Andre Kornell\\
Dalhousie University, Halifax NS B3H 4R2\\
\text{akornell@dal.ca}

\end{document}